\newtheorem{thm}{Theorem}%[section]
\newtheorem{lem}{Lemma}
\newtheorem{definition}{Definition}
\definecolor{darkgreen}{rgb}{0,0.5,0}
\definecolor{purple}{rgb}{1,0,1}
\newcommand{\draftnote}[2]{\ifnum\DraftStatus=1
	\marginpar{
		\tiny\raggedright
		\hbadness=10000
        \def\baselinestretch{0.8}
        \textcolor{#1}{\textsf{\hspace{0pt}#2}}}
     \fi}
\begin{document}

\title{Differential Advising in Multi-Agent Reinforcement Learning}

\author{Dayong Ye, Tianqing Zhu*, Zishuo Cheng, Wanlei Zhou and Philip S. Yu
\thanks{*Tianqing Zhu is the corresponding author. T. Zhu is with the School of Computer Science, China University of Geosciences, Wuhan, P. R. China. D. Ye, Z. Cheng and W. Zhou are with the Centre for Cyber Security and Privacy and with the School of Computer Science, University of Technology, Sydney, Australia. Philip S. Yu is with Department of Computer Science, University of Illinois at Chicago, USA. Email: tianqing.e.zhu@gmail.com, \{Dayong.Ye, 13367166, Wanlei.Zhou\}@uts.edu.au, psyu@uic.edu. 
This work is supported by National Natural Science Foundation of China (No.61972366), ARC LP170100123 and DP190100981, Australia, 
and NSF under grants III-1763325, III-1909323, and SaTC-1930941, USA.}}

\maketitle
\pagestyle{empty}
\thispagestyle{empty}

\begin{abstract}
Agent advising is one of the main approaches to improve agent learning performance
by enabling agents to share advice.
Existing advising methods have a common limitation
that an adviser agent can offer advice to an advisee agent
only if the advice is created in the same state as the advisee's concerned state.
However, in complex environments, it is a very strong requirement that two states are the same,
because a state may consist of multiple dimensions and
two states being the same means that all these dimensions in the two states are correspondingly identical.
Therefore, this requirement may limit the applicability of existing advising methods to complex environments.
In this paper, inspired by the differential privacy scheme, we propose a differential advising method
which relaxes this requirement by enabling agents to use advice in a state
even if the advice is created in a slightly different state.
Compared with existing methods, agents using the proposed method have more opportunity to take advice from others.
This paper is the first to adopt the concept of differential privacy on advising to improve agent learning performance instead of addressing security issues.
The experimental results demonstrate that the proposed method is more efficient in complex environments than existing methods.
\end{abstract}

\emph{\textbf{Keywords - Multi-Agent Reinforcement Learning, Agent Advising, Differential Privacy}}

\section{Introduction}\label{sec:introduction}
Multi-agent reinforcement learning (MARL) is one of the fundamental research topics in artificial intelligence \cite{Silva19c}. %,Zhang19}.
%where multiple agents learn appropriate behaviors in an interactive environment \cite{Sutton98,Bus08}.
In regular MARL methods, agents typically need a large number of interactions with the environment and other agents to learn proper behaviors.
To improve agent learning speed, the agent advising technique is introduced \cite{Silva17,Silva18,Sliva19b},
which enables agents to ask for advice between each other.

Existing advising methods share a common limitation that
an adviser agent can offer advice to an advisee agent
only if the advice is created in the same state as the advisee agent's concerned state \cite{Silva19}.
However, in complex environments, it is usually a strong requirement that two states are the same,
because a state may be composed of multiple dimensions and
two states being the same implies that all these dimensions in the two states are correspondingly identical.
This requirement may hinder the application of existing methods to complex environments.
For example, in a multi-robot search and rescue problem (Fig. \ref{fig:example}),
the aim of each robot is to search and rescue victims as quickly as possible. 
Each robot is a learning agent which can observe only its surrounding area. 
To improve robot learning performance, robots can ask for advice between each other.
%This observation is interpreted as a state of a robot which consists of eight dimensions. 
In this problem, an observation of a robot is interpreted as a state of the robot. 
An observation consists of eight dimensions, 
where each dimension stands for a small cell around the robot. 
The format of an observation could be: $s=\langle 0,0,1,0,0,0,1,0\rangle$, 
where $0$ means an empty cell and $1$ means an obstacle in a cell. 
A slightly different observation could be: $s'=\langle 0,0,1,0,1,0,1,0\rangle$, 
as the two states, $s$ and $s'$, have only the fifth dimension in difference 
(referring to Definition \ref{def:difference} for more detail).
In existing advising methods, when a robot $i$ in state $s$ asks for advice from another robot $j$, 
robot $j$ can offer advice to robot $i$ 
only if robot $j$ has visited state $s$ a number of times.
\begin{figure}[ht]
\centering
	\includegraphics[scale=0.27]{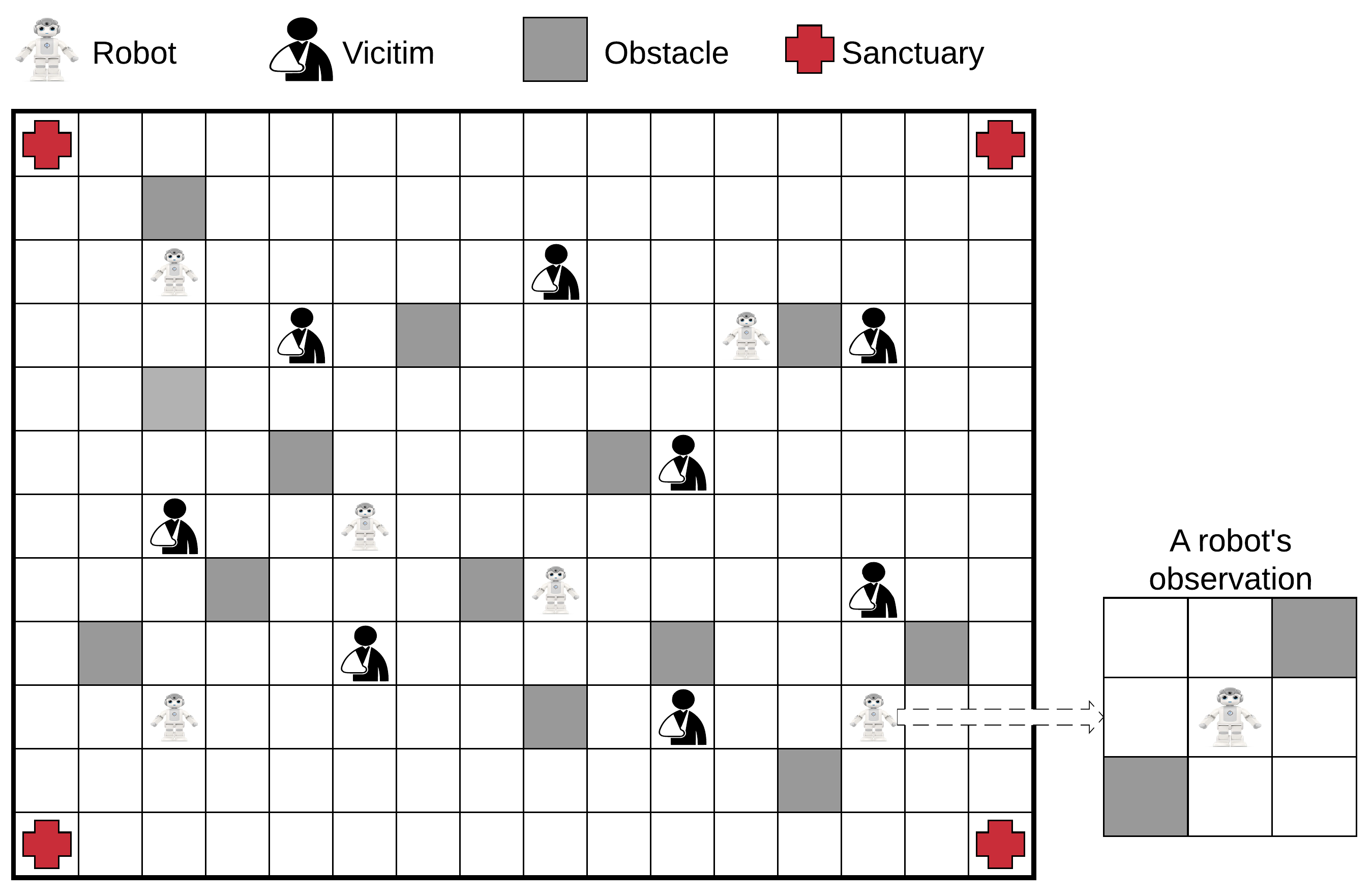}
	\caption{A multi-robot search and rescue example}
	\label{fig:example}
\end{figure}

In this paper, we relax this requirement and develop a differential advising method
which enables agents to use advice in a state
even if the advice is created in a slightly different state. 
Thus, in the above example, by using our method, even if robot $j$ has never visited state $s$, 
robot $j$ can still offer advice to robot $i$ as long as 
robot $j$ has visited a neighboring state of state $s$, e.g., state $s'$.
However, developing this differential advising method is a difficult task due to the following two challenges.
First, how to measure and delimit the difference between two states is a challenge,
because the amount of the difference between two states may significantly affect the quality of advice.
Second, how to use advice in a concerned state, which is created in another state, is also a challenge,
because improper advice may harm the learning performance of agents \cite{Ye19}.
To address these two challenges, the differential privacy technique is applied.

Differential privacy is a privacy model which guarantees that
a query to two slightly different datasets yields almost the same results \cite{Dwork06,Zhu17}.
This means that the query result yielded from one dataset can be considered approximately identical to the query result yielded from the other dataset.
This property of differential privacy can be taken into our method.
Specifically, two slightly different states are similar to two slightly different datasets.
Advices generated from states are similar to results yielded from datasets.
Since two results from two slightly different datasets can be considered approximately identical,
two advices generated from two slightly different states can also be considered approximately identical.
This property guarantees that the advice created in a state can still be used in another slightly different state.

In summary, this paper has two contributions.
First, we are the first to develop a differential advising method
which allows agents to use advices created in different states.
This method enables agents to receive more advices than existing methods.
%such that agent learning performance can be further improved.
Second, we are the first to take advantage of the differential privacy technique for agent advising 
to improve agent learning performance instead of addressing security issues.
%Differential privacy offers the theoretical guarantee of the usability of advices in slightly different states.

%In the rest of the paper, Section \ref{sec:related work} reviews the related work. % in agent advising.
%Section \ref{sec:preliminaries} describes the preliminaries of our method,
%whose detail is presented in Section \ref{sec:method}.
%Theoretical analysis and experimental results are given in Sections \ref{sec:analysis} and \ref{sec:experiment}, respectively.
%Finally, the paper is concluded in Section \ref{sec:conclusion}.

%Section \ref{sec:motivation example} gives a motivation example regarding multi-robot coordination.

\section{Related Work}\label{sec:related work}
%Agent advising is a sub-area of multi-agent transfer learning.
%Unlike regular transfer learning \cite{Taylor09,Pan10} which studies knowledge transfer between domains or tasks,
%multi-agent transfer learning mainly studies how to reuse knowledge among agents and humans
%such as advice from other leaning agents or demonstrations from humans.
%In this paper, our review focuses only on agent advising.
%A thorough review regarding multi-agent transfer learning can be found in Silva and Costa's great survey paper \cite{Silva19}.

%The research in knowledge transfer for reinforcement learning has been extensively studied \cite{Zhan15,Tiri18,Silva18}.
%However, none of the existing methods jointly take the aforementioned two challenging issues into account.
%Current research can be roughly classified into three major categories:
%1) transfer learning frameworks,
%2) policy/sample transfer, and
%3) advice transfer.

%\subsection{Agent advising methods}\label{sub:advising}
%As early as 1996, Clouse \cite{Clouse96} develops a student-teacher framework which includes two agents: a student and a teacher.
%The framework is student-initiated, where the student determines when to ask for the teacher's advice.
%The student updates its policy based on both the reward received from the environment and the teacher's advice.
%The teacher's advice is used to guide the student's exploration.
%Using Clouse's framework, average number of training trials to learn to perform a task can be significantly reduced.

Torrey and Taylor \cite{Torrey13} introduced a teacher-student framework,
where a teacher agent provides advice to a student agent to accelerate the student's learning.
%Unlike Clouse's framework \cite{Clouse96} which is student-initiated,
Their framework is teacher-initiated,
where the teacher determines when to give advice to the student.
They developed a set of heuristic approaches for the teacher agent to decide when to provide advice.
%The framework has been successfully applied to complex video games \cite{Taylor14}.
%By extending Torrey and Taylor's framework, Zimmer et al. \cite{Zimmer14} create a sequential decision-theoretic problem in the framework
%to learn when the teacher should provide advice.
%The teacher, in their framework, has to continuously observe the student's states.
By extending Torrey and Taylor's framework \cite{Torrey13}, Amir et al. \cite{Amir16} proposed an interactive student-teacher framework,
where the student agent and the teacher agent jointly determine when to ask for/provide advice.
%Specifically, the interactive framework combines the properties of both student-initiated framework \cite{Clouse96} and teacher-initiated framework \cite{Torrey13}.
They also developed a set of heuristic methods for both the student agent and the teacher agent to decide when to ask for/provide advice.

Later, Silva et al. \cite{Silva17} extended Amir et al.'s \cite{Amir16} work by taking simultaneity into consideration,
where multiple agents simultaneously learn in an environment
and an agent can be both a teacher and a student at the same time.
The decisions regarding advice request and provision are based on a set of heuristic methods.

Wang et al. \cite{Wang18b} proposed a teacher-student framework to accelerate the lexicon convergence speed among agents.
Their framework is similar to Silva et al.'s framework \cite{Silva17}.
The major difference is that, in Wang et al.'s framework,
instead of broadcasting a request to all the neighboring agents,
a student agent uses a multicast manner to probabilistically ask each neighboring agent for advice.
The asking probability is based on the distance between the student and the teacher.
%After receiving multiple pieces of advice, the student agent selects advice based on degree of the teacher agents,
%where a teacher agent with a higher degree is assumed to have more expertise
%and have a higher probability to be asked.

Ye et al. \cite{Ye19} also proposed an agent advising approach in a simultaneous learning environment.
Unlike other advising methods which consider only benign and cooperative agents,
Ye et al. introduced malicious agents into the environment,
which provide false advice to hurt the learning performance of other agents.
Ye et al. then used the differential privacy technique to preserve the learning performance of benign agents
and reduce the impact of malicious agents. 

Omidshafiei et al. \cite{Omid19} presented an advising framework. 
Unlike existing methods which use only heuristic methods to decide when to advise, 
their framework enables agents to learn when to teach/request and what to teach. 
Specifically, an advisee agent has a student policy to decide when to request advice using advising-level observation. 
The advising-level observation is based on the advisee agentos task-level observation and action-value vectors. 
Similarly, an adviser agent has a teacher policy to decide when and what to advise using advising-level observation. 
This advising-level observation is based on 
1) the adviser agentos own task-level knowledge in the advisee agentos state, 
2) the adviser agentos measure of the advisee agentos task-level state and knowledge.

Zhu et al. \cite{Zhu19} developed a $Q$-value sharing framework named PSAF (partaker-sharer advising framework). 
Unlike other advising methods where agents transfer recommended actions as advice, 
in PSAF, agents transfer $Q$-values as advice. 
Using $Q$-values as advice is more flexible than using actions as advice. 
This is because the policies of agents may be continuously changing 
and thus an advisee agent's learning performance may be impaired by following an adviser's recommended action. 
By contrast, if the advice is a $Q$-value, the advisee can update its policy by using this $Q$-value 
and then uses its own action selection method to pick up an action.

Ilhan et al. \cite{Ilhan19} developed an action advising method in multi-agent deep reinforcement learning (MADRL). 
Unlike regular MARL, in MADRL, the number of states may be very large or even infinite. 
Thus, it is infeasible to use the number of visits to a state to measure the confidence in that state as commonly used in existing methods. 
They, hence, adopt the random network distillation to measure the confidence in a state 
by measuring the mean squared error between two neural networks: the target network and the predictor network.

%\subsection{Other studies}\label{sub:other}
Unlike the above research which focuses on the advising process, 
some works have different focuses.
Fachantidis et al. \cite{Fachantidis18} studied the critical factors that affect advice quality.
%Instead of using heuristic approaches, Fachantidis et al. develop a reinforcement learning algorithm to learn when to ask for/provide advice.
Gupta et al. \cite{Gupta19} suggested that teacher agents should not only advise the action to take in a given state
but also provide more informative advice using the synthesis of knowledge they have gained.
They proposed an advising framework where a teacher augments a student's knowledge
by providing not only the advised action but also the expected long-term reward of following that action.
Vanhee et al. \cite{Vanhee19} introduced Advice-MDPs
which extend Markov decision processes for generating policies that take into account
advising on the desirability, undesirability and prohibition of certain states and actions.
%By using Advice-MDPs, operators can augment planning models with advice to fit the irregularities of the environment
%with prescriptive a priori advising.

These above-mentioned works have a common limitation that
an advice can be offered only if the advice is created in the same state as the given state.
This limitation may deter the application of their works to complex environments. 
State similarity has also been researched by Castro \cite{Castro20} who defined a bisimulation metric. 
However, Castroos research focuses on computing state similarity in deterministic MDP rather than agent advising. 
Also, the definition of bisimulation provided in \cite{Castro20} may not be applicable to our research, 
because that definition requires an agent to have the full knowledge of reward functions, 
state space, and state transition functions. 
In our research, the environments are partially observable and thus, 
the full state space is unknown to any agent. 
Moreover, in our research, as multiple agents coexist in an environment, 
the state transition of an agent highly depends on the actions of other agents. 
Hence, state transition functions are hard to be pre-defined. 
%The three works in Sub-section \ref{sub:other} have different research aims from our work.
%Fachantidis et al. \cite{Fachantidis18} aim at improving advice quality.
%Gupta et al. \cite{Gupta19} aim at increasing the amount of information included in an advice.
%Vanhee et al. \cite{Vanhee19} aim at augmenting the performance of planning models.
%In comparison, our work mainly study the agent advising process to improve agent learning performance.
In this paper, we propose a differential advising method which
overcomes the common limitation of existing methods
by taking advantage of differential privacy.

\section{Preliminaries}\label{sec:preliminaries}
\subsection{Multi-agent reinforcement learning}
%Reinforcement learning is a common solution for sequential decision-making problems.
Reinforcement learning is usually used to solve sequential decision-making problems.
A sequential decision-making process can be formally modeled as a Markov decision process (MDP) \cite{Watkins92}.
An MDP is typically a tuple $\langle S,A,T,R\rangle$,
where $S$ is the set of states, $A$ is a set of actions available to the agent, $T$ is the transition function, and $R$ is the reward function.

At each step, an agent observes the state $s\in S$ of the environment, and selects an action $a\in A$ based on its policy $\pi$, 
which is a probability distribution over available actions.
After performing the action, the agent receives a real-valued reward $r$ and gets into a new state $s'\in S$.
The agent then updates its policy $\pi$ based on the reward $r$ and the new state $s'$.
Through this way, the agent can gradually accumulate knowledge and improve its policy to maximize its accumulated long-term expected reward.

%The policy $\pi$ can be represented as a Q-function $Q(s,a)$ which estimates the reward that the agent will earn in state $s$ by performing action $a$.
%Suppose that there are $t$ sequential time steps,
%Q-function can be given in an iterative manner as $Q(s_t,a_t)\leftarrow (1-\alpha)\cdot Q(s_t,a_t)+\alpha[r_t+\gamma\cdot max_a Q(s_{t+1},a)]$,
%where $\alpha$ is the learning rate,
%$\gamma$ is the discount factor, $s_t$ and $s_{t+1}$ are the current and next state, respectively, $a_t$ is the action taken in state $s_t$, $r_t$ is the immediate reward obtained by the agent in state $s_t$ after taking action $a_t$, and $max_a Q(s_{t+1},a)$ is a function which outputs the maximum Q-value in state $s_{t+1}$.

An MDP can be extended to model a multi-agent reinforcement learning process as a tuple $\langle S,A_{1,...,n},T,R_{1,...,n}\rangle$, where $S$ is the cartesian product of the sets of local states from each agent, $S=S_1\times S_2\cdots\times S_n$, $A_i$ is the available action set of agent $i$, $T$ is the transition function, and $R_i$ is the reward function of agent $i$. %\cite{Bus08}.
In multi-agent reinforcement learning, state transitions are based on the joint actions of all the agents.
Each agent's individual reward is also based on the joint actions of all the agents.
%Moreover, each agent's policy $\pi_i$ and Q-function $Q_i$ depend on all the agents' joint policy $\pi$ and joint Q-function $Q$, respectively.

\subsection{Differential privacy}\label{sub:DP}
Differential privacy is a prevalent privacy model 
and has been broadly applied to various applications \cite{Zhu20,Ye20,Ye21}. %Shen20 
Differential privacy can guarantee an individual's privacy
independent of whether the individual is in or out of a dataset \cite{Dwork06}.
%This unique characteristic can be applied to the knowledge transfer problem and the malicious player problem.
%We consider a finite \emph{data universe} $\mathcal{X}$ with the size $|\mathcal{X}|$. Let $r$ be a \emph{record} with $d$ attributes sampled from the universe $\mathcal{X}$, while a \emph{dataset} $D$ is an unordered set of $n$ records from domain $\mathcal{X}$.
Two datasets $D$ and $D'$ are neighboring datasets if they differ in at most one record.
Let $f$ be a query that maps dataset $D$ to a $k$-dimension vector in range $\mathbb{R}^k$: $f: D\rightarrow\mathbb{R}^k$.
%A group of queries is denoted as $F=\{f_{1},...,f_{m}\}$, and $F(D)$ denotes $\{f_{1}(D),...,f_{m}(D)\}$.
%The number of queries in $F$ is denoted as $m$.
The maximal difference on the results of query $f$ is defined as sensitivity of query $\Delta f$,
which determines how much perturbation is required for the privacy-preserving answer.
The formal definition of sensitivity is given as follows.

\begin{definition}[Sensitivity]\label{Def-GS} For a query $f:D\rightarrow\mathbb{R}^k$, the sensitivity of $f$ is defined as
\begin{equation}
\Delta f=\max\limits_{||D-D'||_1\leq 1} ||f(D)-f(D')||_1,
\end{equation}
where $||D-D'||_1\leq 1$ means that two datasets $D$ and $D'$ have at most one record different,
and $||f(D)-f(D')||_1=\sum_{1\leq i\leq k}|f(D)_i-f(D')_i|$.
\end{definition}

%The randomised output is denoted by a circumflex over the notation. For example,
%$\widehat{f}(D)$ denotes the randomized answer of querying $f$ on $D$.
The aim of differential privacy is to mask the difference in the answer of query $f$ between two neighboring datasets.
To achieve this aim, differential privacy provides a randomized mechanism $\mathcal{M}$ to access a dataset.
In $\epsilon$-differential privacy, parameter $\epsilon$ is defined as the privacy budget,
which controls the privacy guarantee level of mechanism $\mathcal{M}$.
A smaller $\epsilon$ represents a stronger privacy.
The formal definition of differential privacy is presented as follows.

\begin{definition}[$\epsilon$-Differential Privacy]\label{Def-DP}
A randomized mechanism $\mathcal{M}$ gives $\epsilon$-differential privacy
if for any pair of neighboring datasets $D$ and $D'$, and for every set of outcomes $\Omega$, $\mathcal{M}$ satisfies:
\begin{equation}
Pr[\mathcal{M}(D) \in \Omega] \leq \exp(\epsilon) \cdot Pr[\mathcal{M}(D') \in \Omega]
\end{equation}
\end{definition}

One of the prevalent differential privacy mechanisms is the Laplace mechanism.
The Laplace mechanism adds Laplace noise to query results.
We use $Lap(b)$ to represent the noise sampled from the Laplace distribution with scaling $b$.
The mechanism is described as follows.

\begin{definition}[Laplace mechanism]\label{Def-LA}
Given any function $f: D \rightarrow \mathbb{R}^k$, the Laplace mechanism $\mathcal{M}_L$ is defined as
\begin{equation}\label{eq:lap}
\mathcal{M}_L(D,f,\epsilon)=f(D)+(y_1,...,y_k),
\end{equation}
where $y_1,...,y_k$ are the random noise drawn from $Lap(\frac{\Delta f}{\epsilon})$.
\end{definition}

%The exponential mechanism~\cite{McSherry200794}
%focuses on non-numeric queries,
%pairing with a score function $q(D,\psi)$
%that represents how good an output $\psi$ is for dataset $D$.
%The choice of score function is application dependent and
%different applications lead to various score functions.
%The Exponential mechanism is formally defined as follows:
%
%\begin{definition}[Exponential mechanism]\label{Def-EX}
%Let $q(D,\phi)$ be a score function of dataset $D$
%that measures the quality of output $\phi\in\Phi$.
%Then an exponential mechanism $\mathcal{M}$ is $\epsilon$-differential privacy, if
%\begin{equation}
%\mathcal{M}(D) = \{\text{return $\phi$ with the probability}
%\propto \exp (\frac{\epsilon q(D,\phi)}{2\Delta q})\}
%\end{equation}
%where $\Delta q$ represents the \emph{sensitivity} of score function $q$.
%\end{definition}

\subsection{Differential advising}\label{sub:DA}
Based on the definitions of differential privacy,
we provide the definitions of differential advising. 
The aim of providing the definitions of differential advising is to 
link the properties of differential privacy to the properties of differential advising. 
This linkage can guarantee that the differential privacy mechanisms can be used as differential advising mechanisms, 
as later shown in Lemma \ref{lem:dp} and Theorem \ref{thm:DP} in Section \ref{sec:analysis}. 

The reason of applying differential privacy to advising is to ensure that 
the knowledge learned in a state can still be used in a different but similar state, i.e., a neighboring state (Definition \ref{def:difference}). 
%Two states are neighboring if they differ in only one dimension (Definition \ref{def:difference}). 
In previous work, knowledge can only be re-used in identical states. With the introduction of the `neighboring state', a piece of knowledge has more chance to be applied by agents compared with previous work. This idea is reasonable as we human can easily borrow knowledge from a similar environment.

To achieve this goal, we can consider acquiring knowledge, i.e., asking for advice, as querying to a dataset, 
and consider the $Q$-values of state/action pairs constituting a dataset, i.e., a $Q$-table. 
Then, the differential advising process can be simulated as a differentially private query process. 
To implement differential advising, we apply the Laplace mechanism to mask the difference 
between the knowledge learned in two neighboring states. 
Specifically, a $Q$-table, shown as below, is a two-dimensional matrix 
which stores the knowledge learned by an agent. 
Differential advising is operating on $Q$-tables.

\begin{table}[!ht]\scriptsize\nonumber
\newcommand{\tabincell}[2]{\begin{tabular}{@{}#1@{}}#2\end{tabular}}
	\centering
	%\caption{The meaning of each notation}
\begin{tabular}{|c|c|c|c|c|} \hline
States/Actions & $a_1$ & $a_2$ & $a_3$ & $a_4$\\\hline
$\boldsymbol{s_1}$ & $Q(\boldsymbol{s_1},a_1)$ & $Q(\boldsymbol{s_1},a_2)$ & $Q(\boldsymbol{s_1},a_3)$ & $Q(\boldsymbol{s_1},a_4)$\\\hline
$\boldsymbol{s_2}$ & $Q(\boldsymbol{s_2},a_1)$ & $Q(\boldsymbol{s_2},a_2)$ & $Q(\boldsymbol{s_2},a_3)$ & $Q(\boldsymbol{s_2},a_4)$\\\hline
$\boldsymbol{s_3}$ & $Q(\boldsymbol{s_3},a_1)$ & $Q(\boldsymbol{s_3},a_2)$ & $Q(\boldsymbol{s_3},a_3)$ & $Q(\boldsymbol{s_3},a_4)$\\\hline
\end{tabular}
\vspace{-2mm}
	\label{tab:Q-table}
\end{table}

Without loss of generality, we assume that a state $\boldsymbol{s}$ consists of $m$ dimensions: $\boldsymbol{s}=(s_1,...,s_m)$.
Each dimension $s_i$ could be either an integer or a real number representing discrete or continuous states.
The difference between two states $\boldsymbol{s}$ and $\boldsymbol{s}'$ is defined as:
\begin{definition}[Difference between states]\label{def:difference}
\begin{equation}\label{eq:difference}
||\boldsymbol{s}-\boldsymbol{s}'||_1=\sum_{1\leq i\leq m}|s_i-s'_i|.
\end{equation}
\end{definition}
Two states $\boldsymbol{s}$ and $\boldsymbol{s}'$ are neighboring, i.e., slightly different,
if their difference is at most $1$, i.e., $||\boldsymbol{s}-\boldsymbol{s}'||_1\leq 1$. 
The difference threshold is set to $1$, 
because this is the prerequisite of using differential privacy. 
As described in Definition \ref{Def-DP}, differential privacy works 
when two datasets have at most one record in difference. 
In fact, Definition \ref{def:difference} is slightly different from the prerequisite of differential privacy. 
This is because data records in a dataset are discrete, 
and thus $||D-D'||_1\leq1$ means that one record is different in the two datasets. 
In Definition \ref{def:difference}, if states are discrete and each dimension is represented as an integer, 
$||\boldsymbol{s}-\boldsymbol{s}'||_1\leq 1$ also means that 
one dimension is different in the two states, 
such as the multi-robot example in Section \ref{sec:introduction}. 
If states are continuous, $||\boldsymbol{s}-\boldsymbol{s}'||_1\leq 1$ could mean that 
the difference exists in multiple dimensions and the sum of the difference is less than $1$.
We leave the further research of continuous states as one of our future works.

Similarly, the advice sensitivity is defined as follows.
\begin{definition}[Advice sensitivity]\label{def:as}
For an advice generation function $Q:S\rightarrow\mathbb{R}^k$,
where $S$ is the state set and $k$ is the number of actions,
the sensitivity of $Q$ is defined as
\begin{equation}
\Delta Q=\max\limits_{\boldsymbol{s},\boldsymbol{s}'\in S\wedge ||\boldsymbol{s}-\boldsymbol{s}'||_1\leq 1} ||Q(\boldsymbol{s})-Q(\boldsymbol{s}')||_{1}.
\end{equation}
\end{definition}
As an advice is a $Q$-vector, for simplicity, 
we use the letter ``$Q$'' to represent both the advice generation function and the $Q$-function. 
Details will be given in the next section.

\begin{definition}[$\epsilon$-differential advising]\label{def:da}
An advising method $\mathcal{M}_A$ is $\epsilon$-differential advising,
if for any pair of neighboring states $\boldsymbol{s}$ and $\boldsymbol{s'}$,
and for every set of advice $Ad$, $\mathcal{M}_A$ satisfies:
\begin{equation}
Pr[\mathcal{M_A}(\boldsymbol{s}) \in Ad] \leq \exp(\epsilon) \cdot Pr[\mathcal{M_A}(\boldsymbol{s'}) \in Ad]
\end{equation}
\end{definition}

The essence of differential privacy is to guarantee that 
the query results of two neighboring datasets have a very high probability to be the same. 
Similarly, in differential advising, we set that 
the query results of two neighboring $Q$-tables have a very high probability to be the same, 
so that an agent querying to two neighboring $Q$-tables can receive almost the same knowledge. 
Here, two $Q$-tables are neighboring if 1) they have one record in difference and 
2) the two different records correspond to two neighboring states. 
The query in differential advising is known as differentially private selection \cite{Dwork14} 
which produces the best answer from a space of outcomes. 
To implement differential advising, we follow the spirit of the Laplace mechanism 
by adding Laplace noise on query results to mask the difference between two neighboring $Q$-tables. 
The detail will be given in the next section.

Table \ref{tab:notation} describes the notations and terms used in this paper.
\begin{table}[!ht]\scriptsize
\newcommand{\tabincell}[2]{\begin{tabular}{@{}#1@{}}#2\end{tabular}}
	\centering
	\vspace{-5mm}
	\caption{The meaning of each notation}
\begin{tabular}{|c|c|} \hline
\textbf{notations}&\textbf{meaning}\\ \hline
$S$ & a set of states of the environment\\\hline
$A$ & a set of actions available to an agent\\\hline
%$T$ & a transition function\\\hline
%$R$ & a reward function\\\hline
$r$ & \tabincell{c}{an immediate reward obtained by \\an agent for taking an action under an observation}\\\hline
$Q(s,a)$ & \tabincell{c}{a reinforcement value for an agent\\ to take action $a$ in state $s$}\\ \hline
$\boldsymbol{\pi}$ & \tabincell{c}{a probability distribution over\\ the available actions of an agent}\\\hline
%$\pi(j)$ & \tabincell{c}{the probability of selecting agent $j$\\ as an adviser}\\\hline
$\alpha,\zeta$ & learning rates, both of which are in $(0,1)$\\ \hline
$\gamma$ & a discount factor which is in $(0,1)$\\ \hline
$\Delta f$ & the sensitivity of a query \\\hline
$\Delta Q$ & the sensitivity of a score function \\\hline
$\epsilon$ & privacy budget \\\hline
$b$ & the scale parameter in Laplace mechanism \\\hline
$\delta$, $\beta$ & privacy budget \\\hline
%$\theta$ & a parameter used in \emph{Logarithmic Mechanism} \\\hline
$n_{\boldsymbol{s}}$ & \tabincell{c}{the number of times \\that an agent has visited a state $\boldsymbol{s}$}\\\hline
$C$ & the communication budget of an agent \\\hline
$P_{ask}$ & the probability that an agent asks for advice \\\hline
$P_{give}$ & the probability that an agent provides advice \\\hline
%$e$ & the $epsilon$ value in the $epsilon$-greedy approach \\\hline
\end{tabular}
\vspace{-2mm}
	\label{tab:notation}
\end{table}

\section{The Differential Advising Method}\label{sec:method}
Our method is developed in a simultaneous learning framework,
where agents are learning simultaneously
and can be in both the roles of adviser and advisee.
Each agent $i$ has a communication budget $C_i$ to control its communication overhead.
Every time when an agent $i$ asks for/provides advice from/to another agent,
agent $i$'s communication budget is deducted by $1$ till the budget $C_i$ is used up.
Here, we use a combined communication budget $C_i$ for asking for and giving advice, instead of two separate budgets, 
because 1) using a combined budget is more suitable for the real-world applications than using two separate budgets, 
and 2) using a combined budget can simplify the description of our method. 
For example, in wireless sensor networks, the communication budget of a sensor is based only on its battery power 
irrespective of the communication types.

Each agent $i$ has the following knowledge:
\begin{enumerate}
  \item its available actions in each state;
  \item the $Q$-value of each available action;
  \item the existence of its neighboring agents;
\end{enumerate}
Moreover, we assume that 1) a slight change in a state will not significantly change the reward function of an agent, 
and 2) an agent in two neighboring states has the same action set.

\subsection{Overview of the method}
In our method, during advising, agents have to address the following three sub-problems:
\begin{itemize}
  \item whether to ask for advice;
  \item whether to give advice;
  \item how to use advice.
\end{itemize}
%The differential privacy technique is used to address ``\emph{how to use advice}'' to enable agents to use advice created in different states.
The overview of our method is outlined in Fig. \ref{fig:overview}, 
and the detail of our method is formally given in Algorithm \ref{alg:TL}.
In summary, Algorithm \ref{alg:TL} describes the workflow of our method, 
which consists of two parts: the advising part (Lines 1-15) and the learning part (Lines 16-23). 
The advising part of Algorithm \ref{alg:TL} depicts the process of an advisee agent asking for and making use of advice. 
The detail of making use of advice is given in Algorithm \ref{alg:Lap}, 
where the differential privacy mechanism is introduced. 
The learning part of Algorithm \ref{alg:TL} is a regular reinforcement learning process.
\begin{figure}[ht]
\centering
	\includegraphics[width=0.35\textwidth, height=3.6cm]{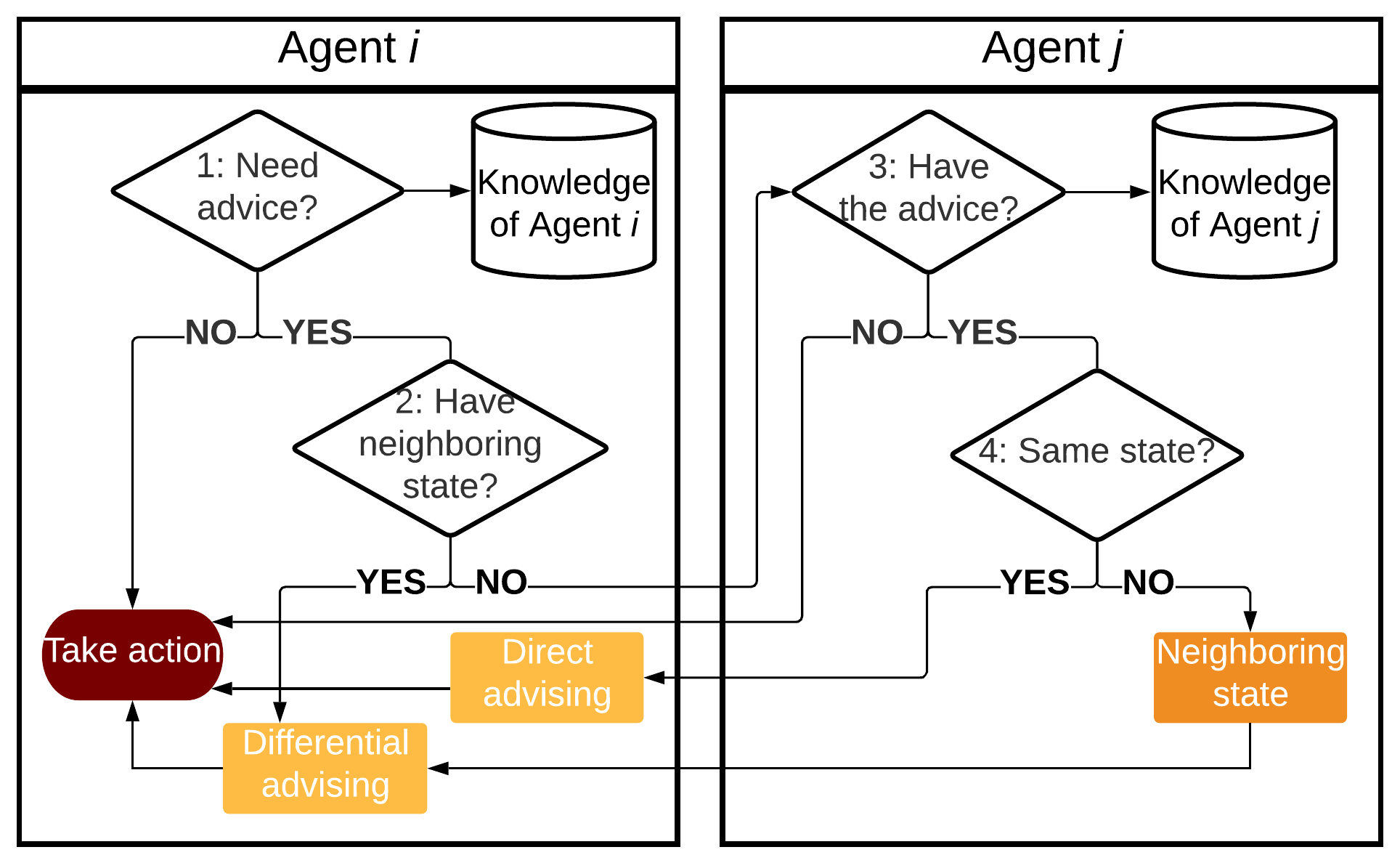}
	\caption{The overview of agent $i$ asking for advice from agent $j$}
	\label{fig:overview}
\end{figure}

\begin{algorithm}
\caption{The complete advising process}
\label{alg:TL}
/*Take agent $i$ at time step $t$ as an example */\\
Observes the environment and receives state $\boldsymbol{s}^t$;\\
Decides whether to ask for advice;\\
\If{agent $i$ decides to ask for advice}{
    \If{agent $i$ itself has a neighboring state $\boldsymbol{s}'$}{
        Uses Algorithm \ref{alg:Lap} to obtain $\boldsymbol{\pi}(\boldsymbol{s}^t)$;\\
        /* Internal advice from itself */ \\
    }\Else{
        broadcasts a message to all neighboring agents;\\
        Receives advice: $\boldsymbol{Q}(\boldsymbol{s}')$;\\
        \If{$\boldsymbol{s}'=\boldsymbol{s}^t$}{
            $\boldsymbol{\pi}(\boldsymbol{s}^t)\leftarrow\boldsymbol{\pi}(\boldsymbol{s}')$;\\
        }\Else{
            Uses Algorithm \ref{alg:Lap} to obtain $\boldsymbol{\pi}(\boldsymbol{s}^t)$;\\
            /* External advice from another agent */ \\
        }
    }
}
Selects an action $a_l$ based on the probability distribution $\boldsymbol{\pi}(\boldsymbol{s}^t)=(\pi(\boldsymbol{s}^t,a_1),...,\pi(\boldsymbol{s}^t,a_k))$;\\
$r\leftarrow$the reward for performing action $a_l$ in state $\boldsymbol{s}^t$;\\
$Q(\boldsymbol{s}^t,a_l)\leftarrow (1-\alpha)Q(\boldsymbol{s}^t,a_l)+\alpha[r+\gamma max_{a\in A_i} Q(\boldsymbol{s}^{t+1},a)]$;\\
$\overline{r}\leftarrow\sum_{a\in A_i}\pi(\boldsymbol{s}^t,a)Q(\boldsymbol{s}^t,a)$, 
where $A_i$ is the set of available actions of agent $i$ in state $\boldsymbol{s}^t$;\\
\For{each action $a\in A_i$}{
  $\pi(\boldsymbol{s}^t,a)\leftarrow\pi(\boldsymbol{s}^t,a)+\zeta(Q(\boldsymbol{s}^t,a)-\overline{r})$;\\
}
$\boldsymbol{\pi}(\boldsymbol{s}^t)\leftarrow Normalize(\boldsymbol{\pi}(\boldsymbol{s}^t))$;\\
$\alpha\leftarrow\frac{t}{t+1}\alpha$;\\
\end{algorithm}

In Lines 1-3, at time step $t$, agent $i$ observes a state $\boldsymbol{s}^t$,
and decides whether to ask for advice from other agents (Section \ref{sub:request}).
Specifically, an advice is the $Q$-value vector of a state: $\boldsymbol{Q}(\boldsymbol{s})=(Q(\boldsymbol{s},a_1),...,Q(\boldsymbol{s},a_k))$.
In Lines 4 and 5, if agent $i$ decides to ask for advice,
agent $i$ checks whether it has a neighboring state $\boldsymbol{s}'$ itself.
If so, agent $i$ takes $\boldsymbol{Q}(\boldsymbol{s}')$ as a self-advice.
Agent $i$, thereafter, uses Algorithm \ref{alg:Lap}, a differentially private algorithm, to process this advice
and selects an action based on this advice (Line 6). 
Here, allowing an agent to query itself can conserve its communication budget. 
This is called ``self-advising''.
In Lines 8 and 9, if agent $i$ has not visited any neighboring states,
agent $i$ sends a broadcast message to all communication-reachable, i.e., neighboring, agents
and its communication budget $C_i$ is deducted by $1$.
If any neighboring agent $j$ decides to offer advice,
agent $j$ sends the advice back to agent $i$
and agent $j$'s communication budget $C_j$ is deducted by $1$ (Section \ref{sub:give}).

In Lines 10 and 11, once agent $i$ receives the advice from agent $j$, 
agent $i$ checks whether this advice is created based on state $\boldsymbol{s}^t$ or a neighboring state of $\boldsymbol{s}^t$. 
If the advice is created based on $\boldsymbol{s}^t$, 
agent $i$ directly uses this advice to choose an action (Line 12). 
Otherwise, in Lines 13 and 14, agent $i$ uses Algorithm \ref{alg:Lap} to modify this advice
and then chooses an action by following the modified advice (Section \ref{sub:transfer}).
In the case that agent $i$ receives advice from multiple agents,
it selects the advice whose state has the minimum difference to agent $i$'s current state (recall Equation \ref{eq:difference}).
In the case that no advice is received or agent $i$ decides not to ask for advice,
agent $i$ simply chooses an action based on its own experience (Lines 16-23). 
Specifically, Lines 16-23 describe a regular reinforcement learning process. 
In Line 16, agent $i$ randomly selects an action based on the probability distribution $\boldsymbol{\pi}(\boldsymbol{s}^t)$ over its available actions in state $\boldsymbol{s}^t$. 
Then, in Line 17, agent $i$ performs the selected action and receives a reward $r$. 
In Line 18, agent $i$ uses reward $r$ to update the $Q$-value of the selected action. 
In Line 19, agent $i$ computes an average reward 
using the probability distribution in state $\boldsymbol{s}^t$ to multiply the $Q$-values of actions in state $\boldsymbol{s}^t$. 
After that, in Lines 20 and 21, agent $i$ adopts the average reward to update the probability distribution $\boldsymbol{\pi}(\boldsymbol{s}^t)$. 
In Line 22, $\boldsymbol{\pi}(\boldsymbol{s}^t)$ is normalized to be a valid probability distribution, 
where each element in $\boldsymbol{\pi}(\boldsymbol{s}^t)$ is in $(0,1)$ 
and the sum of all the elements is $1$. 
In Line 23, the learning rate $\alpha$ is decayed to guarantee the convergence of the algorithm.

It has to be noted that in our method, an advice is a $Q$-vector 
instead of an action suggestion or a $Q$-function. %commonly modeled in related research \cite{Silva17}. 
This is because our method is differential advising rather than direct advising, 
and using $Q$-vectors enables advisee agents to properly modify the advice before they take it. 
By contrast, modifying $Q$-vectors is easier and more reliable than modifying action suggestions and $Q$-functions.
Moreover, using $Q$-vectors as advice does not mean adviser agents must have a complete $Q$-table. 
As long as an adviser agent has the $Q$-values of the concerned state $s$ (or its neighboring state $s'$), 
the adviser can offer advice.

%The above advising process can be formally described in Algorithm \ref{alg:TL}. 
%Specifically, in Lines 6 and 14, the differential advising is carried out using differentially private Laplace mechanism.
%By using the differential privacy technique to add randomized Laplace noise to the advice,
%the advice created in a state can be used in a neighboring state.
%The details of the differential advising are given in the Sub-section \ref{sub:transfer}.

\subsection{Deciding whether to ask for advice}\label{sub:request}
At each time step, an agent, $i$, observes a state $\boldsymbol{s}$.
Agent $i$ decides whether to ask for advice based on probability $P_{ask}$ (Equation \ref{eq:ask}).
A random number between $0$ and $1$ is generated to compare with $P_{ask}$.
If the random number is less than $P_{ask}$,
agent $i$ asks for advice.

The calculation of $P_{ask}$ is based on 1) agent $i$'s confidence in current state $\boldsymbol{s}$,
and 2) agent $i$'s remaining communication budget, $C'_i$.
Agent $i$'s confidence in state $\boldsymbol{s}$ depends on how many times agent $i$ has visited state $\boldsymbol{s}$, denoted as $n^i_{\boldsymbol{s}}$.
A higher $n^i_{\boldsymbol{s}}$ value means a higher confidence and results in a lower probability of asking for advice.
Agent $i$'s remaining communication budget $C'_i$ depends on that how many times agent $i$ has asked for/provided advice from/to others. 
Once agent $i$ asks for advice, its communication budget is reduced by $1$: $C'_i\leftarrow C'_i-1$.
A lower remaining communication budget results in a lower probability of asking for advice. 
Formally, the probability function $P_{ask}$ takes $n^i_{\boldsymbol{s}}$ and $C'_i$ as input, 
and outputs a probability ranged in $[0,1)$.
The calculation of probability $P_{ask}$ is shown in Equation \ref{eq:ask},
where square root is used to reduce the decay speed of $P_{ask}$,
and the positive integer, $N$, is a threshold used to avoid agent $i$ using up its communication budget in very early stages.

\begin{equation}\label{eq:ask}
P_{ask}=\begin{cases}\frac{1}{\sqrt{n^i_{\boldsymbol{s}}}}\cdot\sqrt{\frac{C'_i}{C_i}}, \mbox{$n^i_{\boldsymbol{s}}\geq N$}\\
        0, \mbox{$n^i_{\boldsymbol{s}}<N$}\end{cases}
\end{equation}

If agent $i$ decides to ask for advice,
agent $i$ checks whether it has any neighboring states. 
This check is based on Equation \ref{eq:difference} to compute the difference 
between the new state $\boldsymbol{s}$ and the stored states of agent $i$. 
If the difference between $\boldsymbol{s}$ and a stored state is less than or equal to $1$, 
that stored state is a neighboring state of $\boldsymbol{s}$.
%In the case that the environment is continuous and the number of states is infinite, 
%agent $i$ will remove barely visited states to avoid overwhelming its memory.

On one hand, if agent $i$ has a set of neighboring states,
agent $i$ selects the neighboring state $\boldsymbol{s}'$,
which has the minimum difference to its current state $\boldsymbol{s}$.
Agent $i$ utilizes the proposed differential advising method
to modify $\boldsymbol{Q}(\boldsymbol{s}')$ and
use the modified $\boldsymbol{Q}(\boldsymbol{s}')$ to take an action (Section \ref{sub:transfer}).
On the other hand, if agent $i$ does not have any neighboring states,
it asks for advice from its neighboring agents (Section \ref{sub:give}).

\subsection{Deciding whether to give advice}\label{sub:give}
When an agent, $j$, receives an advice request message from agent $i$,
agent $j$ needs to decide whether to provide advice to agent $i$.
This decision is based on probability $P_{give}$ (Equation \ref{eq:give}).
%A random number between $0$ and $1$ is generated to compare with $P_{give}$.
%If the random number is less than $P_{give}$,
%agent $j$ provides the advice.
The calculation of $P_{give}$ is based on 1) agent $j$'s confidence in agent $i$'s concerned state $\boldsymbol{s}$ or neighboring states of $\boldsymbol{s}$,
and 2) agent $j$'s remaining communication budget $C'_j$.
If agent $j$ has more than one neighboring state,
agent $j$ uses the one which has the minimum difference to state $\boldsymbol{s}$.
The higher the confidence of agent $j$, the higher the probability agent $j$ will provide advice.
The lower remaining communication budget of agent $j$, the lower probability agent $j$ will provide advice. 

Formally, the probability function $P_{give}$ takes $n^j_{\boldsymbol{s}}$ and $C'_j$ as input, 
and outputs a probability ranged in $[0,1)$. 
Here, for simplicity, we use $n^j_{\boldsymbol{s}}$ to denote the number of times that 
agent $j$ has visited either state $\boldsymbol{s}$ or a neighboring state of $\boldsymbol{s}$, 
and $C'_j$ to denote the remaining communication budget of agent $j$. 
Once agent $j$ gives advice, its remaining communication budget is reduced by $1$: $C'_j\leftarrow C'_j-1$.
It should be noted that agent $j$ does not need to know the full state space to find neighboring states, 
because in some environments, it is infeasible for any individual agent to know the full state space, 
such as the multi-robot example in Section \ref{sec:introduction}. 
Thus, agent $j$ simply checks the states which it has visited. 
If all the visited states are not neighboring to the concerned state, 
this means that agent $j$ does not have the experience related to the concerned state 
and should not offer any advice to agent $i$.
Formally, the calculation of $P_{give}$ is given in Equation \ref{eq:give}.
\begin{equation}\label{eq:give}
P_{give}=\begin{cases}(1-\frac{1}{\sqrt{n^j_{\boldsymbol{s}}}})\cdot\sqrt{\frac{C'_j}{C_j}}, \mbox{$n^j_{\boldsymbol{s}}\geq n^i_{\boldsymbol{s}}$}\\
        0, \mbox{$n^j_{\boldsymbol{s}}<n^i_{\boldsymbol{s}}$}\end{cases}
\end{equation}

In Equation \ref{eq:give}, we set a thresholds $n^i_{\boldsymbol{s}}$ for adviser agents. 
If an adviser agent has visited a state for less than $n^i_{\boldsymbol{s}}$ times, 
that means the adviser agent is less experienced than the advisee agent in state $\boldsymbol{s}$ 
and thus the adviser agent is not allowed to offer advice to the advisee agent for that state. 
This setting is based on the fact that in theory, 
an adviser agent can generate advice for any possible states, 
even if the adviser agent has never visited those states. 
For example, an adviser agent can generate advice for a never visited state 
using the initialized $Q$-values of actions in that state. 
However, since the $Q$-value of each action can be randomly initialized, 
such advice is useless and may even harm the learning performance of the advisee.

\subsection{How to use advice}\label{sub:transfer}
When agent $i$ receives agent $j$'s advice, $\boldsymbol{Q}(\boldsymbol{s}')$, 
agent $i$ checks whether its concerned state $\boldsymbol{s}$ is the same as state $\boldsymbol{s}'$. 
The aim of this check is to decide whether this advice is created based on the same state ($\boldsymbol{s}=\boldsymbol{s}'$) 
or a neighboring state ($\boldsymbol{s}\neq \boldsymbol{s}'$).
%When agent $i$ receives agent $j$'s advice, $\boldsymbol{Q}(\boldsymbol{s}')$,
%agent $i$ checks whether agent $j$'s state $\boldsymbol{s}'$ is the same as agent $i$'s current state $\boldsymbol{s}$.
If $\boldsymbol{s}=\boldsymbol{s}'$, agent $i$ directly takes agent $j$'s advice and selects an action based on agent $j$'s advice.
If $\boldsymbol{s}\neq \boldsymbol{s}'$, 
agent $i$ uses the Laplace mechanism to add noise to agent $j$'s advice, $\boldsymbol{Q}(\boldsymbol{s}')$, shown in Algorithm \ref{alg:Lap}.
%By adapting differential privacy concept, the knowledge involved in an observation is still useable under a different but related observation.
Here is an example. 
Agent $i$ arrives at state s and asks for advice from agent $j$. 
Agent $j$ has visited three states: $\boldsymbol{s}_0,\boldsymbol{s}_1,\boldsymbol{s}_2$ and are confident in these states. 
After calculation, agent $j$ detects that none of the three states is the same as state $\boldsymbol{s}$ 
and only $\boldsymbol{s}_0$ is a neighboring state of $\boldsymbol{s}$. 
Thus, agent $j$ sends its advice, Q-vector $Q(\boldsymbol{s}_0)$, to agent $i$. 
After agent $i$ receives this advice, 
agent $i$ identifies that this advice is created based on a neighboring state $\boldsymbol{s}_0$. 
Agent $i$ adopts Algorithm \ref{alg:Lap} to add Laplace noise to the advice $Q(\boldsymbol{s}_0)$, 
and then uses the modified advice to guide its action selection.

\begin{algorithm}
\caption{Differential advising}
\label{alg:Lap}
\textbf{Input}: $\boldsymbol{Q}(\boldsymbol{s}')=(Q(\boldsymbol{s}',a_1),...,Q(\boldsymbol{s}',a_k))$;\\
   \For{each action $a\in A_i$}{
       adjusting $Q$-value $Q(\boldsymbol{s}', a)$ as:
       $Q(\boldsymbol{s}',a)\leftarrow Q(\boldsymbol{s}',a)+Lap(\frac{\Delta Q}{\epsilon})$;\\
   }
$\overline{r}\leftarrow\sum_{a\in A_i}\pi(\boldsymbol{s}^t,a)Q(\boldsymbol{s}',a)$;\\
\For{each action $a\in A_i$}{
  $\pi(\boldsymbol{s}^t,a)\leftarrow\pi(\boldsymbol{s}^t,a)+\zeta(Q(\boldsymbol{s}',a)-\overline{r})$;\\
}
$\boldsymbol{\pi}(\boldsymbol{s}^t)\leftarrow Normalize(\boldsymbol{\pi}(\boldsymbol{s}^t))$;\\
\textbf{Output}: $\pi(\boldsymbol{s}^t,a_1),...,\pi(\boldsymbol{s}^t,a_k)$;\\
\end{algorithm}

In Lines 2 and 3, agent $j$'s $Q$-values are adjusted by adding Laplace noise.
Specifically, the Laplace noise is added to each of the $Q$-values: $Q(\boldsymbol{s'},a_1)+Lap_1,m,Q(\boldsymbol{s'},a_k)+Lap_k$, 
where each noise, $Lap_i$, is a random number based on the Laplace distribution.
Based on the adjusted $Q$-values, in Lines 4-6, agent $i$ computes its average reward in state $\boldsymbol{s}^t$ 
and updates its probability distribution $\pi(\boldsymbol{s}^t,a)$. 
Note that the computation of the average reward uses the probability distribution in state $\boldsymbol{s}^t$ to multiply the $Q$-values of the available actions in state $\boldsymbol{s'}$. 
As we have the assumption that an agent has the same action set in two neighboring states, 
this multiplication is valid.
In Line 7, $\pi(\boldsymbol{s}^t,a)$ is normalized to be a valid probability distribution.
%In Line 7, function $Normalize()$ is used to make an invalid probability distribution to be valid,
%namely to make sure that each $\pi(\boldsymbol{s}^t,a)$, $a\in A_i$, is in $[0,1]$ and
%$\sum_{a\in A_i}\pi(\boldsymbol{s}^t,a)=1$.

In the Laplace mechanism in Line 3, $\Delta Q$ is the sensitivity of $Q$-values (recall Definition \ref{def:as}).
Typically, $\Delta Q$ is based on $Q$-functions, and $Q$-functions are based on rewards. 
Thus, essentially, $\Delta Q$ is based on rewards, and we can use rewards to approximately compute $\Delta Q$. 
We have to notice that it is infeasible to accurately compute $\Delta Q$ before the learning starts, 
because the accurate computation of $\Delta Q$ needs the final $Q$-values of all the states, 
and these final $Q$-values can only be obtained when the learning finishes. 

We use an example to demonstrate how to approximately compute $\Delta Q$. 
Definition \ref{def:as} states that $\Delta Q$ is the maximum difference between two $Q$-vectors of any two neighboring states. 
In the multi-robot problem exemplified in Section \ref{sec:introduction}, for any two neighboring states, 
the maximum reward difference is $10-(-5)=15$ 
given that the maximum reward $10$ is finding a victim 
and the minimum reward $-5$ is hitting an obstacle. 
This is because in any two neighboring states, only one dimension, i.e., one cell, is different. 
This different cell incurs a maximum reward difference, 
if the cell has a victim in a state 
while has an obstacle in the other state. 
Moreover, $Q$-value is usually updated based on a learning rate $\alpha$. 
Thus, in the multi-robot problem, $\Delta Q$ can be approximately computed as $\alpha\cdot 15$. When we set $\alpha=0.2$, then $\Delta Q=3$.
%As the calculation of $Q$-values is based on rewards,
%the value of $\Delta Q$ is also achieved from rewards.
%For example, if the reward ranges from $0$ to $10$,
%the maximum difference is $10$.
%Since $Q$-values are updated using a learning rate $\alpha$,
%the maximum difference becomes $10\cdot\alpha$.
%Thus, $\Delta Q=10\cdot\alpha$.

The rationale of Algorithm \ref{alg:Lap} is discussed as follows.
According to Definition \ref{Def-LA} in Section \ref{sub:DP}, 
the Laplace mechanism is used to add Laplace noise to the numerical output of a query $f$ to a dataset $D$, 
so that both the privacy and the utility of the output can be guaranteed. 
Typically, a query $f$ maps a dataset $D$ to $k$ real numbers: $f: D \rightarrow \mathbb{R}^k$. 
In comparison, in advising learning, the $Q$-table of an adviser agent is interpreted as a dataset, 
where each entry in the $Q$-table is the $Q$-value of a pair of a state and an action, $Q(s,a)$. 
Advice request from an advisee agent is interpreted as the query to the Q-table of the adviser agent. 
The output of the query is the advice of the adviser agent, 
which consists of $k$ real numbers, $Q(\boldsymbol{s},a_1),...,Q(\boldsymbol{s},a_k)$, 
and $k$ is the number of actions in state $s$. 
As discussed in Section \ref{sub:DA}, we have connected the properties of differential privacy and the properties of differential advising. 
Hence, by adding Laplace noise to $Q(\boldsymbol{s},a_1),...,Q(\boldsymbol{s},a_k)$, 
both the validity and utility of the advice can be guaranteed, as shown in Theorem \ref{thm:DP}, \ref{thmUtility} and \ref{thm:utility2} in Section \ref{sec:analysis}.
Thus, in state $\boldsymbol{s}$, agent $i$ can still use the advice $\boldsymbol{Q}(\boldsymbol{s}')$,
which is created based on a neighboring state $\boldsymbol{s}'$.

%According to the definitions of differential privacy and differential advising,
%a state $\boldsymbol{s}$ is interpreted as a dataset $D$.
%Q-function $Q$ is interpreted as a query function $f$.
%Thus, a query to a dataset, $f(D)$, is equivalent to a mapping from a state $\boldsymbol{s}$ to a Q-value vector $\boldsymbol{Q}(\boldsymbol{s})$.
%As the Laplace mechanism can guarantee that the outputs of a query to two neighboring datasets, $f(D)$ and $f(D')$, have slight difference,
%it can also guarantee that the Q-value vectors mapped from two neighboring states, $\boldsymbol{Q}(\boldsymbol{s})$ and $\boldsymbol{Q}(\boldsymbol{s}')$, have slight difference.
%Consequently, in state $\boldsymbol{s}$, agent $i$ can still use the advice $\boldsymbol{Q}(\boldsymbol{s}')$,
%which is created based on a neighboring state $\boldsymbol{s}'$.

\section{Theoretical Analysis}\label{sec:analysis}

\subsection{Differential advising analysis}
\begin{lem}\label{lem:dp}
Any methods, which satisfy $\epsilon$-differential privacy and
meet the input and output requirements of $\mathcal{M}_A$ in Definition \ref{def:da},
also satisfy $\epsilon$-differential advising.
\end{lem}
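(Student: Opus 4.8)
The plan is to prove Lemma~\ref{lem:dp} by a direct definitional reduction, exploiting the structural correspondence established in Section~\ref{sub:DA}: the $Q$-table of an adviser agent plays the role of the dataset, a state plays the role of the ``query index'' into that table, and a set of advice plays the role of a set of outcomes. Concretely, I would fix an arbitrary mechanism $\mathcal{M}$ that (i) satisfies $\epsilon$-differential privacy in the sense of Definition~\ref{Def-DP} and (ii) has the same input/output type as $\mathcal{M}_A$ in Definition~\ref{def:da}, i.e.\ it maps a state to a randomized piece of advice, and then show that $\mathcal{M}$ satisfies the inequality in Definition~\ref{def:da} verbatim.

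First I would line up the two notions of ``neighboring''. Definition~\ref{Def-DP} quantifies over pairs of datasets $D,D'$ with $||D-D'||_1\le 1$ (Definition~\ref{Def-GS}), while Definition~\ref{def:da} quantifies over pairs of neighboring states $\boldsymbol{s},\boldsymbol{s}'$, which by Definition~\ref{def:difference} means $||\boldsymbol{s}-\boldsymbol{s}'||_1\le 1$. Both are ``$L_1$-distance at most one'' conditions, and --- as already argued in the paragraph following Definition~\ref{def:difference} --- in the discrete integer-valued state setting considered here this is exactly the analog of ``one record differs'': two neighboring $Q$-tables differ in a single entry, and that entry is indexed by two neighboring states. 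Hence any pair of neighboring states, viewed through this correspondence, is a legitimate instance of the pair $(D,D')$ appearing in Definition~\ref{Def-DP}. Similarly, I would note that ``every set of outcomes $\Omega$'' in Definition~\ref{Def-DP} specializes to ``every set of advice $Ad$'' in Definition~\ref{def:da}, because by hypothesis the range of $\mathcal{M}$ is a space of advice.

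With the correspondence in place, the argument is a one-line substitution: for an arbitrary pair of neighboring states $\boldsymbol{s},\boldsymbol{s}'$ and an arbitrary advice set $Ad$, instantiate the $\epsilon$-differential-privacy guarantee of $\mathcal{M}$ at $(D,D',\Omega)=(\boldsymbol{s},\boldsymbol{s}',Ad)$ to obtain
\[
Pr[\mathcal{M}(\boldsymbol{s})\in Ad]\le \exp(\epsilon)\cdot Pr[\mathcal{M}(\boldsymbol{s}')\in Ad],
\]
which is precisely the defining inequality of $\epsilon$-differential advising. Since $\boldsymbol{s}$, $\boldsymbol{s}'$ and $Ad$ were arbitrary, $\mathcal{M}$ satisfies $\epsilon$-differential advising, and the lemma follows.

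The hard part here is not the inequality manipulation, which is immediate, but making the reduction honest --- justifying that treating a state as a dataset and an advice set as an outcome set is legitimate. The crux is the slight mismatch between ``one record differs'' (discrete datasets) and ``$L_1$ distance at most $1$'' (Definition~\ref{def:difference}); I would handle this exactly as the paper does after Definition~\ref{def:difference}, restricting attention to discrete states so that the two neighbor relations coincide and flagging the continuous case as out of scope. A secondary point worth one sentence is that the hypothesis ``meets the input and output requirements of $\mathcal{M}_A$'' is doing real work: without it an $\epsilon$-differentially-private mechanism need not even produce advice, so the conclusion would be ill-typed rather than false.
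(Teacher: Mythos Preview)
Your proposal is correct and follows essentially the same approach as the paper: both argue that Definition~\ref{def:da} is a special case of Definition~\ref{Def-DP} by identifying states with datasets (the paper phrases this as $S\subset\mathbb{N}^{|\mathcal{X}|}$) and advice sets with outcome sets ($Ad\subset\Omega$), so that the $\epsilon$-differential-privacy inequality instantiates directly to the $\epsilon$-differential-advising inequality. Your version is in fact more explicit about aligning the two neighbor relations and about the discrete/continuous caveat, but the core argument is the same definitional reduction.
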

\begin{proof}
%See the supplementary file.
In Definition \ref{Def-DP}, the input of $\mathcal{M}$ can be any type of dataset,
while the output can be any valid results.
In Definition \ref{def:da}, the input of $\mathcal{M}_A$ is a state $\boldsymbol{s}$ which is a vector,
while the output is also a vector $\boldsymbol{Q}(\boldsymbol{s})$.
Thus, Definition \ref{def:da} can be considered a special case of Definition \ref{Def-DP}.
Formally, let $D\in\mathbb{N}^{|\mathcal{X}|}$ and $\boldsymbol{s}\in S$,
where $\mathbb{N}^{|\mathcal{X}|}$ is a data universe with $|\mathcal{X}|$ attributes and $S$ is a state space.
As $\mathbb{N}^{|\mathcal{X}|}$ is a data universe and can be arbitrarily large, 
we have $S\subset\mathbb{N}^{|\mathcal{X}|}$.
Similarly, we also have $Ad\subset\Omega$.
Therefore, any methods, which satisfy $\epsilon$-differential privacy and
meet the input and output requirements of $\mathcal{M}_A$,% in Definition \ref{def:da},
also satisfy $\epsilon$-differential advising.
\end{proof}

Lemma \ref{lem:dp} creates a link between differential privacy and differential advising. 
Then, differential privacy mechanisms, e.g., the Laplace mechanism, can be used to implement differential advising.

\begin{thm}\label{thm:DP}
The proposed method is $\epsilon$-differential advising.
\end{thm}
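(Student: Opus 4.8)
The plan is to reduce the statement to Lemma \ref{lem:dp}: it suffices to show that the proposed method, viewed as a randomized mechanism that takes a state $\boldsymbol{s}$ (equivalently, the row of an adviser's $Q$-table indexed by $\boldsymbol{s}$) and returns an output, (i) satisfies $\epsilon$-differential privacy in the sense of Definition \ref{Def-DP}, and (ii) has the input/output types required of $\mathcal{M}_A$ in Definition \ref{def:da}. Requirement (ii) is immediate, since the input is a state vector and the output is the (possibly perturbed) advice vector $\boldsymbol{Q}(\boldsymbol{s})$ from which the action is selected; so the substantive work lies in establishing the $\epsilon$-differential privacy claim.

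First I would isolate the only source of randomness in the pipeline. Whenever advice originates from a genuinely neighboring but non-identical state, the advisee invokes Algorithm \ref{alg:Lap}; the case $\boldsymbol{s}=\boldsymbol{s}'$ is the trivial (already-known) identical-state situation. Within Algorithm \ref{alg:Lap}, the randomized step is Lines 2--3, where independent noise drawn from $Lap(\frac{\Delta Q}{\epsilon})$ is added to each of the $k$ coordinates $Q(\boldsymbol{s}',a_1),\dots,Q(\boldsymbol{s}',a_k)$. By Definition \ref{def:as}, $\Delta Q$ upper-bounds $||Q(\boldsymbol{s})-Q(\boldsymbol{s}')||_1$ over all pairs of neighboring states under Definition \ref{def:difference}, i.e. it is exactly the $\ell_1$-sensitivity of the advice-generation map $Q:S\to\mathbb{R}^k$ restricted to neighboring states. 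Hence this step is an instance of the Laplace mechanism $\mathcal{M}_L(\cdot,Q,\epsilon)$ of Definition \ref{Def-LA}, and the standard Laplace-mechanism guarantee yields that for any pair of neighboring states $\boldsymbol{s},\boldsymbol{s}'$ and every set of outcomes the perturbed $Q$-vector satisfies the $\exp(\epsilon)$ multiplicative bound of Definition \ref{Def-DP}.

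Next I would dispose of the post-processing. Lines 4--7 of Algorithm \ref{alg:Lap} --- computing $\overline{r}$, updating each $\pi(\boldsymbol{s}^t,a)$ by $\zeta(Q(\boldsymbol{s}',a)-\overline{r})$, and normalizing --- are deterministic functions of the noisy $Q$-vector and of the already-fixed policy, so by the post-processing invariance of differential privacy the resulting $\boldsymbol{\pi}(\boldsymbol{s}^t)$ inherits the $\epsilon$ guarantee. Combining this with the reduction above, the method satisfies $\epsilon$-differential privacy, and therefore, by Lemma \ref{lem:dp}, is $\epsilon$-differential advising.

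The step I expect to be the main obstacle is making the dictionary between ``neighboring datasets'' and ``neighboring states'' fully rigorous: I must argue that two $Q$-tables that agree everywhere except on the rows of two neighboring states legitimately play the role of the neighboring datasets $D,D'$ of Definition \ref{Def-DP}, that the query ``return the row for the concerned state'' then has $\ell_1$-sensitivity at most $\Delta Q$, and that the $\Delta Q$ actually used in the noise scale is a valid upper bound on the true sensitivity (the paper only computes it approximately, e.g. $\alpha\cdot 15$ in the multi-robot example). I would handle the last point by defining $\Delta Q$ to be, by construction, an upper bound on the learning-rate-scaled maximal reward difference between neighboring states, so that the Laplace noise is calibrated to at least the true sensitivity; granting that, the remainder is the textbook Laplace-mechanism calculation together with post-processing invariance.
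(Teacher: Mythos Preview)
Your proposal is correct and follows essentially the same route as the paper: reduce via Lemma \ref{lem:dp} to an $\epsilon$-differential privacy claim, then identify Line 3 of Algorithm \ref{alg:Lap} with the Laplace mechanism of Definition \ref{Def-LA} applied to the advice map $Q$ with sensitivity $\Delta Q$. The paper's own argument stops at that identification and does not separately invoke post-processing invariance for Lines 4--7 or discuss whether the approximate $\Delta Q$ truly upper-bounds the sensitivity; your treatment of those points is more careful than the paper's, but the underlying strategy is the same.
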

\begin{proof}
%See the supplementary file.
According to Lemma \ref{lem:dp}, to prove this theorem,
we need only to prove that the proposed method satisfies $\epsilon$-differential privacy.
In the proposed method, Algorithm \ref{alg:Lap} utilizes a differential privacy mechanism, i.e., the Laplace mechanism, in Line 3.
We re-write Line 3 in the $\boldsymbol{Q}$-vector form as $\mathcal{M}_a(\boldsymbol{s},\boldsymbol{Q},\epsilon)=\boldsymbol{Q}(\boldsymbol{s})+(Lap_1,...,Lap_k)$,
where $Lap_1,...,Lap_k$ are random noise sampled from $Lap(\frac{\Delta Q}{\epsilon})$
and $k$ is the number of actions in state $\boldsymbol{s}$.
Since the Laplace mechanism is differentially private,
by comparing Equation \ref{eq:lap} with the re-written equation of Line 3,
we can conclude that $\mathcal{M}_a(\boldsymbol{s},\boldsymbol{Q},\epsilon)$ satisfies $\epsilon$-differential privacy.
\end{proof}

Theorem \ref{thm:DP} theoretically proves that 
the proposed method is a valid differential advising method 
which enables agents to use advice created based on similar states. 
Moreover, the parameter $\epsilon$ is used to control the amount of noise added to an advice. 
The tuning of $\epsilon$ is left to users.

\subsection{Average reward analysis}
As shown in Line 4, Algorithm \ref{alg:Lap}, the average reward of an agent is:
$\overline{r}=\sum_{a\in A}\pi(\boldsymbol{s},a)Q(\boldsymbol{s},a)$.
We will demonstrate that with a very high probability,
the average reward of an agent using differential advising is greater than the average reward without using differential advising.

\begin{definition}[($\delta$,$\beta$)-useful]\label{DEF-USE}
A differential advising multi-agent system is ($\delta$,$\beta$)-useful
if for each agent, we have $Pr(\overline{r}'-\overline{r}>\delta)<1-\beta$,
where $\overline{r}'$ and $\overline{r}$ are the average rewards of an agent using and without using differential advising, respectively, and $\delta>0$ and $0<\beta<1$.
\end{definition}

\begin{thm}\label{thmUtility}
For any $0<\beta<1$, with probability at least $1-\beta$,
the average reward of an agent using differential advising
is greater than the average reward without using it. % differential advising.
The difference is bounded by $\delta$,
which satisfies $0<\delta<-\frac{\Delta Q}{\epsilon}\cdot\frac{1}{t}\cdot ln(2-2\beta)$, 
where $t$ is the number of learning iterations.
\end{thm}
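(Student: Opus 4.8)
The plan is to make the difference $\overline{r}'-\overline{r}$ explicit as a function of the Laplace noise injected in Line~3 of Algorithm~\ref{alg:Lap}, and then read its right tail off the Laplace CDF. By Lines~2--4 of Algorithm~\ref{alg:Lap}, the average reward \emph{with} advising is $\overline{r}'=\sum_{a\in A_i}\pi(\boldsymbol{s}^t,a)\bigl(Q(\boldsymbol{s}',a)+Y_a\bigr)$, where each $Y_a$ is drawn from $Lap(\Delta Q/\epsilon)$, whereas the \emph{no-advising} average reward (Line~19 of Algorithm~\ref{alg:TL}) is $\overline{r}=\sum_{a\in A_i}\pi(\boldsymbol{s}^t,a)Q(\boldsymbol{s}^t,a)$. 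Subtracting,
\begin{equation}\label{eq:diffdecomp}
\overline{r}'-\overline{r}=\underbrace{\sum_{a\in A_i}\pi(\boldsymbol{s}^t,a)\bigl(Q(\boldsymbol{s}',a)-Q(\boldsymbol{s}^t,a)\bigr)}_{c}+\sum_{a\in A_i}\pi(\boldsymbol{s}^t,a)Y_a.
\end{equation}
First I would argue that the deterministic term $c$ is non-negative: it is the gain from replacing the advisee's under-visited estimate $\boldsymbol{Q}(\boldsymbol{s}^t)$ with the adviser's well-estimated $\boldsymbol{Q}(\boldsymbol{s}')$. Indeed $\boldsymbol{s}'$ and $\boldsymbol{s}^t$ are neighboring, so they share an action set and, by the standing assumption that a slight change of state barely perturbs the reward function, carry essentially the same target $Q$-values; and the advising protocol lets an adviser respond only when it is at least as experienced in $\boldsymbol{s}'$ as the advisee is in $\boldsymbol{s}^t$ (Equation~\ref{eq:give}), so $\boldsymbol{Q}(\boldsymbol{s}')$ is the sharper estimate and dominates $\boldsymbol{Q}(\boldsymbol{s}^t)$ in expectation. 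Hence $\overline{r}'-\overline{r}\ge\sum_{a\in A_i}\pi(\boldsymbol{s}^t,a)Y_a$, and it remains to lower-bound the right tail of the noise term.

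Next I would reduce the noise term to a single Laplace draw. Since $\boldsymbol{\pi}(\boldsymbol{s}^t)$ is a probability vector, $\sum_a\pi(\boldsymbol{s}^t,a)=1$, so the scalar functional $\overline{r}(\cdot)=\sum_a\pi(\boldsymbol{s}^t,a)Q(\cdot,a)$ has $\ell_1$-sensitivity at most $\|\boldsymbol{Q}(\boldsymbol{s})-\boldsymbol{Q}(\boldsymbol{s}')\|_1\le\Delta Q$ over neighboring states (Definition~\ref{def:as}); thus the net perturbation of $\overline{r}$ is governed by a variable $Y\sim Lap(\Delta Q/\epsilon)$. Reconciling this single-draw view with the per-action noise actually injected by Algorithm~\ref{alg:Lap} is one of the points I would treat with care. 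I would then fold in the $1/t$ factor in the claim: the learning rate decays as $\alpha\leftarrow\frac{t}{t+1}\alpha$, so after $t$ iterations it has shrunk by a $\Theta(1/t)$ factor, and since $\Delta Q$ is essentially the learning rate times the reward range (as in the worked example, $\Delta Q\approx\alpha\cdot 15$), the effective scale at iteration $t$ is $b=\frac{\Delta Q}{\epsilon}\cdot\frac1t$.

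Finally, with $\overline{r}'-\overline{r}\ge Y$ and $Y\sim Lap(b)$, the Laplace CDF gives, for any $\delta>0$,
\begin{equation}\label{eq:laptail}
Pr\bigl(\overline{r}'-\overline{r}>\delta\bigr)\ \ge\ Pr(Y>\delta)\ =\ \frac{1}{2}\exp\!\left(-\frac{\delta}{b}\right).
\end{equation}
Imposing the $(\delta,\beta)$-usefulness level of Definition~\ref{DEF-USE}, i.e.\ that the left side be at least $1-\beta$, it suffices that $\frac{1}{2}\exp(-\delta/b)\ge 1-\beta$, equivalently $\exp(-\delta/b)\ge 2-2\beta$, equivalently $\delta\le -b\ln(2-2\beta)$; substituting $b=\frac{\Delta Q}{\epsilon t}$ yields exactly $0<\delta<-\frac{\Delta Q}{\epsilon}\cdot\frac1t\cdot\ln(2-2\beta)$, a non-empty range precisely when $\beta>\frac12$, consistent with $Y$ having zero mean. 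The main obstacle, and the step I would spend the most care on, is the opening one: establishing $c\ge 0$ (or a clean lower bound on it) from the neighboring-state and experience-gating assumptions, and then passing rigorously from the per-action Laplace noises of Algorithm~\ref{alg:Lap} to the single effective $Lap(\Delta Q/\epsilon)$ perturbation of $\overline{r}$; once those two reductions are secured, \eqref{eq:laptail} and the remaining algebra are routine.
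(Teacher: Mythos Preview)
Your setup diverges from the paper's in two material ways, and each introduces a gap. First, the paper does not carry a deterministic term $c$: it takes both averages over the \emph{same} $Q$-values, writing $\overline{r}'=\sum_a\pi(\boldsymbol{s},a)\bigl(Q(\boldsymbol{s},a)+Lap_a(b)\bigr)$ and $\overline{r}=\sum_a\pi(\boldsymbol{s},a)Q(\boldsymbol{s},a)$, so that $\overline{r}'-\overline{r}=\sum_a\pi(\boldsymbol{s},a)Lap_a(b)$ is pure noise. Your attempt to secure $c\ge 0$ does not go through: the gating condition $n^j_{\boldsymbol{s}}\ge n^i_{\boldsymbol{s}}$ in Equation~\ref{eq:give} says the adviser's estimate is \emph{better} (closer to the true $Q$), not \emph{larger}; nothing in the assumptions forces $Q(\boldsymbol{s}',a)\ge Q(\boldsymbol{s}^t,a)$ under the weights $\pi(\boldsymbol{s}^t,\cdot)$, so $c$ can be negative. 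The paper simply avoids this issue by its choice of baseline.

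Second, the $1/t$ factor in the statement does not arise from learning-rate decay shrinking the effective $\Delta Q$. The paper keeps $b=\Delta Q/\epsilon$ fixed, applies the single-variable Laplace tail $Pr(Lap(b)>\delta)=\tfrac12\exp(-\delta/b)$ (its Lemma~\ref{lem1}) to obtain the one-iteration probability, and then, over $t$ learning iterations, raises this to the $t$-th power, setting $[\tfrac12\exp(-\delta/b)]^t>1-\beta$ and solving for $\delta$. That is where $\frac{1}{t}$ enters. Your sensitivity argument for collapsing $\sum_a\pi(\boldsymbol{s}^t,a)Y_a$ to a single $Lap(\Delta Q/\epsilon)$ draw is also not a valid distributional identity (a convex combination of i.i.d.\ Laplace variables is not Laplace); the paper does not justify this step either, it simply uses $\sum_a\pi(\boldsymbol{s},a)=1$ to rewrite $\delta=\sum_a\pi(\boldsymbol{s},a)\delta$ and then invokes Lemma~\ref{lem1} directly on the weighted sum. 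To match the paper's argument, drop the $c$ term and the $\alpha$-decay mechanism, treat the noise as governed by Lemma~\ref{lem1} at a single iteration, and introduce $t$ through the product over iterations.
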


\begin{proof}
%See the supplementary file.
The average reward of an agent is $\overline{r}=\sum_{a\in A}\pi(\boldsymbol{s},a)Q(\boldsymbol{s},a)$.
After applying the Laplace mechanism,
the average reward becomes $\overline{r}'=\sum_{a\in A}\pi(\boldsymbol{s},a)(Q(\boldsymbol{s},a)+Lap(b))$.
%where $Lap(b)$ denotes a random noise sampled from the Laplace distribution with scaling $b$.
Now, we need to prove that $Pr(\overline{r}'-\overline{r}>\delta)>1-\beta$.

According to calculation, we have 
$\overline{r}'-\overline{r}=\sum_{a\in A}\pi(\boldsymbol{s},a)(Q(\boldsymbol{s},a)+Lap_a(b))-\sum_{a\in A}\pi(\boldsymbol{s},a)Q(\boldsymbol{s},a)=\sum_{a\in A}\pi(\boldsymbol{s},a)Lap_a(b)$.
%and $\sum_{a\in A}\pi(\boldsymbol{s},a)=1$, we only need to prove that $Pr(Lap(b)>\delta)>1-\beta$. 
Thus, $Pr(\overline{r}'-\overline{r}>\delta)=Pr(\sum_{a\in A}\pi(\boldsymbol{s},a)Lap_a(b)>\delta)$. 
Because $\sum_{a\in A}\pi(\boldsymbol{s},a)=1$, we have 
$Pr(\sum_{a\in A}\pi(\boldsymbol{s},a)Lap_a(b)>\delta)=Pr(\sum_{a\in A}\pi(\boldsymbol{s},a)Lap_a(b)>\sum_{a\in A}\pi(\boldsymbol{s},a)\delta)$.
The property of $Lap(b)$ is presented in Lemma~\ref{lem1}.
\begin{lem}\label{lem1}
(Laplace Random Variables).
For every $\delta>0$,
\begin{equation}
Pr(Lap(b)>\delta)=\frac{1}{2}\exp(-\frac{\delta}{b}).
\end{equation}
\end{lem}
\begin{proof}
The proof of this lemma can be found in \cite{Kasiviswanathan2008531}.
\end{proof}
Based on this property, we have 
$Pr(\overline{r}'-\overline{r}>\delta)=Pr(\sum_{a\in A}\pi(\boldsymbol{s},a)Lap_a(b)>\sum_{a\in A}\pi(\boldsymbol{s},a)\delta)=\frac{1}{2}\exp(-\frac{\delta}{b})$.

Lemma \ref{lem1} demonstrates the probability in one learning iteration.
If an agent learns $t$ iterations,
the probability becomes $Pr(\overline{r}'-\overline{r}>\delta)=[\frac{1}{2}\exp(-\frac{\delta}{b})]^t$.
To guarantee $Pr(\overline{r}'-\overline{r}>\delta)>1-\beta$,
we have $[\frac{1}{2}\exp(-\frac{\delta}{b})]^t>1-\beta$.
After calculation, we have $\delta<-b\cdot\frac{1}{t}\cdot ln(2-2\beta)$.
As $b=\frac{\Delta S}{\epsilon}$, we have $\delta<-\frac{\Delta Q}{\epsilon}\cdot\frac{1}{t}\cdot ln(2-2\beta)$.
\end{proof}

In this proof, we use the average reward equation: $\overline{r}=\sum_{a\in A}\pi(\boldsymbol{s},a)Q(\boldsymbol{s},a)$. 
This equation is valid to correctly estimate an agent's average reward, 
only if the agent has a good estimation of the $Q$-table. 
Hence, this theorem holds in late learning stages 
and reveals the final results.

\textbf{Remark 1}: The upper bound of $\delta$, $-\frac{\Delta Q}{\epsilon}\cdot\frac{1}{t}\cdot ln(2-2\beta)$, relies on $t$.
Let $\delta(t)=-\frac{\Delta Q}{\epsilon}\cdot\frac{1}{t}\cdot ln(2-2\beta)$.
Hence, $\delta(t)$ is monotonically decreasing with the increase of $t$.
Specifically, when $t\rightarrow 0$, $\delta(t)\rightarrow\infty$,
and when $t\rightarrow\infty$, $\delta(t)\rightarrow 0$.
This means that in early stages, using differential advising can significantly increase agents' average rewards.
However, as time progresses, the improvement of agents' average rewards decreases.
This situation reflects the fact that in early stages, an agent is not knowledgable,
so other agents' knowledge can give the agent significant help.
As time progresses, the agent accumulates enough knowledge,
and thus other agents' knowledge is trivial to the agent.

Theorem \ref{thmUtility} theoretically demonstrates the effectiveness of our method 
by analyzing lower bound of an agent's average reward difference between using and not using our method. 
The analysis result shows that the lower bound of the average reward difference is positive with a very high probability. 
This means that the agent using our method can receive a higher average reward than not using our method with a very high probability. 
Given that the average reward of using and not using the proposed method is $\overline{r}'$ and $\overline{r}$, respectively, 
the parameter $\beta$ is used to measure the lower bound of the average reward difference, $\overline{r}'-\overline{r}$. 
Particularly, when $\beta\rightarrow 0$, the lower bound is a small positive number; 
and when $\beta\rightarrow 1$, the lower bound tends to be $+\infty$. 
This means that the average reward difference, $\overline{r}'-\overline{r}$, 
has a high probability, $1-\beta$, to be a small positive number, 
while has a low probability to be very large. 
Certainly, the average reward difference, $\overline{r}'-\overline{r}$, cannot be $+\infty$. 
Thus, for the completeness of Theorem \ref{thmUtility}, as a supplement, 
Theorem \ref{thm:utility2} gives the upper bound of this difference, $\overline{r}'-\overline{r}$.

%Theorem \ref{thmUtility} gives a lower bound of the probability regarding the difference 
%between the average rewards with and without differential advising. 
%Similarly, an upper bound of the probability regarding the average reward difference is also given in Theorem \ref{thm:utility2}.
\begin{thm}\label{thm:utility2}
Let $k$ be the number of actions in state $\boldsymbol{s}$. 
Let $v\geq\sqrt{k}\frac{\Delta Q}{\epsilon}$ and $0<\lambda<\frac{2\sqrt{2}\epsilon v^2}{\Delta Q}$. 
Then $Pr(\overline{r}'-\overline{r}>\lambda)\leq exp(-\frac{t\lambda^2}{8v^2})$, 
where $t$ is the number of learning iterations.
\end{thm}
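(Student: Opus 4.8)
The plan is to mirror the structure of the proof of Theorem~\ref{thmUtility}, but replace the one-sided Laplace tail bound of Lemma~\ref{lem1} with a concentration inequality for a weighted sum of independent Laplace variables, so that the bound decays exponentially in $\lambda^2$ rather than exponentially in $\lambda$. As in Theorem~\ref{thmUtility}, I would first write $\overline{r}'-\overline{r}=\sum_{a\in A}\pi(\boldsymbol{s},a)\,Lap_a(b)$ with $b=\frac{\Delta Q}{\epsilon}$, so the quantity of interest is a convex combination (the weights $\pi(\boldsymbol{s},a)$ are nonnegative and sum to $1$) of $k$ independent mean-zero Laplace random variables, each with variance $2b^2$. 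The condition $v\geq\sqrt{k}\,\frac{\Delta Q}{\epsilon}=\sqrt{k}\,b$ is there to guarantee that $v^2$ dominates the variance proxy of the sum, i.e. $\sum_a \pi(\boldsymbol{s},a)^2\cdot 2b^2 \le 2b^2 \le v^2$ (using $\sum_a\pi_a^2\le \sum_a\pi_a=1$), so $v^2$ is a legitimate sub-exponential/sub-Gaussian scale parameter for one iteration.

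The key steps, in order, are: (i) recall that a centered Laplace$(b)$ variable is sub-exponential, with moment generating function $\mathbb{E}[e^{\xi L}]=\frac{1}{1-b^2\xi^2}\le \exp\!\big(2b^2\xi^2\big)$ valid for $|\xi|\le \frac{1}{\sqrt2\,b}$; (ii) by independence, the one-iteration increment $X:=\sum_a \pi_a Lap_a(b)$ satisfies $\mathbb{E}[e^{\xi X}]=\prod_a \mathbb{E}[e^{\xi\pi_a L_a}]\le \exp\!\big(2b^2\xi^2\sum_a\pi_a^2\big)\le\exp(2v^2\xi^2)$ for $|\xi|\le\frac{1}{\sqrt2 v}$ (here is where $v\ge\sqrt k\,b$ and $\sum\pi_a^2\le 1$ are used to absorb everything into $v$); (iii) over $t$ independent iterations the increments add, giving $\mathbb{E}[e^{\xi(\overline{r}'-\overline{r})}]\le \exp(2tv^2\xi^2)$ in the same range of $\xi$; (iv) apply the Chernoff bound $Pr(\overline{r}'-\overline{r}>\lambda)\le \exp(2tv^2\xi^2-\xi\lambda)$ and optimize over $\xi$, whose unconstrained optimum is $\xi^\star=\frac{\lambda}{4tv^2}$, yielding the exponent $-\frac{\lambda^2}{8v^2}$ scaled by... wait, more carefully $-\frac{t\lambda^2}{8tv^2}$; I should track the $t$'s so the final exponent is $-\frac{t\lambda^2}{8v^2}$; (v) check that the constraint $|\xi^\star|\le\frac{1}{\sqrt2 v}$ is exactly the hypothesis $0<\lambda<\frac{2\sqrt2\,\epsilon v^2}{\Delta Q}$ after a short rearrangement (since $b=\Delta Q/\epsilon$, one needs $\lambda< 4tv^2\cdot\frac{1}{\sqrt2 v}=2\sqrt2\,t v$, and modulo the exact handling of the iteration count this matches the stated range), so the optimum is admissible and the bound follows.

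The main obstacle I expect is bookkeeping the sub-exponential MGF bound cleanly: getting the right constant in $\mathbb{E}[e^{\xi L}]\le e^{c b^2\xi^2}$ and the matching validity interval for $\xi$, then propagating these through the product over actions and the $t$ iterations so that the final constants land on $8v^2$ in the denominator and the threshold lands on $\frac{2\sqrt2\,\epsilon v^2}{\Delta Q}$. A secondary subtlety, shared with Theorem~\ref{thmUtility}, is the modeling assumption that the $t$ iterations contribute independent Laplace perturbations with the same weights $\pi(\boldsymbol{s},a)$ and the same scale $b$; I would state this as the operating regime (late learning, stable $Q$-table and policy) exactly as the remark after Theorem~\ref{thmUtility} does, rather than attempt to justify it in full generality. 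Everything else is routine Chernoff-method calculation.
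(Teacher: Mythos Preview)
Your single-iteration argument is a valid and in fact cleaner route than the paper's. The paper does two things you do not: it first bounds the weighted sum by the unweighted one, writing $\sum_a \pi(\boldsymbol{s},a)\,Lap_a(b) < \sum_a Lap_a(b)$ (a step that is not a pointwise inequality since the $Lap_a$ can be negative, though the resulting tail comparison is still defensible), and it then invokes a black-box concentration lemma for sums of Laplace variables from \cite{Dwork14} rather than re-deriving it. Your MGF/Chernoff computation keeps the weights, avoids that sloppy intermediate inequality, and rebuilds the lemma from scratch; for $t=1$ it lands on exactly $\exp(-\lambda^2/(8v^2))$ with the correct validity range $\lambda<2\sqrt{2}v^2/b=2\sqrt{2}\epsilon v^2/\Delta Q$ (use the natural MGF range $|\xi|\le\frac{1}{\sqrt{2}b}$, not the smaller $\frac{1}{\sqrt{2}v}$, when checking the optimum).

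The genuine gap is your treatment of the $t$ iterations. If you model $\overline r'-\overline r$ as the sum of $t$ independent copies of $X=\sum_a\pi_a Lap_a(b)$, then the MGF bound becomes $\exp(2tv^2\xi^2)$, the Chernoff optimum is $\xi^\star=\lambda/(4tv^2)$, and the resulting exponent is $-\lambda^2/(8tv^2)$, with $t$ in the \emph{denominator}, not $-t\lambda^2/(8v^2)$. Your own hedging (``I should track the $t$'s'') is well placed: no amount of bookkeeping will move that $t$ to the numerator under the additive interpretation, and the constraint you obtain, $\lambda<2\sqrt{2}\,tv$, likewise fails to match the stated hypothesis. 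The paper's step here is not a sum over iterations at all; it simply takes the one-iteration tail bound $\exp(-\lambda^2/(8v^2))$ and raises it to the $t$-th power, exactly as in the proof of Theorem~\ref{thmUtility}. That is the only way to produce $\exp(-t\lambda^2/(8v^2))$ with a fixed $\lambda$, and you should adopt it (with the same caveat the paper carries about what ``over $t$ iterations'' means probabilistically).
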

\begin{proof}
%See the supplementary file.
As shown in Theorem \ref{thmUtility}, we have 
$\overline{r}'-\overline{r}=\sum_{a\in A}\pi(\boldsymbol{s},a)(Q(\boldsymbol{s},a)+Lap(b))-\sum_{a\in A}\pi(\boldsymbol{s},a)Q(\boldsymbol{s},a)=\sum_{a\in A}\pi(\boldsymbol{s},a)Lap_a(b)$. 
Since each $\pi(\boldsymbol{s},a)$ is in $[0,1]$, we have 
$\overline{r}'-\overline{r}=\sum_{a\in A}\pi(\boldsymbol{s},a)Lap_a(b)<\sum_{a\in A}Lap_a(b)$.
Thus, we have $Pr(\overline{r}'-\overline{r}>\lambda)<Pr(\sum_{a\in A}Lap_a(b)>\lambda)$.
The property of $\sum_{a\in A}Lap_a(b)$ is given in Lemma \ref{lem2}.
\begin{lem}\label{lem2}
(Sum of Laplace random variables) 
Let $Lap_1(b),...,Lap_k(b)$ be $k$ independent variables with distribution $Lap(b)$, 
then $Pr(\sum_{a\in A}Lap_a(b)>\lambda)\leq exp(-\frac{\lambda^2}{8v^2})$.
\end{lem}
\begin{proof}
The proof of this lemma can be found in \cite{Dwork14}.
\end{proof}
Lemma \ref{lem2} demonstrates the probability in one learning iteration.
If an agent learns $t$ iterations,
the probability becomes $Pr(\overline{r}'-\overline{r}>\lambda)\leq exp(-\frac{t\lambda^2}{8v^2})$.
\end{proof}

According to Theorem \ref{thmUtility} and \ref{thm:utility2}, 
it can be found that the performance of the proposed method is affected by the advice sensitivity $\Delta Q$, 
and a smaller $\Delta Q$ means a better performance. 
Since advice sensitivity, $\Delta Q$, is based on the $Q$-values of states, 
we can conclude that the proposed method can work well in the environments 
where the $Q$-values of any pair of neighboring states have a small difference.

\subsection{Convergence analysis}
In Algorithm \ref{alg:TL}, let $\alpha_t$ be the learning rate at time step $t$.
Since $\alpha_{t+1}=\frac{t}{t+1}\cdot\alpha_t=\frac{t}{t+1}\cdot\frac{t-1}{t}\cdot\alpha_{t-1}=...$,
we can conclude that $\alpha_t=\frac{1}{t}\alpha_1$.
Let $t_i$ be the index of the $i$th time that action $a$ is performed in state $\boldsymbol{s}$.
For example, suppose that the $1$st time that action $a$ is performed in state $\boldsymbol{s}$ is at time step $6$, then $t_1=6$.
To prove the convergence of Algorithm \ref{alg:TL},
we need the results of Lemma \ref{thm:theorem1} \cite{Anton12} and Lemma \ref{thm:theorem2} \cite{Watkins92}.

\begin{lem}\label{thm:theorem1}
A series, $s_1+s_2+...$, converges if and only if for any given small positive number $\mu>0$,
there is always an integer $N$, such that when $n>N$, for any positive integer $m$,
the inequality holds: $|s_{n+1}+...+s_{n+m}|<\mu$.
\end{lem}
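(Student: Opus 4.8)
The plan is to recognize Lemma~\ref{thm:theorem1} as the Cauchy criterion for the convergence of a series and to prove both implications by translating the statement about the finite tail sums $s_{n+1}+\dots+s_{n+m}$ into a statement about the partial sums $S_n=\sum_{i=1}^{n}s_i$. The pivotal (and entirely elementary) observation is the telescoping identity $s_{n+1}+\dots+s_{n+m}=S_{n+m}-S_n$, which lets me pass back and forth between the ``finite-tail'' formulation used in the lemma and the ordinary Cauchy condition on the real sequence $\{S_n\}$. Recall that, by definition, the series $s_1+s_2+\dots$ converges precisely when $\{S_n\}$ converges in $\mathbb{R}$.

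First I would establish the forward direction. Assume $S_n\to S$ for some real number $S$. Given $\mu>0$, choose $N$ so that $|S_k-S|<\mu/2$ whenever $k>N$. Then for any $n>N$ and any positive integer $m$, both indices $n$ and $n+m$ exceed $N$, so the triangle inequality gives $|s_{n+1}+\dots+s_{n+m}|=|S_{n+m}-S_n|\le|S_{n+m}-S|+|S-S_n|<\mu/2+\mu/2=\mu$, which is exactly the asserted inequality.

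Next I would prove the converse. Suppose the stated condition holds: for every $\mu>0$ there is an $N$ with $|S_{n+m}-S_n|<\mu$ for all $n>N$ and all $m\ge 1$. This is equivalent to saying $|S_p-S_q|<\mu$ for all $p,q>N$ (set $n=\min\{p,q\}$ and $m=|p-q|$, the case $p=q$ being trivial), so $\{S_n\}$ is a Cauchy sequence of real numbers. Invoking the completeness of $\mathbb{R}$, the sequence $\{S_n\}$ converges to some limit $S$, and hence, by definition, the series $s_1+s_2+\dots$ converges.

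The one ingredient that is not pure bookkeeping with the triangle inequality is the completeness of $\mathbb{R}$ used in the converse; this is the only place where the argument is not self-contained. At the level of this paper it is standard to take completeness as given (or to cite it, as is done via \cite{Anton12}) rather than reprove it, so the ``hard part'' amounts merely to deciding how much of the real-analysis foundation to assume.
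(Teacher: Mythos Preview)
Your proof is correct and is the standard argument for the Cauchy criterion for series. Note, however, that the paper does not actually prove this lemma at all: it is simply stated with a citation to \cite{Anton12} and then used as a black box in the proof of Theorem~\ref{thm:convergence1}. So there is no ``paper's own proof'' to compare against---you have supplied a proof where the paper merely quotes a textbook result.
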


It should be noted that in Lemma \ref{thm:theorem1}, if $|s_{n+1}+...+s_{n+m}|\geq\mu$,
the series, $s_1+s_2+...$, does not converge.

\begin{lem}\label{thm:theorem2}
Given bounded rewards and learning rate $0\leq\alpha_{t_i}\leq 1$,
for any pair of state $\boldsymbol{s}$ and action $a$, if series $\sum_{1\leq i}\alpha^2_{t_i}$ is bounded, i.e., $\sum_{1\leq i}\alpha^2_{t_i}<+\infty$,
and series $\sum_{1\leq i}\alpha_{t_i}$ is unbounded, i.e., $\sum_{1\leq i}\alpha_{t_i}\rightarrow+\infty$,
then, when $i\rightarrow+\infty$, $Q(\boldsymbol{s},a)$ converges, with probability $1$, to the optimal $Q$-value, i.e., the expected reward.
\end{lem}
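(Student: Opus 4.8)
The plan is to prove the statement by casting the $Q$-update as a stochastic approximation of the Bellman optimality operator and combining a sup-norm contraction argument with the Robbins--Monro step-size conditions, which is the classical route taken in \cite{Watkins92}. First I would isolate the update for a fixed pair $(\boldsymbol{s},a)$ along its own visitation subsequence: at the $i$-th visit, the $Q$-learning update (Line~18) of Algorithm~\ref{alg:TL} reads $Q_{t_{i+1}}(\boldsymbol{s},a)=(1-\alpha_{t_i})Q_{t_i}(\boldsymbol{s},a)+\alpha_{t_i}\bigl[r_{t_i}+\gamma\max_{b}Q_{t_i}(\boldsymbol{s}',b)\bigr]$. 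Subtracting the optimal value $Q^{*}(\boldsymbol{s},a)$, which satisfies the Bellman optimality equation $Q^{*}(\boldsymbol{s},a)=\mathbb{E}[\,r+\gamma\max_{b}Q^{*}(\boldsymbol{s}',b)\,]$, and writing $\Delta_i(\boldsymbol{s},a)=Q_{t_i}(\boldsymbol{s},a)-Q^{*}(\boldsymbol{s},a)$, I obtain a recursion $\Delta_{i+1}=(1-\alpha_{t_i})\Delta_i+\alpha_{t_i}w_i$ with $w_i=r_{t_i}+\gamma\max_{b}Q_{t_i}(\boldsymbol{s}',b)-Q^{*}(\boldsymbol{s},a)$.

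Next I would split $w_i$ into its conditional mean and a zero-mean noise term given the history $\mathcal{F}_i$. The conditional mean equals $(\mathcal{T}Q_{t_i})(\boldsymbol{s},a)-Q^{*}(\boldsymbol{s},a)$, where $\mathcal{T}$ is the Bellman optimality operator; since $\mathcal{T}$ is a $\gamma$-contraction in the sup norm (here $\gamma\in(0,1)$) and $Q^{*}$ is its fixed point, this term is bounded in magnitude by $\gamma\,\|\Delta_i\|_{\infty}$, the sup being taken over the \emph{whole} $Q$-table. The noise term has conditional second moment bounded by $K(1+\|\Delta_i\|_{\infty}^{2})$ for a constant $K$, which follows from the rewards being bounded together with $\gamma<1$. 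Given these two properties, $0\le\alpha_{t_i}\le1$, $\sum_{1\le i}\alpha_{t_i}^{2}<+\infty$ and $\sum_{1\le i}\alpha_{t_i}\to+\infty$, I would invoke the standard stochastic-approximation convergence result for processes $\Delta_{i+1}=(1-\alpha_i)\Delta_i+\alpha_i w_i$ (as used in \cite{Watkins92}) to conclude $\Delta_i\to0$ with probability $1$, i.e., $Q(\boldsymbol{s},a)$ converges almost surely to the optimal $Q$-value.

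The main obstacle is that the recursion for one pair $(\boldsymbol{s},a)$ is not self-contained: its driving term involves $\max_{b}Q_{t_i}(\boldsymbol{s}',b)$, hence the current values of \emph{all} state-action pairs, so the argument cannot proceed coordinate by coordinate. This forces a joint treatment of the entire $Q$-table, in which one first establishes an almost-sure bound on $\|\Delta_i\|_{\infty}$ and then drives it to zero geometrically via the contraction factor $\gamma$; making this bootstrap rigorous, while also accommodating the asynchronous update schedule in which distinct pairs are refreshed at different real-time steps $t_i$ but every pair is visited infinitely often (this last fact being implicit in the hypothesis $\sum_{1\le i}\alpha_{t_i}\to+\infty$), is the technical heart of the proof. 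A secondary, lighter point is checking that the decay $\alpha_{t}=\tfrac{1}{t}\alpha_{1}$ used in Algorithm~\ref{alg:TL} meets the two series conditions along each visitation subsequence; this is exactly where Lemma~\ref{thm:theorem1} is applied, so I would relegate it to the convergence theorem that consumes the present lemma rather than re-deriving it here.
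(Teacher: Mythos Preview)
Your proposal is a reasonable outline of the classical stochastic-approximation proof of $Q$-learning convergence, but it goes well beyond what the paper actually does. In the paper, Lemma~\ref{thm:theorem2} is not proved at all: it is stated as a known result and attributed to \cite{Watkins92}, then used as a black box inside the proof of Theorem~\ref{thm:convergence1}. The paper's own contribution is only to verify the two step-size conditions $\sum_{1\le i}\alpha_{t_i}^2<+\infty$ and $\sum_{1\le i}\alpha_{t_i}\to+\infty$ for the particular decay $\alpha_t=\tfrac{1}{t}\alpha_1$, which is precisely the ``secondary, lighter point'' you correctly identified as belonging to the downstream theorem rather than to this lemma.

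So there is no divergence in approach to flag and no gap in your argument relative to the paper's; you have simply supplied a proof sketch where the paper supplies a citation. If you want to match the paper, a one-line reference to \cite{Watkins92} suffices.
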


\begin{thm}\label{thm:convergence1}
In Algorithm \ref{alg:TL}, for any pair of state $\boldsymbol{s}$ and action $a$,
as $i\rightarrow+\infty$, $Q(\boldsymbol{s},a)$ converges, with probability $1$, 
to the expected reward of an agent performing action $a$ in state $\boldsymbol{s}$.
\end{thm}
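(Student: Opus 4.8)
The plan is to reduce the statement to Lemma~\ref{thm:theorem2}, the standard stochastic-approximation convergence result for $Q$-learning, by checking that the step-size schedule of Algorithm~\ref{alg:TL}, restricted to the executions of the pair $(\boldsymbol{s},a)$, meets the Robbins--Monro conditions. The first thing I would observe is that the differential-advising machinery does not touch the $Q$-update: the Laplace noise of Algorithm~\ref{alg:Lap} is injected only into the advice $\boldsymbol{Q}(\boldsymbol{s}')$, which is used solely to reshape the action-selection distribution $\boldsymbol{\pi}(\boldsymbol{s}^t)$, whereas Line~18 of Algorithm~\ref{alg:TL} is exactly the ordinary tabular $Q$-learning update $Q(\boldsymbol{s}^t,a_l)\leftarrow(1-\alpha)Q(\boldsymbol{s}^t,a_l)+\alpha[r+\gamma\max_{a}Q(\boldsymbol{s}^{t+1},a)]$ with $\alpha=\alpha_t$. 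Hence it suffices to verify the hypotheses of Lemma~\ref{thm:theorem2}: bounded rewards (assumed in Section~\ref{sec:method}), $0\le\alpha_{t_i}\le 1$, $\sum_{1\le i}\alpha_{t_i}^2<+\infty$, and $\sum_{1\le i}\alpha_{t_i}\rightarrow+\infty$, where $t_i$ denotes the time step of the $i$-th execution of action $a$ in state $\boldsymbol{s}$.

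Two of these are immediate. From the recursion already established, $\alpha_t=\frac{1}{t}\alpha_1$ with $\alpha_1\in(0,1)$, so $0<\alpha_{t_i}\le\alpha_1<1$. Moreover the $i$-th execution of $(\boldsymbol{s},a)$ cannot occur before time step $i$, so $t_i\ge i$, and therefore
\begin{equation}
\sum_{1\le i}\alpha_{t_i}^2=\alpha_1^2\sum_{1\le i}\frac{1}{t_i^2}\le\alpha_1^2\sum_{1\le i}\frac{1}{i^2}<+\infty .
\end{equation}

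The substantive step is $\sum_{1\le i}\alpha_{t_i}=\alpha_1\sum_{1\le i}\frac{1}{t_i}\rightarrow+\infty$. I would first argue that $(\boldsymbol{s},a)$ is executed infinitely often: after the normalization in Line~22 every component of $\boldsymbol{\pi}(\boldsymbol{s}^t)$ lies in $(0,1)$, so whenever the agent is in state $\boldsymbol{s}$ it selects $a$ with a strictly positive probability $p>0$, and under the standard exploration/recurrence assumption $\boldsymbol{s}$ itself recurs with positive asymptotic frequency; a Borel--Cantelli type argument then gives infinitely many executions of $(\boldsymbol{s},a)$ almost surely. To turn ``infinitely often'' into divergence of $\sum_i 1/t_i$ I would establish the quantitative bound $t_i\le c\,i$ for some constant $c$ and all large $i$, by noting that the number of executions of $(\boldsymbol{s},a)$ among the first $T$ steps is a.s.\ at least $T/c$ because each visit of $\boldsymbol{s}$ produces $a$ with probability at least $p$ and $\boldsymbol{s}$ occupies a positive fraction of the trajectory. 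Then divergence follows from Lemma~\ref{thm:theorem1}: if $\sum_i 1/t_i$ converged, its block sums would eventually be smaller than any $\mu>0$, yet $t_i\le c\,i$ forces $\sum_{i=n+1}^{2n}\alpha_{t_i}\ge\alpha_1\sum_{i=n+1}^{2n}\frac{1}{c\,i}\ge\frac{\alpha_1}{2c}$ for every $n$, a contradiction. With all four hypotheses in force, Lemma~\ref{thm:theorem2} yields that $Q(\boldsymbol{s},a)$ converges with probability $1$ to the expected reward of performing $a$ in $\boldsymbol{s}$.

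I expect the main obstacle to be precisely the divergence condition, and within it the quantitative claim $t_i=O(i)$ rather than mere infinite recurrence: this is where an explicit assumption on the environment is required — every state reachable and revisited with positive asymptotic frequency, together with the full support of the behaviour policy guaranteed by the normalization step — so that the visit times of $(\boldsymbol{s},a)$ do not thin out super-linearly. The step-size range and the square-summability are routine, and the only conceptual point beyond textbook $Q$-learning is the remark that the advice perturbation sits \emph{upstream} of the tabular update and so cannot break convergence.
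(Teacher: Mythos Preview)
Your proposal is correct and follows essentially the same route as the paper: reduce to Lemma~\ref{thm:theorem2} via $\alpha_t=\alpha_1/t$, bound $\sum\alpha_{t_i}^2$ by comparison with $\sum 1/i^2$, and establish $\sum\alpha_{t_i}=+\infty$ through a Cauchy block-sum argument (Lemma~\ref{thm:theorem1}) driven by the positive lower bound on $\pi(\boldsymbol{s},a)$ from normalization together with a positive visitation frequency for $\boldsymbol{s}$. The only notable difference is cosmetic---you slice the divergent series by visit index $i$ whereas the paper slices by time window $n<t_i\le 2n$---and your explicit remark that the Laplace-perturbed advice affects only $\boldsymbol{\pi}$ and not the $Q$-update (Line~18) is a clarifying observation the paper leaves implicit.
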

\begin{proof}
%See the supplementary file.
%We follow the spirit of \cite{Ye18} to prove this theorem.
Based on Lemma \ref{thm:theorem2}, to prove the convergence,
we need only to prove that in Algorithm \ref{alg:TL}, $\sum_{1\leq i}\alpha^2_{t_i}<+\infty$ and $\sum_{1\leq i}\alpha_{t_i}\rightarrow+\infty$ always hold
for any pair of state $\boldsymbol{s}$ and action $a$.

First, we prove that $\sum_{1\leq i}\alpha^2_{t_i}<+\infty$ always holds.
It has been known that $\sum_{1\leq t}\frac{1}{t^2}=\frac{\pi^2}{6}$, where $t$ means time step, a positive integer.
Then, the following inequality can be concluded:
$\sum_{1\leq i}\alpha^2_{t_i}=\sum_{1\leq i}\frac{\alpha^2_1}{t^2_i}\leq\sum_{1\leq t}\frac{\alpha^2_1}{t^2}=\alpha^2_1\frac{\pi^2}{6}<+\infty$.

Then, we prove that $\sum_{1\leq i}\alpha_{t_i}\rightarrow+\infty$ always holds.
Given a small positive number $\mu$ and a positive integer $N$,
let $0<\mu<\beta\cdot prob(\boldsymbol{s})\cdot\frac{\alpha_1}{2}$.
Here, $\beta$ is the mapping lower bound, a small positive number, used to normalize probability distribution $\boldsymbol{\pi}(\boldsymbol{s})$,
and $prob(\boldsymbol{s})$ is the probability with which an agent observes state $\boldsymbol{s}$.
As $\pi(\boldsymbol{s},a)\leq\beta>0$, there is at least one integer $n>N$,
such that in $(n+1)$th, $(n+2)$th, ..., $(n+n)$th time steps,
the number of times that action $a$ is performed in state $\boldsymbol{s}$ is greater than or equal to $\beta\cdot prob(\boldsymbol{s})\cdot n$.
Therefore, $\sum_{n<t_i\leq 2n}\alpha_{t_i}=\sum_{n<t_i\leq 2n}\frac{\alpha_1}{t_i}\geq\sum_{n<t_i\leq 2n}\frac{\alpha_1}{2n}\geq\beta\cdot prob(\boldsymbol{s})\cdot n\cdot\frac{\alpha_1}{2n}=\beta\cdot prob(\boldsymbol{s})\cdot\frac{\alpha_1}{2}>\mu$.
Based on Lemma \ref{thm:theorem1}, as $\sum_{n<t_i\leq 2n}\alpha_{t_i}>\mu$,
the series, $\sum_{i}\alpha_{t_i}$, does not converge,
which means that $\sum_{i}\alpha_{t_i}\rightarrow+\infty$.

Because $\sum_{1\leq i}\alpha^2_{t_i}<+\infty$ and $\sum_{i}\alpha_{t_i}\rightarrow+\infty$,
based on Lemma \ref{thm:theorem2}, it can be concluded that
in Algorithm \ref{alg:TL}, when $i\rightarrow+\infty$, $Q(\boldsymbol{s},a)$ converges, with probability $1$, to the expected reward of a player.
\end{proof}

Theorem \ref{thm:convergence1} theoretically proves the convergence of our method, 
which means that agents using our method can finally receive their expected rewards.

%\subsection{Requesting/Providing knowledge analysis}
%In Equation \ref{eq:ask}, $P_{ask}=\frac{1}{\sqrt{n_{\boldsymbol{s}}}}\cdot\sqrt{\frac{C'_i}{C_i}}$,
%as time progresses, $n_{\boldsymbol{s}}\rightarrow\infty$ and $C'_i\rightarrow 0$.
%Thus, $P_{ask}\rightarrow 0$.
%In Equation \ref{eq:give}, $P_{give}=(1-\frac{1}{\sqrt{n_{\boldsymbol{s}}}})\cdot\sqrt{\frac{C'_j}{C_j}}$,
%as time progresses, $n_{\boldsymbol{s}}\rightarrow\infty$ and $C'_j\rightarrow 0$.
%Thus, $1-\frac{1}{\sqrt{n_{\boldsymbol{s}}}}\rightarrow 1$ and $P_{give}\rightarrow 0$.
%This result demonstrates that as time progresses,
%an neither asks for nor provide advice from/to other agents.

%\subsection{Computation complexity analysis}
%\begin{thm}
%The computation complexity of the differential advising method is $O(n\cdot t+|A|)$,
%where $n$ is the number of agents, $t$ is the number of time steps,
%and $|A|$ is the number of actions.
%\end{thm}
%\begin{proof}
%Algorithm \ref{alg:TL} has one loop body (Lines 20-21),
%and the number of iterations is $|A|$.
%Moreover, Algorithm \ref{alg:TL} has a broadcast process in Line 9.
%The number of broadcasted message, in the worst case, is $n$.
%Also, Algorithm \ref{alg:TL} invokes Algorithm \ref{alg:Lap} in both Lines 6 and 14.
%Algorithm \ref{alg:Lap} has two loop bodies (Lines 2-3 and Lines 5-6).
%The number of iteration of both the two loops is $|A|$.
%Hence, the total number of iterations in one time step is $n+3|A|$,
%and the computation complexity is $O(m\cdot t+|A|)$.
%\end{proof}

\section{Experiments}\label{sec:experiment}

\subsection{Experimental setup}
Two experimental scenarios are used to evaluate the proposed differential advising method, denoted as \emph{DA-RL} (differential advising reinforcement learning).
\subsubsection{Scenario 1}
The first scenario is the multi-robot problem \cite{Wang08} shown in Fig. \ref{fig:example}.
Each agent represents a robot which has four actions: \emph{up}, \emph{down}, \emph{left} and \emph{right}. 
%\textcolor{blue}{Each agent uses a regular $Q$-learning algorithm. 
%The agents coordinate when they need to ask for advice.}
An observation of an agent is interpreted as its state which consists of $8$ dimensions.
The aim of the agents is to achieve the targets on the map,
which could be victims in search and rescue or rubbish in office cleaning.
The reward for achieving a target is set to $10$.
The reward for hitting an obstacle is set to $-5$.
The reward for moving a step is $0$. 
The reward in each situation is pre-defined as knowledge to each agent. 
For example, when an agent achieves a target, its reward is automatically added by $10$, 
while when the agent hits an obstacle, its reward is automatically reduced by $5$. 
Moreover, agents cannot be at the same cell at the same time step.

The first scenario is classified as three settings.
%\begin{itemize}
Setting 1 (\emph{Static}): agents achieve fixed targets.
    %The first setting is used to evaluate the learning performance of agents in static environments.
    %For example, robots clean an office.
Setting 2 (\emph{Dynamic 1}): agents achieve dynamically generated targets.
    %The second setting is used to evaluate the learning performance of agents in dynamic environments,
    %where the number of targets is dynamically increasing.
    %For example, robots clean an office with dynamically dropping rubbish.
Setting 3 (\emph{Dynamic 2}): agents achieve dynamically moving targets.
    %The third setting is used to evaluate the learning performance of agents in dynamic environments,
    %where the targets are dynamically moving.
    %For example, robots search and rescue wildly, where victims may move.
		
%\end{itemize}
The size of environments varies from $12\times 8$ cells to $24\times 16$ cells.
%With the variation of the environmental size,
%the numbers of agents, targets and obstacles also vary accordingly.
The number of agents varies from $2$ to $6$.
The number of targets varies from $20$ to $60$.
The number of obstacles varies from $15$ to $35$.

In this scenario, three evaluation metrics are used.

%\begin{itemize}
Metric 1 (Average time): the number of time steps to achieve all targets in average in one learning round.
    This metric is specific for the first and third settings: \emph{Static} and \emph{Dynamic 2}.
    Here, a learning round is a period during which the agents achieve all the targets.
		
Metric 2 (Average time for one target): the number of time steps to achieve one target in average in one learning round.
    This metric is specific for the second setting: \emph{Dynamic 1}. 
		The reason of using different metrics in different settings will be given at the end of this section.
		
Metric 3 (Average hit): the number of hits to obstacles in average in one learning round.
%\end{itemize}

\subsubsection{Scenario 2}
This scenario is the multi-agent load balancing \cite{Sakuma08}. %Cogill06
Each agent represents a factory.
Each factory has a backlog which stores a set of types of items.
Let the number of item types be $k$ and the maximum stock for each type $i$ be $M_i$.
A state of an agent is the current number of items of each type in stock.
A state consists of $k$ dimensions.

At each time step, a random item from each agent is processed with a given probability $p_s$,
and a new item arrives at each agent with another given probability $p_a$.
%Typically, $p_s>p_a$.
The actions of an agent include whether or not to pass an arrived item to another agent.
The reward of an agent $A$ is based on its backlog: $r_A=50-\sum_{1\leq i\leq k}w_i\cdot m_i$,
where $w_i$ is the importance weight of items of type $i$ 
and $m_i$ is the current number of items of type $i$.
In addition, if agent $A$ passes an item $i$ to agent $B$,
agent $A$'s reward is reduced by $2\cdot w_i$ as a cost for redistribution.
The aim of agents is to maximize their accumulated reward.

The number of agents varies from $2$ to $6$.
Item types varies from $2$ to $6$, 
where for two types, the weights are set to $5$ and $4$; 
for three types, the weights are set to $5$, $4$ and $3$; and so on.
The maximum stock for each type of item varies from $3$ to $7$.
The probability of processing an item $p_s$ varies from $0.2$ to $0.6$.
The probability of arriving at an item $p_a$ is set to $0.4$.

The evaluation metric is the average reward of all the agents in one learning round. 
Here, a learning round means a pre-defined number of learning iterations.
%\begin{itemize}
%	\item Average reward: the reward of all the agents received by selling items in average in one learning round.
%    Here, a learning round means a pre-defined number of learning iterations.
%    \item Average cost: the cost of all the agents due to preserving items in stock in average in one learning round.
%    \item Average time: the number of time steps for selling an item in average in one learning round.
%\end{itemize}

\subsubsection{The methods for comparison}
In this experiment, two methods are involved for comparison.
The first method is a regular reinforcement learning method, denoted as \emph{RL},
which does not include any advising.
The \emph{RL} method is used as a comparison standard.
The second method is from \cite{Silva17}, denoted as \emph{SA-RL} (simultaneous advising reinforcement learning). 
The \emph{SA-RL} method is an Ad hoc TD advising method, 
where agents simultaneously and independently learn and share advice when needed.
The \emph{SA-RL} method is similar to our method.
The difference includes 1) the \emph{SA-RL} method allows agents to offer advice only in the same states;  
and 2) agents in the \emph{SA-RL} method transfers recommended actions as advice. 
Moreover, the \emph{SA-RL} method involves two parameters, $v_a$ and $v_g$, used to control the probability of asking for and giving advice, respectively. 
A higher $v_a$ value results in a lower probability of asking for advice, 
while a higher $v_g$ results in a higher probability of giving advice.
The \emph{SA-RL} method is used to demonstrate the effectiveness of the differential privacy technique on agent advising.

The values of the parameters used in the experiments are set as follows:
$\alpha=0.2$, $\gamma=0.8$, $\zeta=0.1$, $\epsilon=1$, $v_a=0.4$ and $v_g=0.9$.
The values of these parameters are experimentally chosen to yield good results. 
Moreover, $\Delta Q=3$ in Scenario 1, while $\Delta Q=1$ in Scenario 2. 
The experimental results are obtained by averaging the results of $500$ runs.
%The experimental results are achieved by averaging $50$ runs.

\subsection{Experimental results of scenario 1}
\subsubsection{The first setting: agents achieve targets}
\begin{figure}[ht]
\centering
\vspace{-3mm}
	\begin{minipage}{0.45\textwidth}
   \subfigure[\scriptsize{Average time steps}]{
    \includegraphics[width=0.45\textwidth, height=3cm]{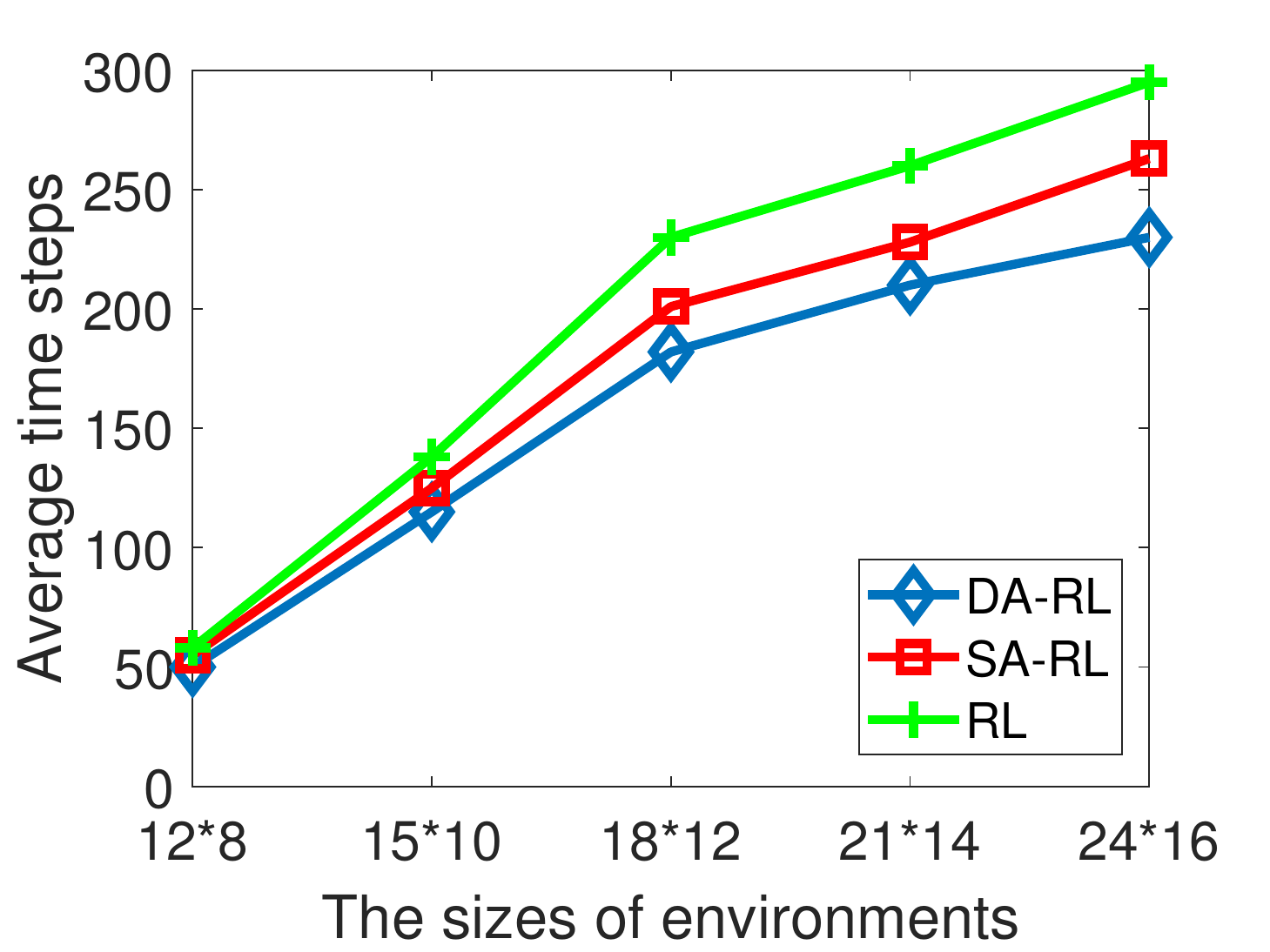}
			\label{fig:S1TimeStep}}
    \subfigure[\scriptsize{Average hits}]{
    \includegraphics[width=0.45\textwidth, height=3cm]{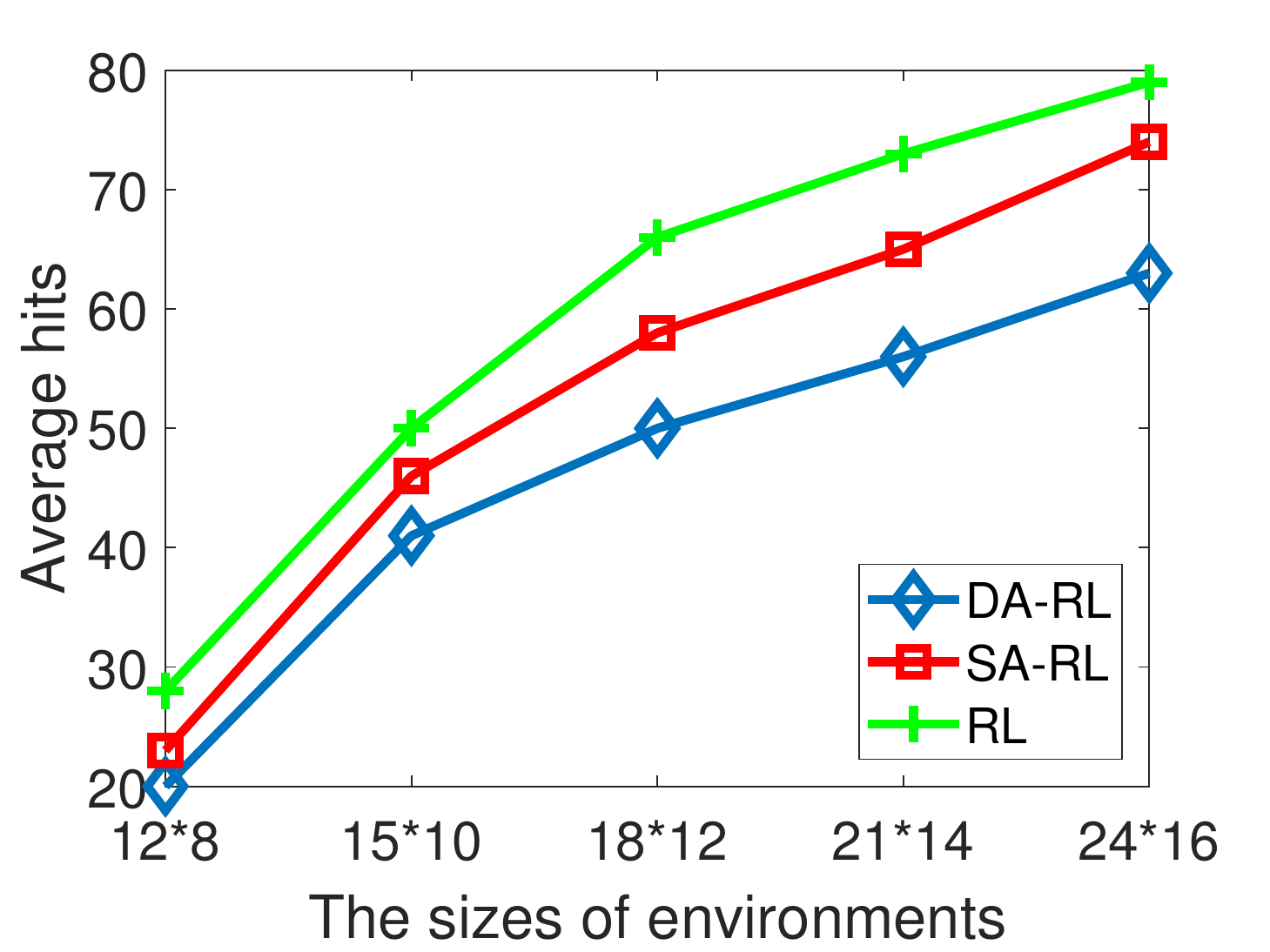}
			\label{fig:S1Hit}}\\[2ex]
    \end{minipage}
		\vspace{-3mm}
	\caption{Performance of the three methods in different sizes of environments in the first setting}
	\vspace{-1mm}
	\label{fig:S1Sizes}
\end{figure}

Fig. \ref{fig:S1Sizes} demonstrates the performance of the three methods in different sizes of environments in the first setting.
It can be seen that with the increase of the environmental size,
in all three methods, agents take more steps to achieve targets and hit more obstacles.
This is because with the increase of size,
more obstacles are there in the environments,
which inevitably increases the difficulty for agents to achieve targets.

In Fig. \ref{fig:S1TimeStep}, when the environmental size is smaller than $18\times 12$,
the performance difference among the three methods is small.
However, when the environmental size is larger than $18\times 12$,
the performance difference in average time steps enlarges up to $20\%$ between \emph{DA-RL} and \emph{RL},
and $10\%$ between \emph{DA-RL} and \emph{SA-RL}, respectively.
In addition, in Fig. \ref{fig:S1Hit}, the performance difference in average hits enlarges up to $25\%$ between \emph{DA-RL} and \emph{RL},
and $15\%$ between \emph{DA-RL} and \emph{SA-RL}, respectively.
This is because when the environmental size increases,
the number of targets and obstacles also increases.
Thus, the number of states of each agent may increase as well.
%This means that the amount of each agent's knowledge also increases.
In this situation, since \emph{DA-RL} and \emph{SA-RL} enable agent advising,
their performance is better than \emph{RL}.
Moreover, as \emph{DA-RL} adopts the differential privacy technique,
agents using \emph{DA-RL} can take more advice than using \emph{SA-RL}.
Thus, the performance of \emph{DA-RL} is better than \emph{SA-RL}.

This result shows that by using agent advising,
agents can ask for advice about the positions of obstacles and targets.
Thus, agents' learning performance can be improved.
Moreover, in the \emph{DA-RL} method, an agent's advice, created in one state, can also be used by other agents in neighboring states,
which further improves those agents' learning performance.

\begin{figure}[ht]
\centering
\vspace{-5mm}
	\begin{minipage}{0.45\textwidth}
   \subfigure[\scriptsize{Average time steps}]{
    \includegraphics[width=0.45\textwidth, height=3cm]{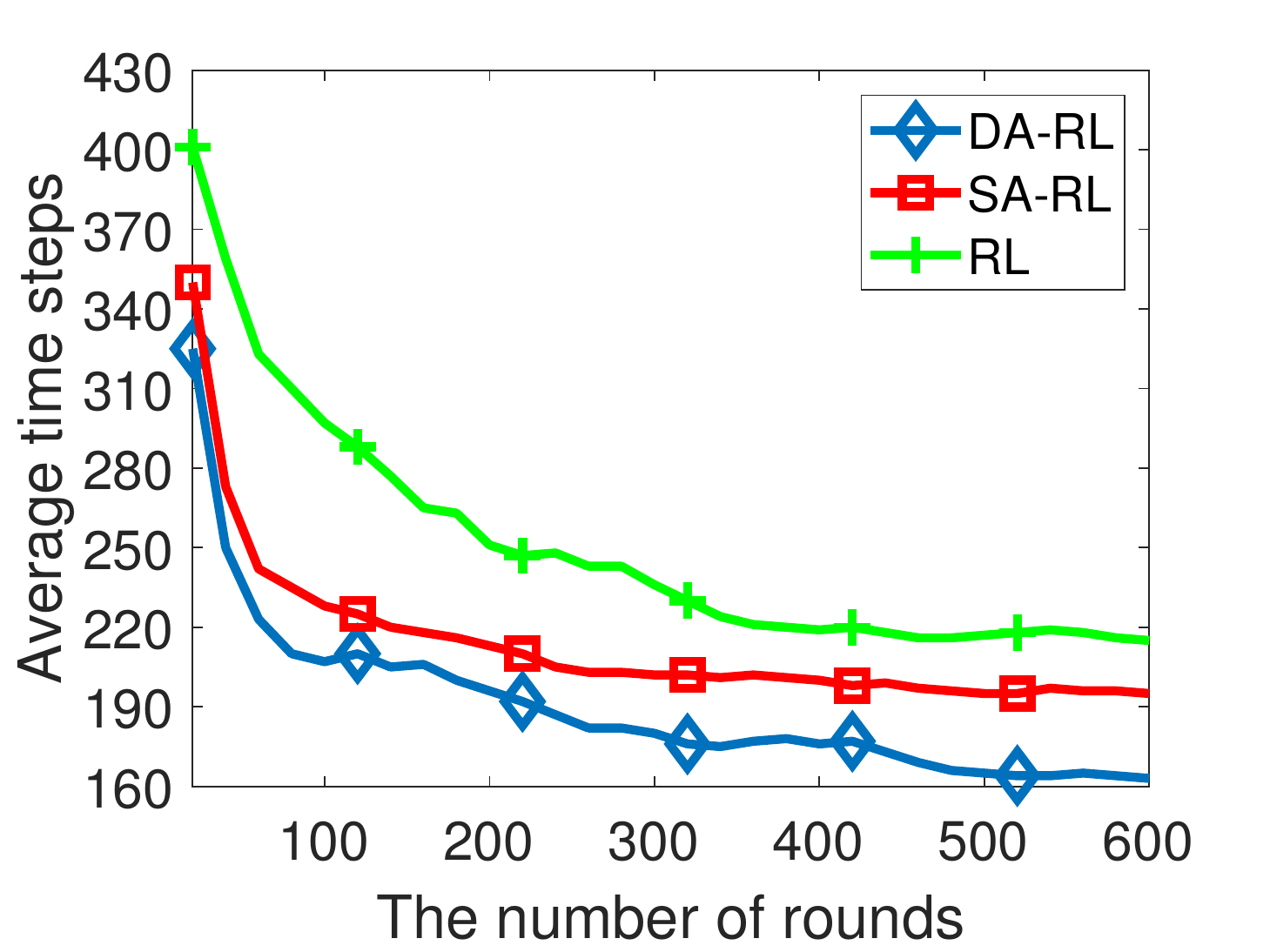}
			\label{fig:S1StepRuns}}
    \subfigure[\scriptsize{Average hits}]{
    \includegraphics[width=0.45\textwidth, height=3cm]{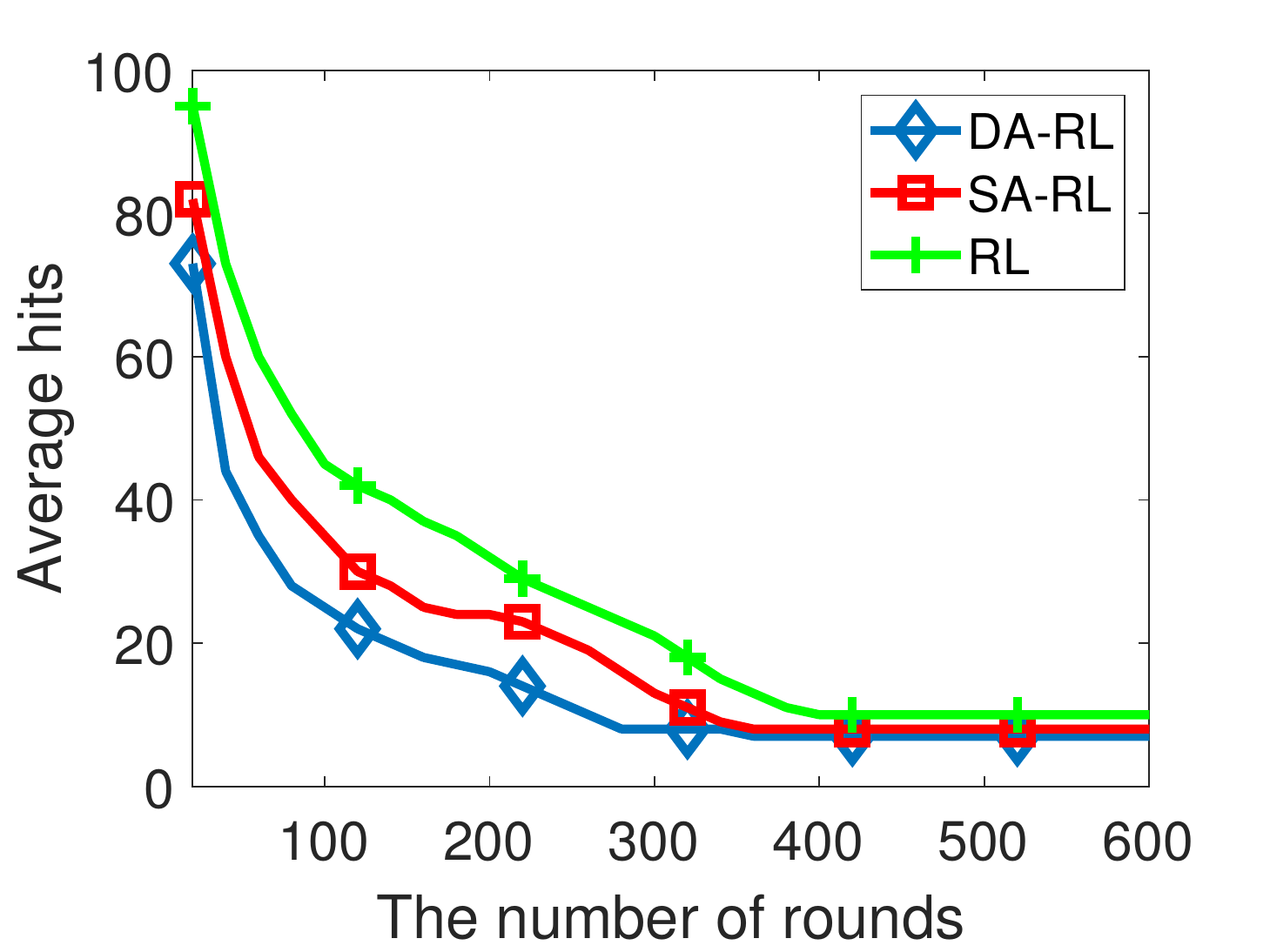}
			\label{fig:S1HitRuns}}\\[2ex]
    %\subfigure[\scriptsize{Communication overhead}]{
%    \includegraphics[width=0.32\textwidth, height=2.5cm]{figures//S1Communication.eps}
%			\label{fig:S1Communication}}
%    \subfigure[\scriptsize{Communication overhead of the malicious player}]{
%    \includegraphics[width=0.45\textwidth, height=3cm]{figures//S1CommMalicious.jpg}
%			\label{fig:S1CommMalicious}}
    \end{minipage}
		\vspace{-3mm}
	\caption{Performance of the three methods as time progresses in the first setting}
	\vspace{-1mm}
	\label{fig:S1Runs}
\end{figure}

Fig. \ref{fig:S1Runs} shows the performance of the three methods as time progresses in the first setting.
%where a round means agents achieve all the targets.
The size of the environment is set to $18\times 12$;
the number of agents is set to $4$;
the number of targets is set to $40$;
and the number of obstacles is set to $25$.
In Figs. \ref{fig:S1StepRuns}, \emph{DA-RL} and \emph{SA-RL} methods continuously outperform \emph{RL} method.
Moreover, \emph{DA-RL} and \emph{SA-RL} methods converge faster than \emph{RL} method.
This can be explained by the fact that
agents using \emph{DA-RL} and \emph{SA-RL} methods can ask for advice between each other,
which can significantly improve the learning performance, especially in early stages.
In Fig. \ref{fig:S1HitRuns},
the number of average hits are almost the same among the three methods in late stages.
This is because the positions of obstacles are fixed and agents using these methods have the learning ability.
As time progresses, the positions of obstacles can be gradually memorized by agents.
Thus, the number of hits decreases and finally converges to a stable point.

%In Figs. \ref{fig:S1Sizes} and \ref{fig:S1Runs}, it can be found that the performance difference between \emph{DP-TL} and \emph{DP-TL-Malicious} is very small in various situations.
%The results empirically prove the effectiveness of the differential privacy technique
%that a malicious player being in or out of the environment has little impact on the performance of other players.
%
%Specifically, in Fig. \ref{fig:S1CommMalicious}, the communication overhead of the malicious player sharply decreases to nearly $0$.
%This means that almost no players request knowledge from the malicious player in very early stages.
%This result shows the power of differential privacy in two aspects.
%First, by using the differentially private Laplace noise, the malicious player can be naturally avoided by other players.
%Second, by using the privacy budget concept of differential privacy, communication overhead can be naturally controlled.
%Since players' communication is constrained, the probability that they request knowledge from the malicious player is also reduced.

\subsubsection{The second setting: agents achieve dynamically generated targets}
\begin{figure}[ht]
\centering
\vspace{-5mm}
	\begin{minipage}{0.45\textwidth}
   \subfigure[\scriptsize{Average time steps}]{
    \includegraphics[width=0.45\textwidth, height=3cm]{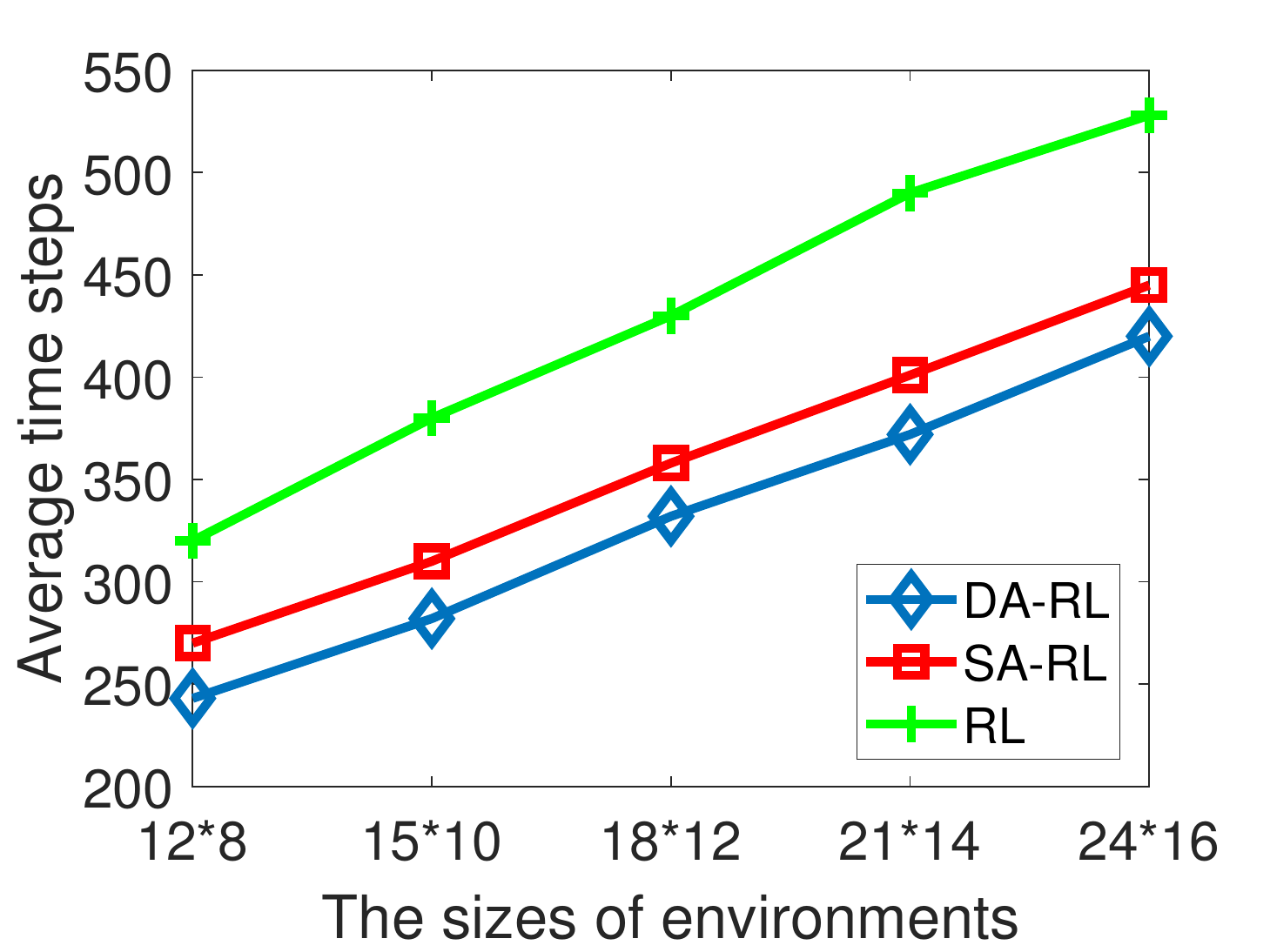}
			\label{fig:S2TimeStep}}
    \subfigure[\scriptsize{Average hits}]{
    \includegraphics[width=0.45\textwidth, height=3cm]{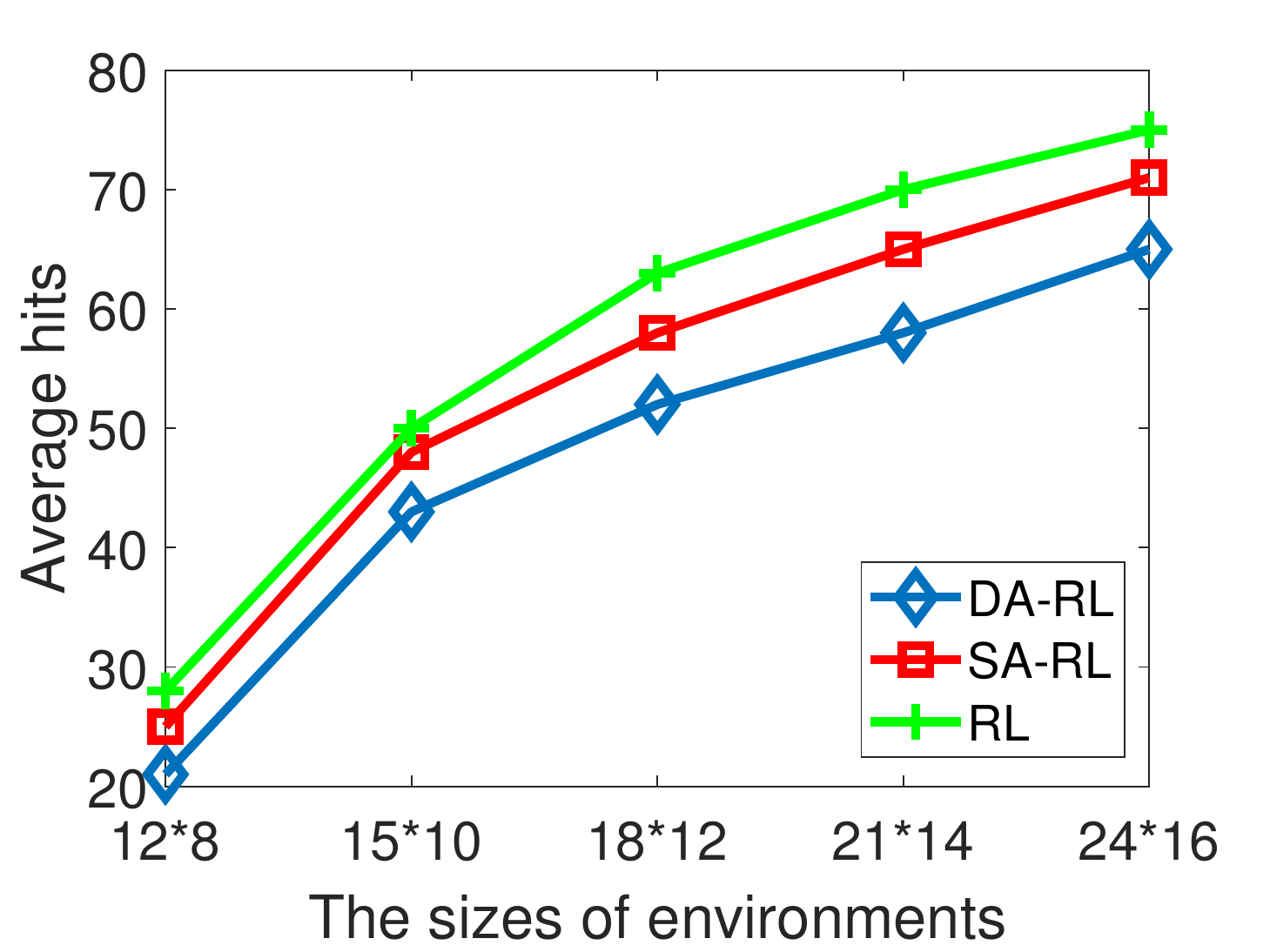}
			\label{fig:S2Hit}}\\[2ex]
    \end{minipage}
		\vspace{-3mm}
	\caption{Performance of the three methods in different sizes of environments in the second setting}
	\vspace{-1mm}
	\label{fig:S2Sizes}
\end{figure}

Fig. \ref{fig:S2Sizes} demonstrates the performance of the three methods in different sizes of environments in the second setting.
It can be seen that the proposed \emph{DA-RL} method outperforms the other two methods.

Compared to the first setting (Fig. \ref{fig:S1TimeStep}), all the three methods need more time steps to achieve targets in the second setting (Fig. \ref{fig:S2TimeStep}).
This is because in the second setting, new targets are dynamically introduced.
It is unavoidable to take more time steps to achieve these new targets.

The average number of hits in the three methods in the second setting (Fig. \ref{fig:S2Hit}) is almost the same as that in the first setting (Fig. \ref{fig:S1Hit}).
In both first and second settings, the number and positions of obstacles are fixed.
Therefore, the positions of obstacles can be learned by agents.
Agents then can avoid the obstacles to some extent.

\begin{figure}[ht]
\vspace{-5mm}
\centering
	\begin{minipage}{0.45\textwidth}
   \subfigure[\scriptsize{Average time steps}]{
    \includegraphics[width=0.45\textwidth, height=3cm]{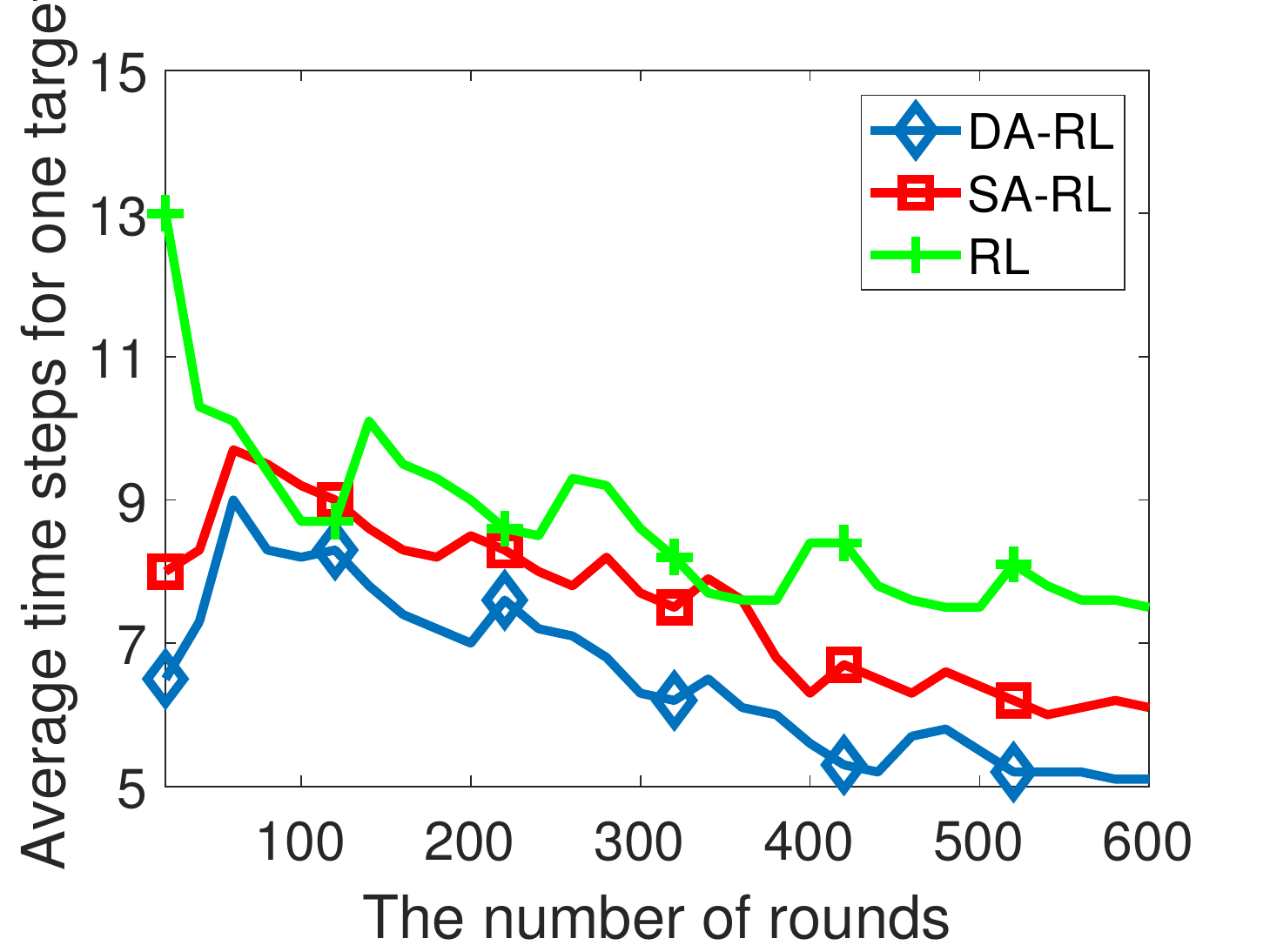}
			\label{fig:S2StepRuns}}
    \subfigure[\scriptsize{Average hits}]{
    \includegraphics[width=0.45\textwidth, height=3cm]{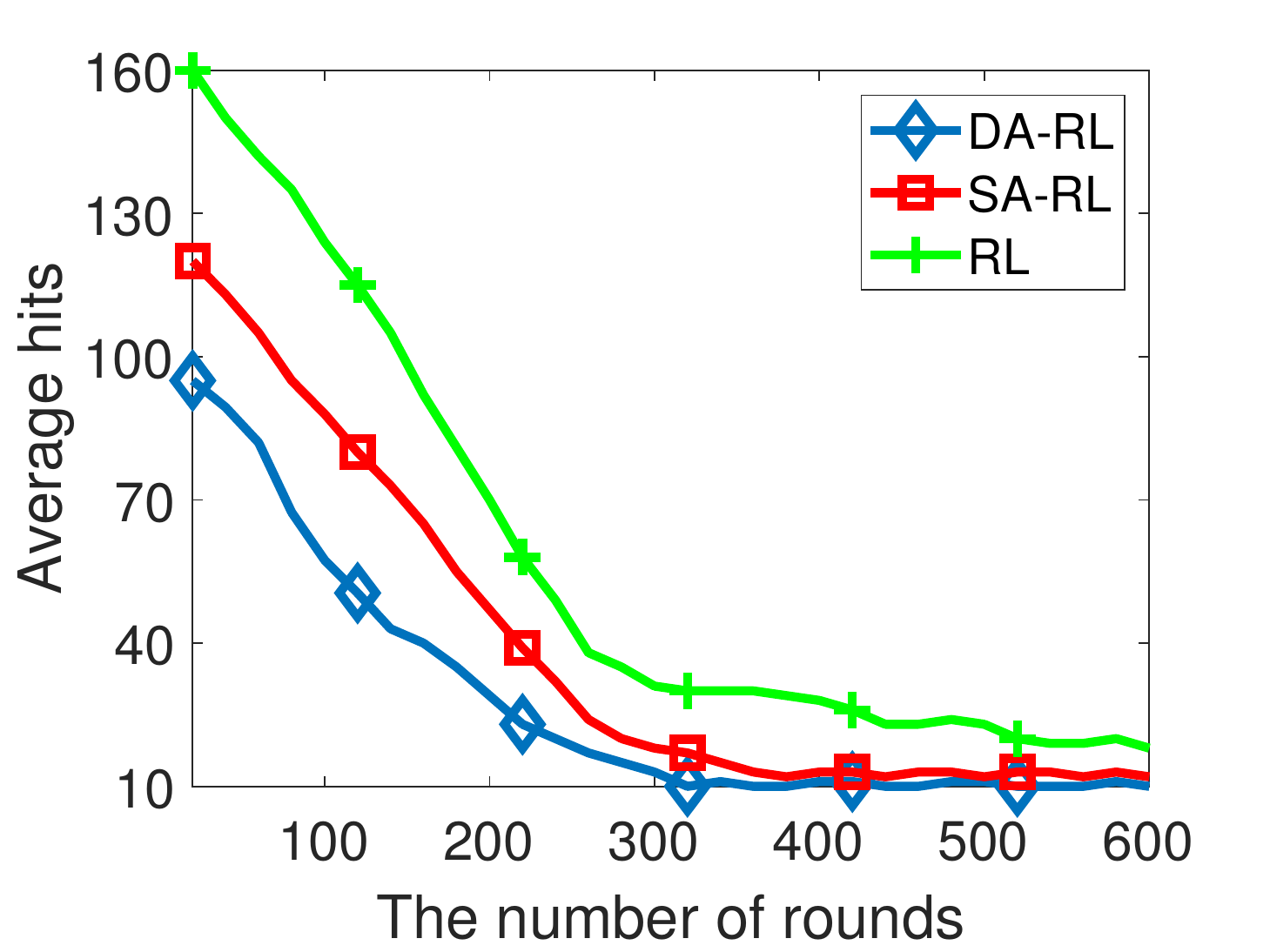}
			\label{fig:S2HitRuns}}\\[2ex]
%    \subfigure[\scriptsize{Communication overhead}]{
%    \includegraphics[width=0.32\textwidth, height=2.5cm]{figures//S2Communication.eps}
%			\label{fig:S2Communication}}
%    \subfigure[\scriptsize{Communication overhead of the malicious player}]{
%    \includegraphics[width=0.45\textwidth, height=3cm]{figures//S2CommMalicious.png}
%			\label{fig:S2CommMalicious}}
    \end{minipage}
		\vspace{-3mm}
	\caption{Performance of the three methods as time progresses in the second setting}
	\vspace{-1mm}
	\label{fig:S2Runs}
\end{figure}

Fig. \ref{fig:S2Runs} shows the performance of the three methods as time progresses in the second setting.
%where a round means agents achieve all the targets.
The size of the environment is set to $18\times 12$;
the number of agents is set to $4$;
the number of targets is set to $40$;
and the number of obstacles is set to $25$.
In Fig. \ref{fig:S2StepRuns}, as time progresses, the average number of time steps for one target in the three methods decreases in a fluctuant manner.
This is caused by the introduction of new targets.
The introduction of new targets implies that new knowledge is introduced in environments.
Thus, agents need time to learn the new knowledge,
which results in the increase of the number of time steps.
In Fig. \ref{fig:S2HitRuns}, as time progresses, the average number of hits gradually decreases in the three methods,
because the positions of obstacles can be learned and thus obstacles can be avoided.

%In Fig. \ref{fig:S2Communication}, communication overhead converges fluctuantly as time progresses.
%This is also caused by the introduction of new targets.
%As discussed in Remark 5, new targets mean new observations of players.
%As players are not confident in new observations,
%they will request knowledge from their neighboring players.
%This increases communication overhead.

\subsubsection{The third setting: agents achieve dynamically moving targets}
\begin{figure}[ht]
\centering
\vspace{-5mm}
	\begin{minipage}{0.45\textwidth}
   \subfigure[\scriptsize{Average time steps}]{
    \includegraphics[width=0.45\textwidth, height=3cm]{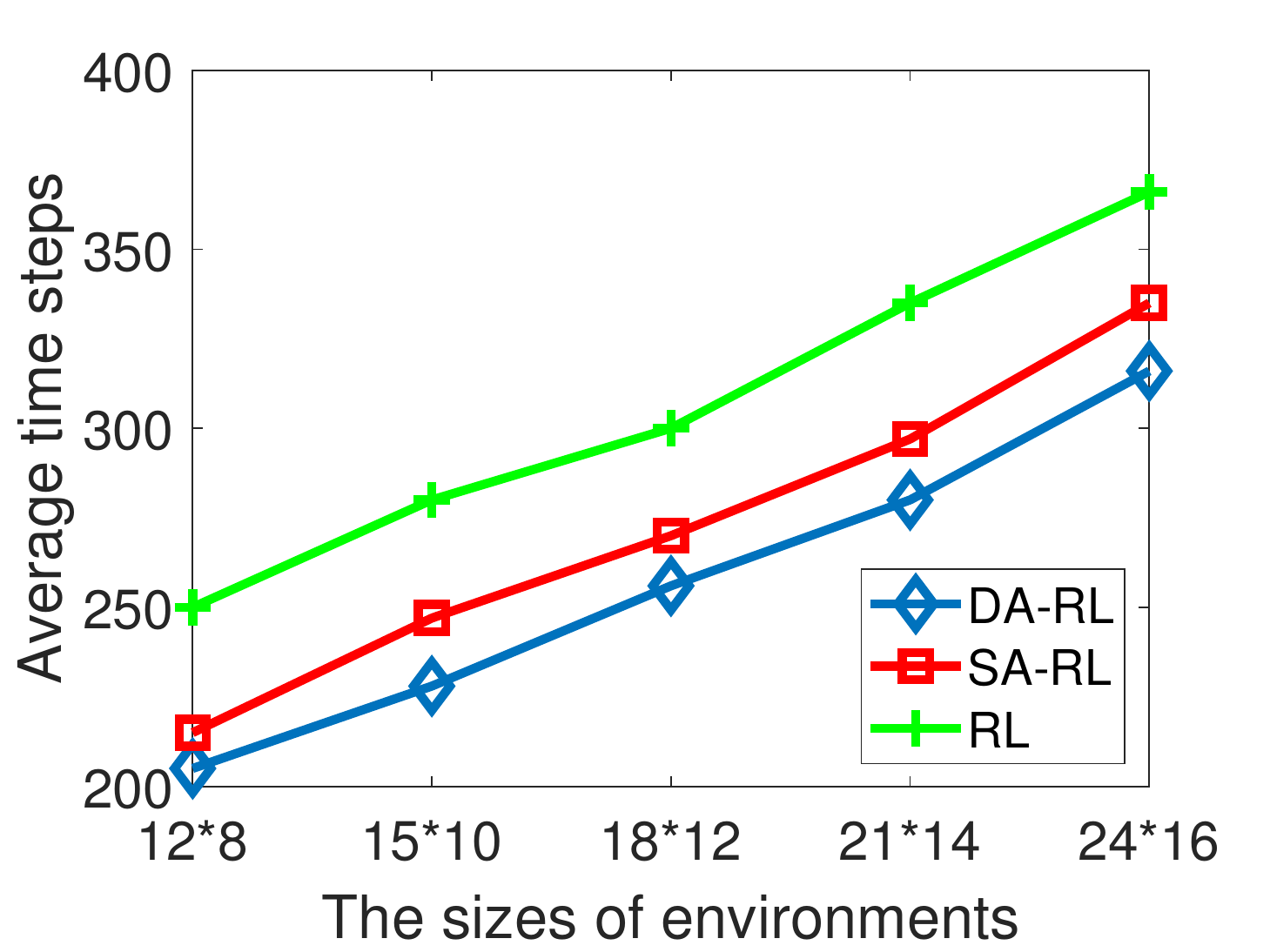}
			\label{fig:S3TimeStep}}
    \subfigure[\scriptsize{Average hits}]{
    \includegraphics[width=0.45\textwidth, height=3cm]{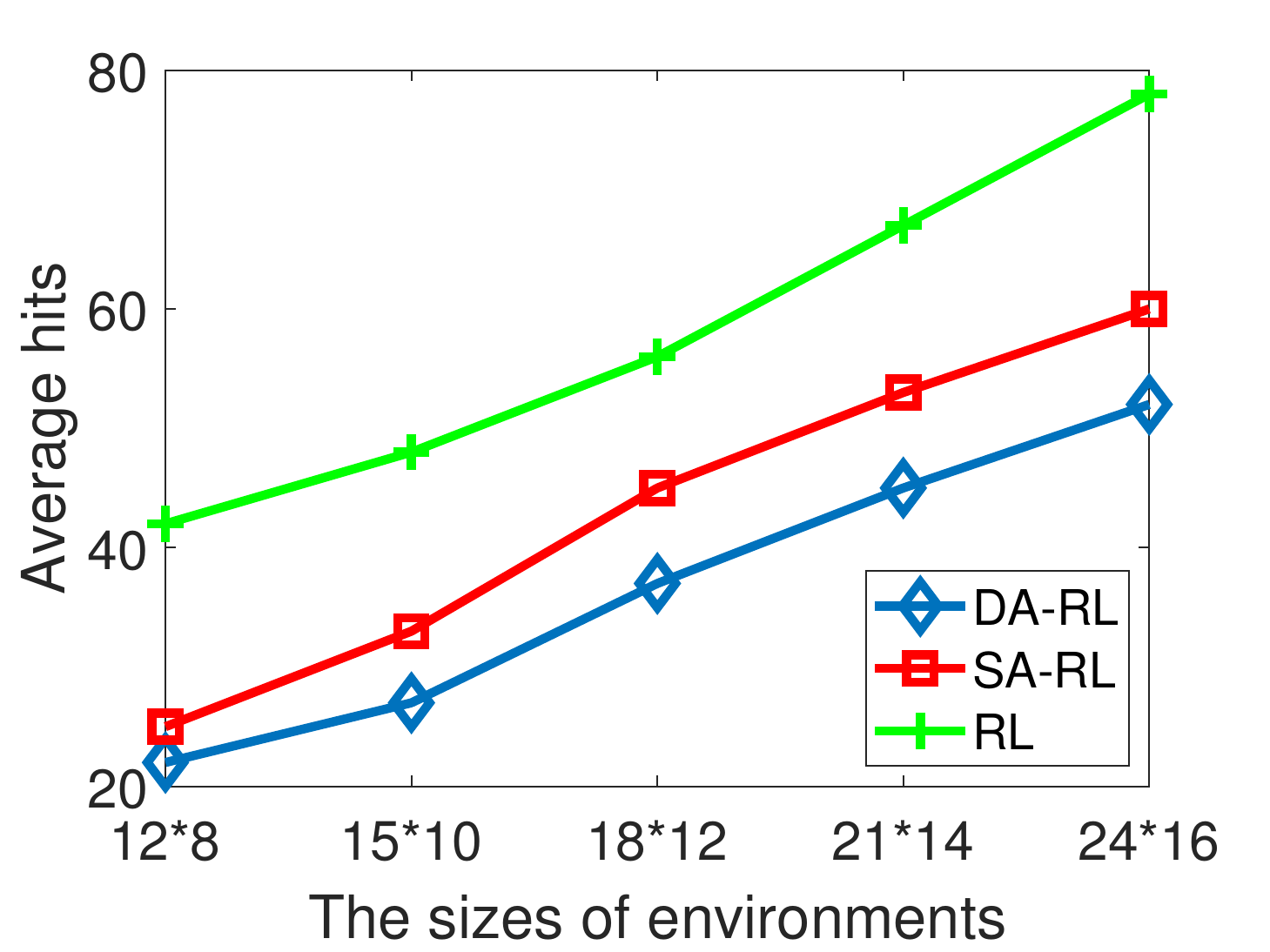}
			\label{fig:S3Hit}}\\[2ex]
    \end{minipage}
		\vspace{-3mm}
	\caption{Performance of the three methods in different sizes of environments in the third setting}
	\vspace{-1mm}
	\label{fig:S3Sizes}
\end{figure}

Fig. \ref{fig:S3Sizes} demonstrates the performance of the three methods in different sizes of environments in the third scenario.
Again, the proposed \emph{DA-RL} method outperforms the other two methods.
Comparing Fig. \ref{fig:S3Sizes} and Fig. \ref{fig:S2Sizes},
the performance trend of the three methods in the third setting is similar to the second setting.
This is because both the second and third settings are dynamic,
which means that agents need more time steps to achieve targets than in the first setting.

\begin{figure}[ht]
\centering
\vspace{-5mm}
	\begin{minipage}{0.45\textwidth}
   \subfigure[\scriptsize{Average time steps}]{
    \includegraphics[width=0.45\textwidth, height=3cm]{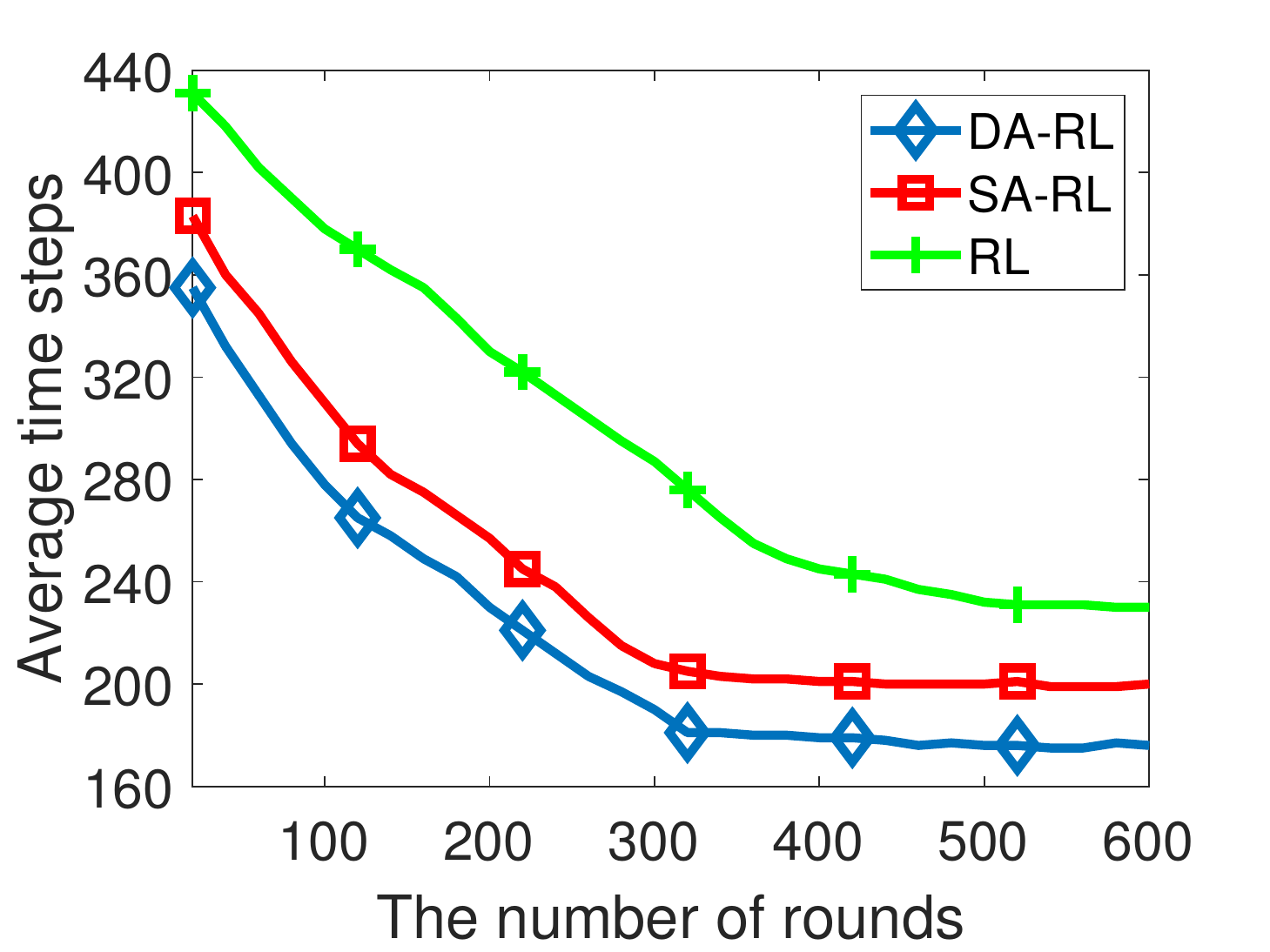}
			\label{fig:S3StepRuns}}
    \subfigure[\scriptsize{Average hits}]{
    \includegraphics[width=0.45\textwidth, height=3cm]{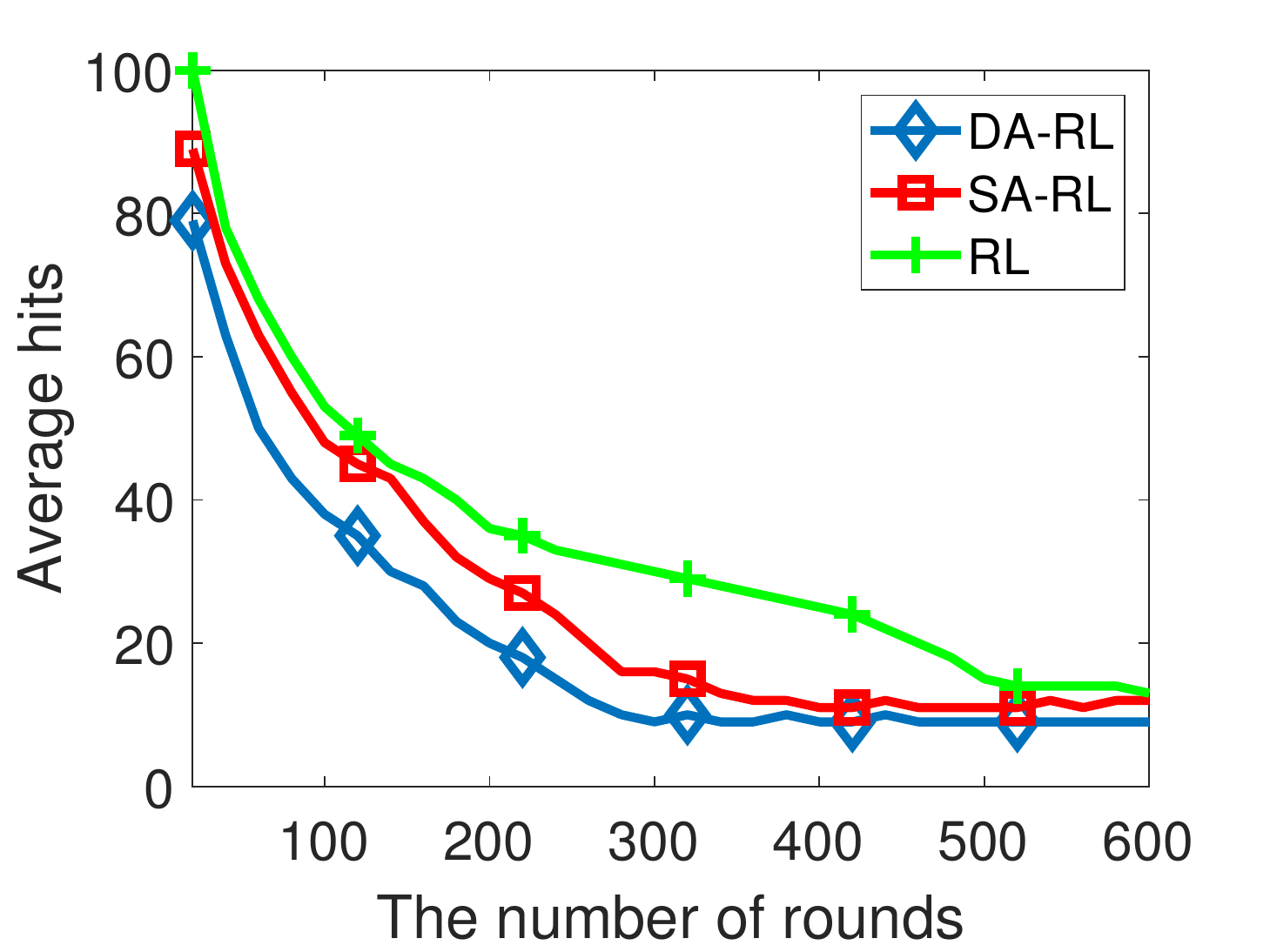}
			\label{fig:S3HitRuns}}\\[2ex]
%    \subfigure[\scriptsize{Communication overhead}]{
%    \includegraphics[width=0.32\textwidth, height=2.5cm]{figures//S3Communication.eps}
%			\label{fig:S3Communication}}
%    \subfigure[\scriptsize{Communication overhead of the malicious player}]{
%    \includegraphics[width=0.45\textwidth, height=3cm]{figures//S3CommMalicious.png}
%			\label{fig:S3CommMalicious}}
    \end{minipage}
		\vspace{-3mm}
	\caption{Performance of the three methods as time progresses in the third setting}
	\vspace{-1mm}
	\label{fig:S3Runs}
\end{figure}

Fig. \ref{fig:S3Runs} shows the performance of the three methods as time progresses in the third setting.
%where a round means agents achieve all the targets.
The size of the environment is set to $18\times 12$;
the number of agents is set to $4$;
the number of targets is set to $40$;
and the number of obstacles is set to $25$.
Unlike the second scenario (Fig. \ref{fig:S2StepRuns}), there is no fluctuation in the three methods in the third setting (Fig. \ref{fig:S3StepRuns}).
This is because in the third setting, although targets are moving, the number of targets does not increase.
For example, there are $3$ targets in an environment.
In the third setting, the experiment finishes when all the $3$ targets are achieved.
In the second setting, however, there may be new targets introduced.
The experiment will not finish until all the $3$ and the new targets are achieved.
If new targets are generated constantly,
the experiment is hard to finish.
This means that in the second setting,
the completion of an experiment heavily depends on the generation probability of new targets.
Hence, there is more fluctuation in the second setting than in the third setting.
%This reason also applies to communication overhead,
%namely less fluctuation on communication overhead in the third scenario (Fig. \ref{fig:S3Communication}) than in the second scenario (Fig. \ref{fig:S2Communication}).

\subsection{Experimental results of scenario 2}
\begin{figure}[ht]
\centering
\vspace{-5mm}
	\begin{minipage}{0.45\textwidth}
   \subfigure[\scriptsize{Average reward with different number of agents}]{
    \includegraphics[width=0.45\textwidth, height=3cm]{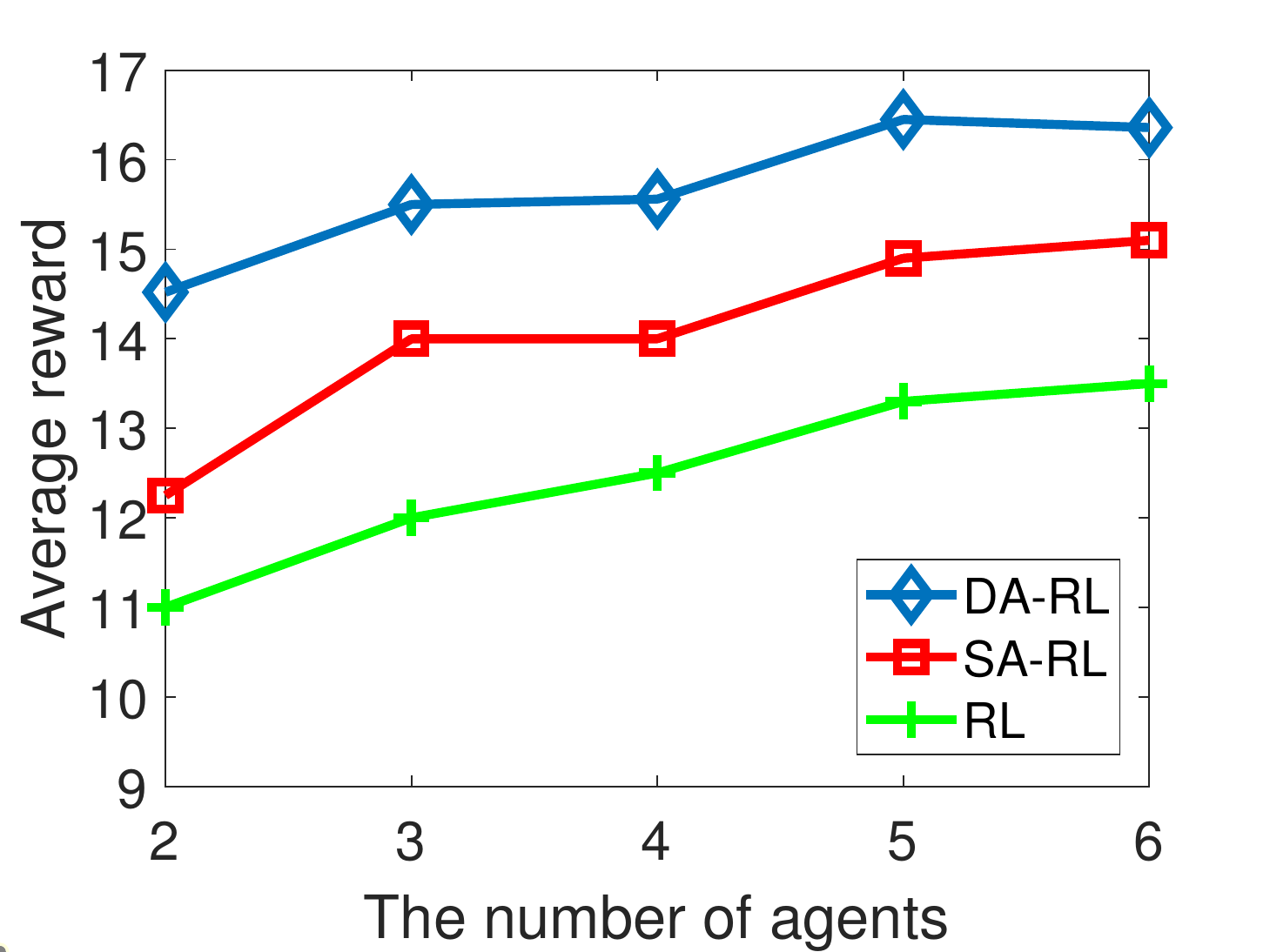}
			\label{fig:S4Scale}}
    \subfigure[\scriptsize{Average reward with different number of item types}]{
    \includegraphics[width=0.45\textwidth, height=3cm]{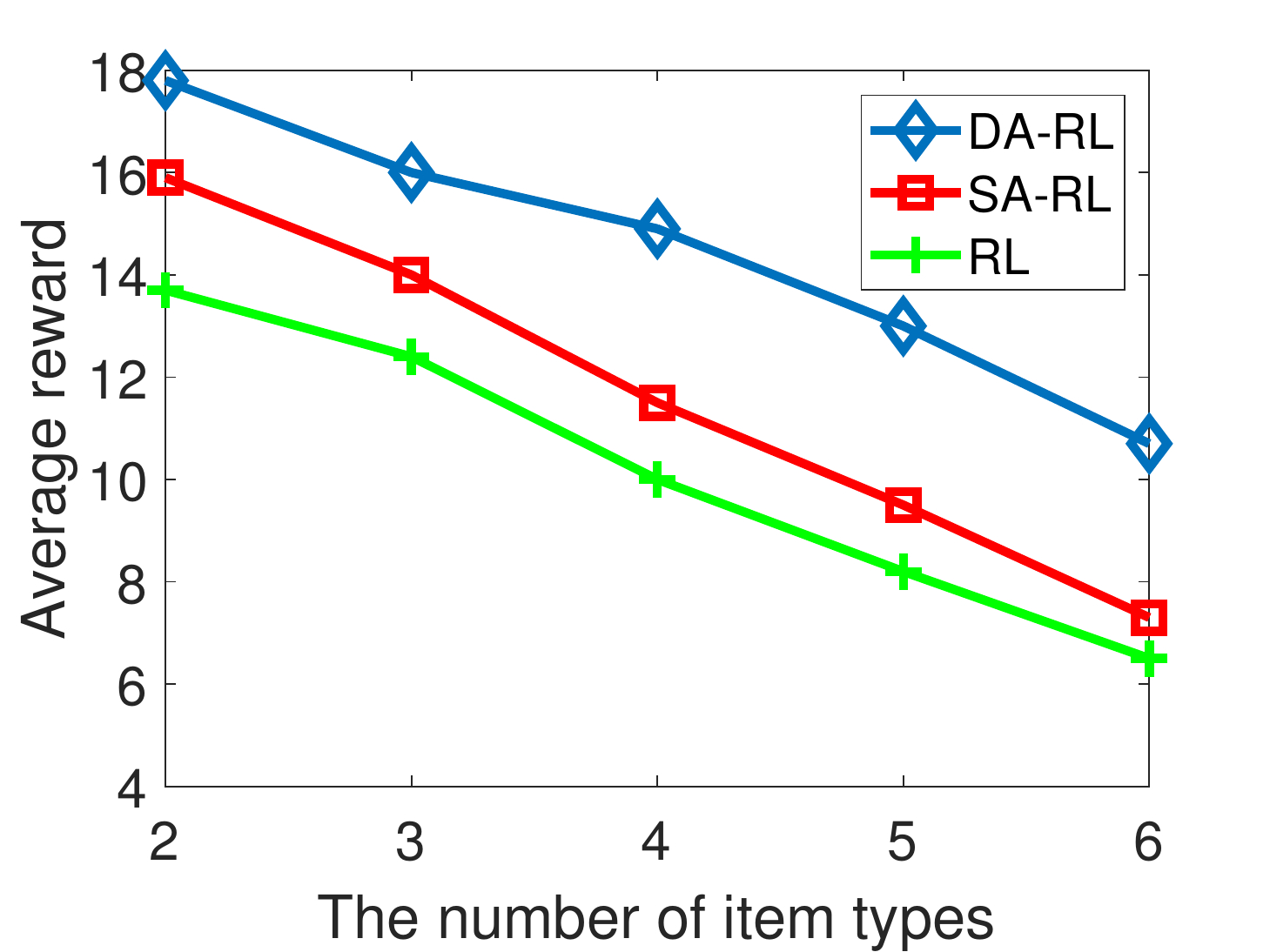}
			\label{fig:S4Item}}\\[2ex]
   \subfigure[\scriptsize{Average reward with different size of stocks}]{
    \includegraphics[width=0.45\textwidth, height=3cm]{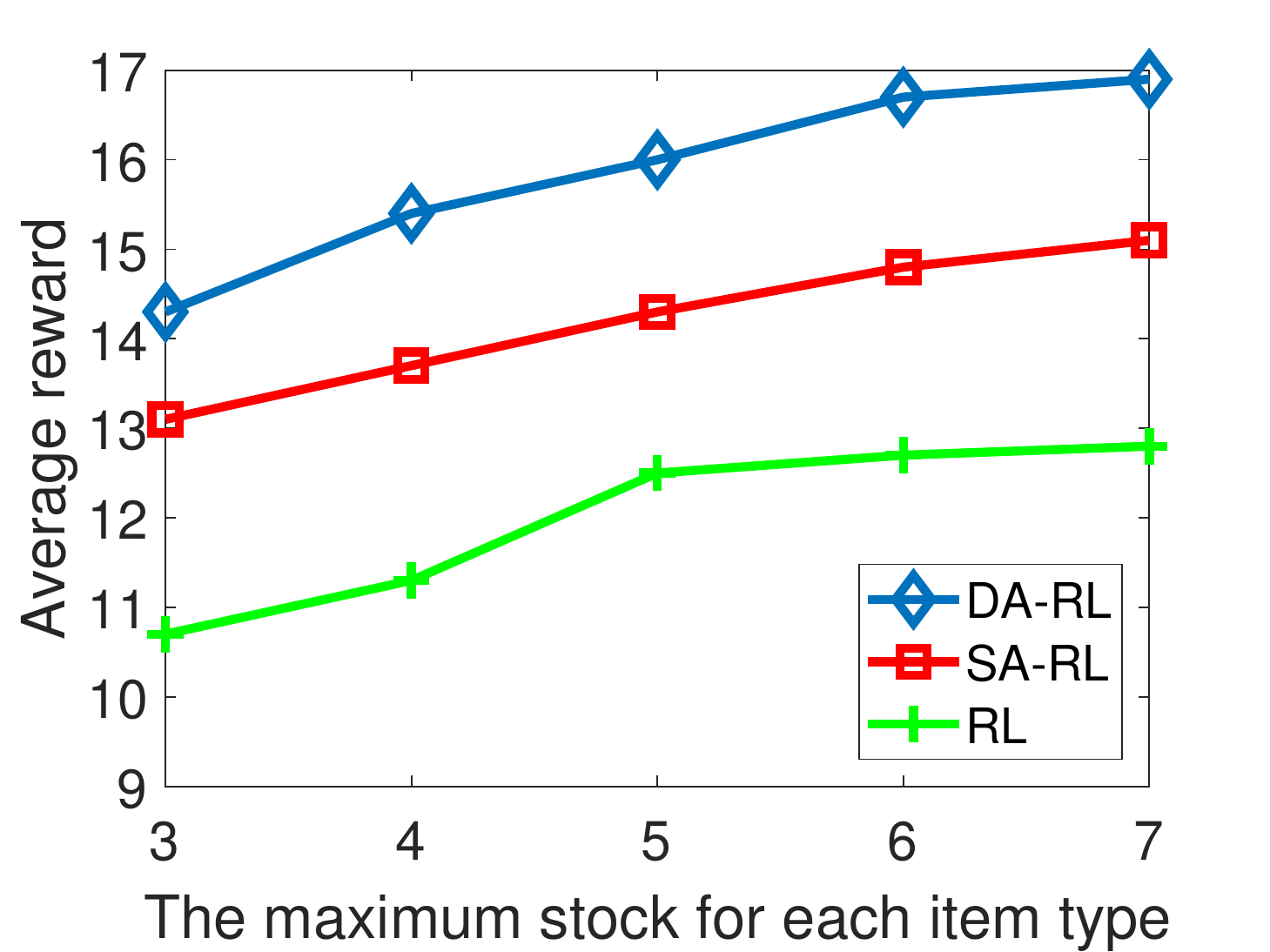}
			\label{fig:S4Stock}}
    \subfigure[\scriptsize{Average reward with different processing probabilities}]{
    \includegraphics[width=0.45\textwidth, height=3cm]{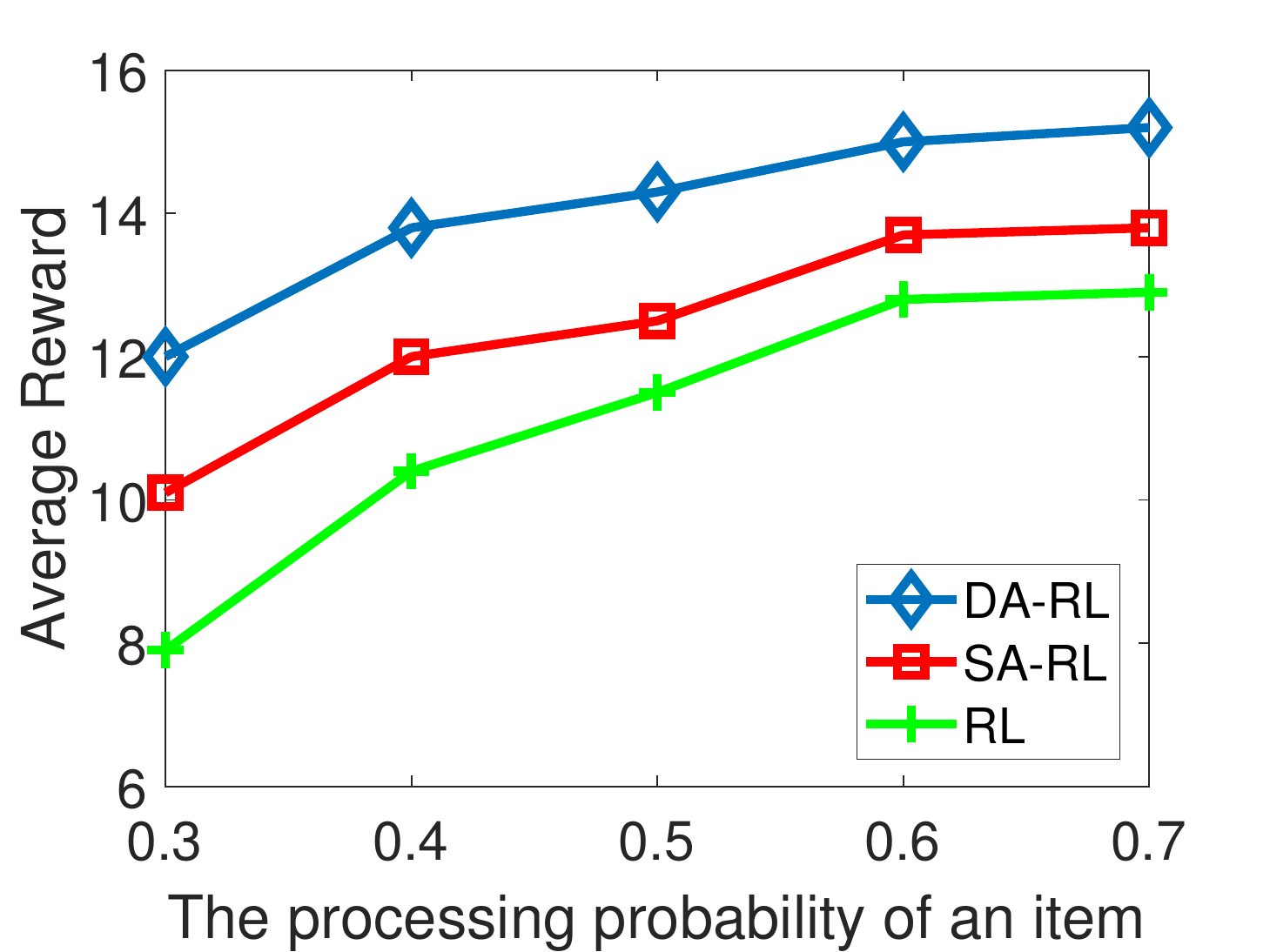}
			\label{fig:S4Process}}\\[2ex]
    \end{minipage}
		\vspace{-3mm}
	\caption{Performance of the three methods}
	\vspace{-1mm}
	\label{fig:S4}
\end{figure}
Fig. \ref{fig:S4Scale} demonstrates the performance of the three methods with different number of agents, 
where the number of item types is set to $3$, the maximum stock of each item type is set to $5$, 
and the probability of processing an item is set to $0.5$. 
In Fig. \ref{fig:S4Scale}, with the increase of the number of agents, 
the average reward of agents in the three methods also rises.
When the number of agents increases, 
each agent can ask for advice from more agents. 
Thus, agents can learn faster and receive more reward. 
However, it should also be noted that 
when the number of agents is larger than $5$, 
the average reward keeps almost steady. 
This can be explained by the fact that in a given environment, 
the amount of knowledge is limited. 
Agents can share knowledge but cannot create new knowledge. 
When the number of agents is large enough to guarantee a good learning speed, 
increasing the number of agents does not improve agents' learning speed or increase agents' reward.

Fig. \ref{fig:S4Item} shows the performance of the three methods with different number of item types, 
where the number of agents is set to $4$, the maximum stock of each item type is set to $5$, 
and the probability of processing an item is set to $0.5$. 
In Fig. \ref{fig:S4Item}, with the increase of the number of item types, 
the average reward of agents in the three methods reduces gradually. 
When the number of item types increases, 
the dimension of states of each agent also increases. 
This decreases the learning speed of agents, 
which reduces the average reward of agents 
as agents have more opportunity to make non-optimal decisions. 

Fig. \ref{fig:S4Stock} displays the performance of the three methods with different maximum stock, 
where the number of agents is set to $4$, the number of item types is set to $3$, 
and the probability of processing an item is set to $0.5$. 
In Fig. \ref{fig:S4Stock}, with the increase of the maximum stock, 
the average reward of agents in the three methods increases gracefully. 
When the stock capacity of agents increases, 
agents prefer to store new items instead of redistributing them. 
This preference will augment agents' average reward, 
because, based on the experimental setting, storing an item can incur more reward than redistributing it.

Fig. \ref{fig:S4Process} exhibits the performance of the three methods with different probabilities of processing an item, 
where the number of agents is set to $4$, the number of item types is set to $4$, 
and the maximum stock of each item type is set to $5$. 
In Fig. \ref{fig:S4Process}, with the increase of processing probability, 
the average reward of agents in the three methods rises. 
When the processing probability increases, 
items are processed faster and hence, the stock amount of each agent reduces. 
This improves the average reward of agents, 
as the reward is based on stock amount in the experimental setting. 
However, it should also be noted that 
when the processing probability is too large, e.g., larger than $0.6$, 
the difference between the three methods reduces. 
As mentioned before, a large processing probability implies a small stock amount. 
Since the state of an agent is based on the stock amount, i.e., the number of items of each item type, 
small stock amount means a small number of states. 
Because each agent observes only a small number of states, 
after a short learning process, each agent is confident in these states 
and does not ask for advice any more. 
If agents do not ask for advice, 
the \emph{DA-RL} and \emph{SA-RL} methods are the same as \emph{RL} method.
Hence, they have similar results.

In the above four situations, the proposed \emph{DA-RL} method outperforms the \emph{SA-RL} and \emph{RL} methods, 
though the performance of the three methods has a similar trend. 
According to the experimental results, the advantage of differential advising has been empirically proven.

\begin{figure}[ht]
\centering
\vspace{-5mm}
	\begin{minipage}{0.45\textwidth}
   \subfigure[\scriptsize{Average reward as learning progresses in a simple environment}]{
    \includegraphics[width=0.45\textwidth, height=3cm]{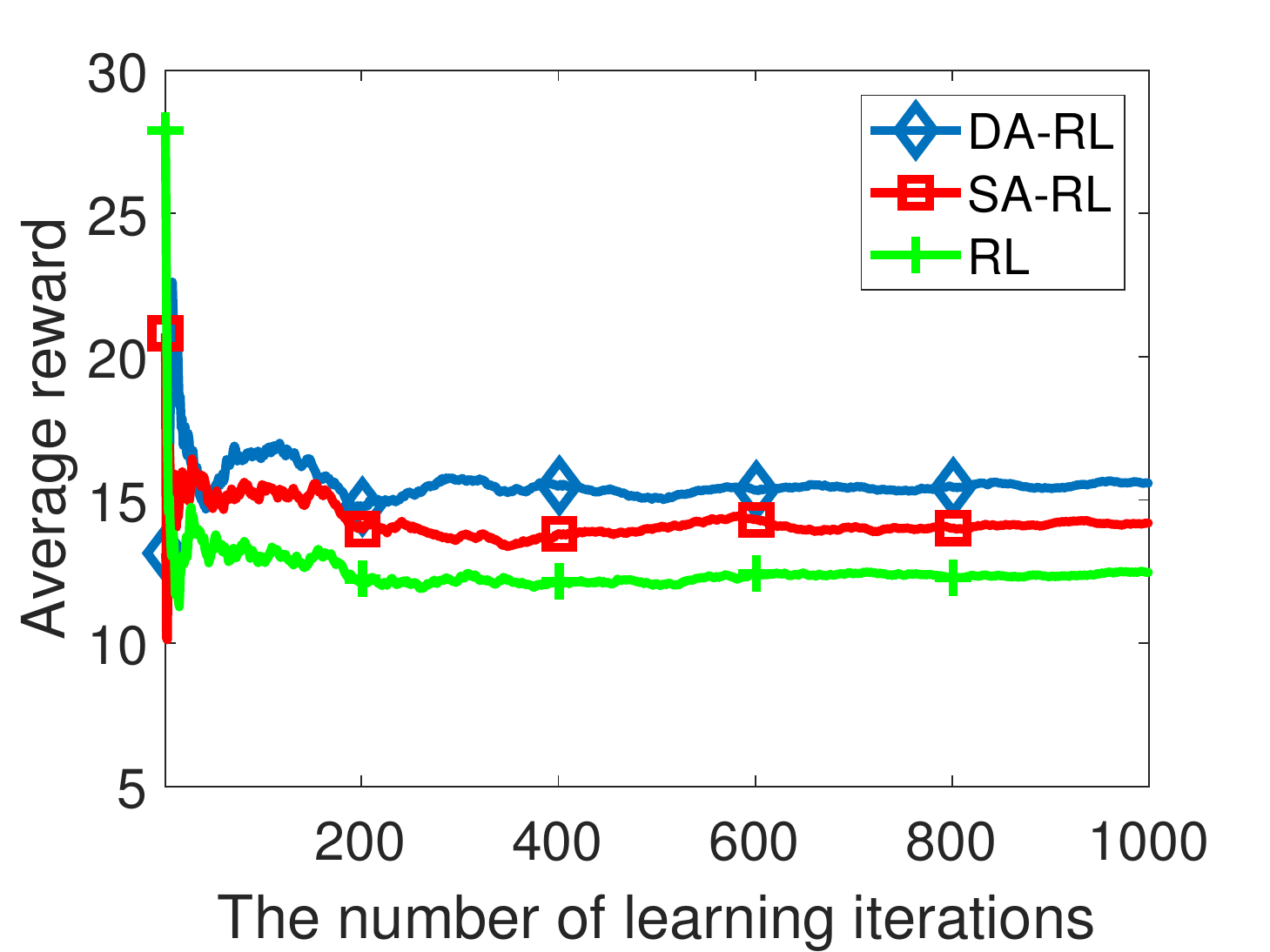}
			\label{fig:S4Runs1}}
    \subfigure[\scriptsize{Average reward as learning progresses in a complex environment}]{
    \includegraphics[width=0.45\textwidth, height=3cm]{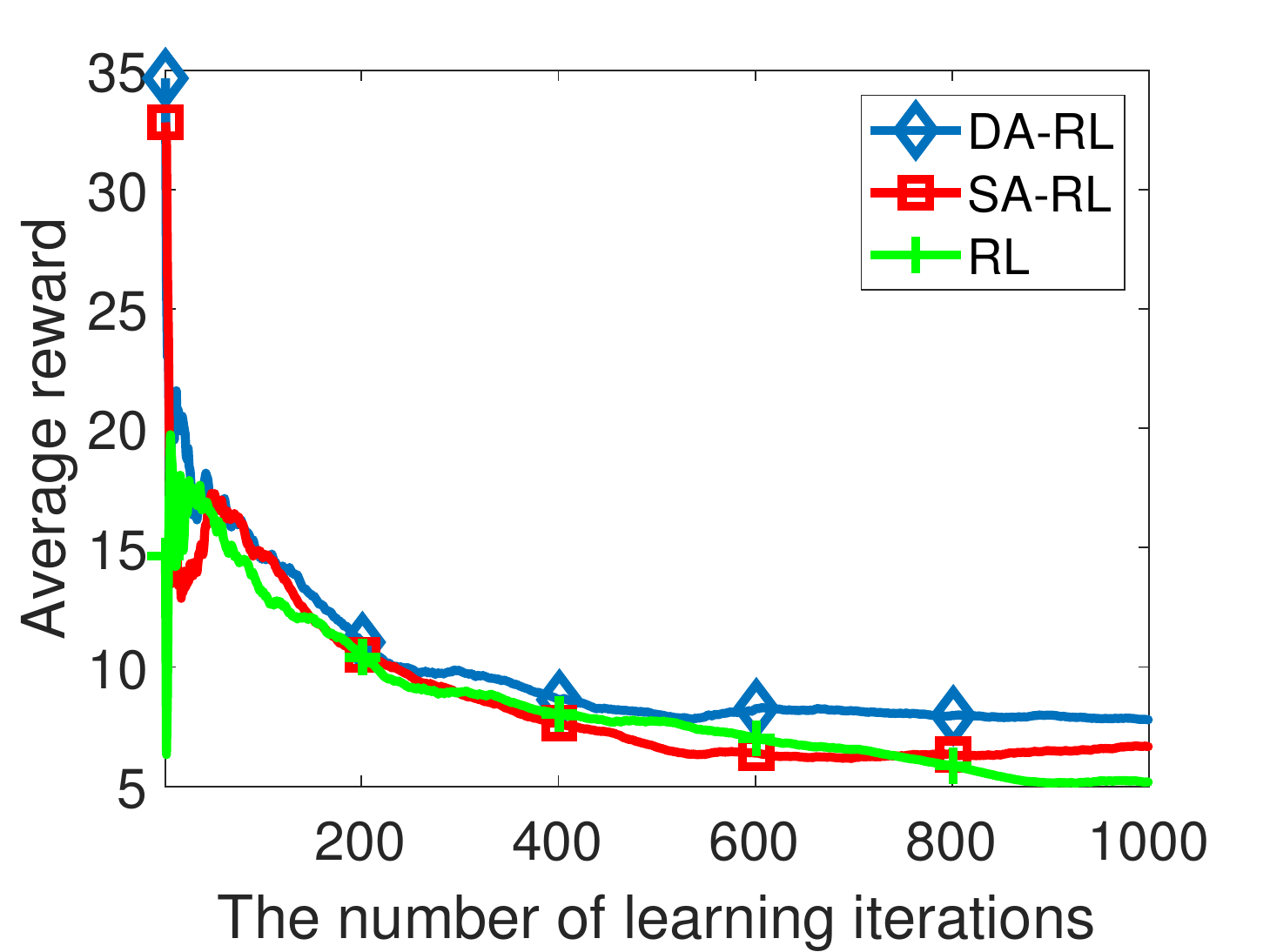}
			\label{fig:S4Runs2}}\\[2ex]
    \end{minipage}
		\vspace{-3mm}
	\caption{Performance of the three methods as learning progresses}
	\vspace{-1mm}
	\label{fig:S4Runs}
\end{figure}
Fig. \ref{fig:S4Runs} demonstrates the performance of the three methods as learning progresses. 
Fig. \ref{fig:S4Runs1} shows the results in a simple environment. 
The number of agents is set to $3$, the number of item types is set to $2$, 
the maximum stock of each item type is set to $3$, 
and the probability of processing an item is set to $0.5$. 
Fig. \ref{fig:S4Runs2} shows the results in a complex environment. 
The number of agents is set to $3$, the number of item types is set to $6$,
the maximum stock of each item type is set to $7$,
and the probability of processing an item is set to $0.5$. 

In Fig. \ref{fig:S4Runs1}, the three methods converge in very early stages (around $200$ learning iterations). 
In comparison, in Fig. \ref{fig:S4Runs2}, the three methods converge much slower, 
where \emph{DA-RL} converges at around $500$ learning iterations, \emph{SA-RL} converges at around $600$ iterations, 
and \emph{RL} converges at around $900$ iterations. 
This can be explained by the fact that in the complex environment, 
the number of states is much more than the number of states in the simple environment.
As more states typically imply more knowledge, 
agents have to take more learning iterations to learn the knowledge. 
Therefore, the three methods converge slower in the complex environment than in the simple environment. 
However, the proposed \emph{DA-RL} method in the complex environment converges faster than the \emph{SA-RL} and \emph{RL} methods. 
This result demonstrates that by using differential advising, 
agents in the proposed \emph{DA-RL} method learn faster than the other two methods.

\vspace{-1mm}
\subsection{Discussion and summary}
\subsubsection{The metric used in the second setting of the first scenario}
In the second setting of the first scenario (Fig. \ref{fig:S2StepRuns}), we use the metric, \emph{average number of time steps for one target},
instead of the metric, \emph{average number of time steps for all targets},
used in the first and third scenarios.

The average number of time steps for all targets in the second scenario does not converge as time progresses in the three methods.
This is because agents in the three methods have learning ability.
The aim of learning is to maximize agents' long-term accumulated rewards.
In the experiments, an agent can receive a positive reward only when the agent achieves a target.
In the second setting, new targets are introduced dynamically.
This motivates agents to stay in the experiment to keep achieving targets so as to increase their accumulated rewards.
Therefore, the average number of time steps for achieving all the targets in the second scenario increases gradually.
However, this does not mean that the three methods do not converge in other aspects in the second setting.
Although the number of time steps in the three methods increases,
the number of achieved targets also increases.
Since the average number of time steps for one target is
the ratio between the total number of time steps and the total number of achieved targets,
the increased number of achieved targets can offset the increased number of time steps.
Thus, the average number of time steps for one target still converges (Fig. \ref{fig:S2StepRuns}).

%In the second setting, to reduce the average number of time steps for all targets,
%we can reduce the reward for achieving one target,
%or set a negative reward for an agent moving a step.
%Both ways are trying to reduce the motivation of agents to stay in the experiment.
%However, both ways will increase the average number of hits to obstacles.
%For example, if the reward for an agent moving a step is set to $-5$,
%all the empty cells have the same reward as the cells with obstacles.
%Thus, for each agent, moving a step is the same as hitting an obstacle.
%This means that agents do not have the motivation to avoid obstacles.

%\subsubsection{Dynamism of obstacles}
%In the current experiments, the number and positions of obstacles are fixed.
%Obstacles can be considered as negative-reward targets,
%which have to be avoided by agents.
%We have taken the dynamism of targets into account in both the second and third settings.
%Hence, the performance of the three methods in both the second and third scenarios can reflect their performance in an environment with dynamic obstacles.

\subsubsection{Performance of the three methods}
According to the experimental results in the first scenario, the average time steps used in the \emph{DA-RL} method is fewer than the other two methods in all the three settings:
about $15\%\sim 20\%$ and $40\%$ fewer than the \emph{SA-RL} and \emph{RL} methods, respectively.
Moreover, the \emph{DA-RL} converges about $10\%$ and $20\%$ faster than the \emph{SA-RL} and \emph{RL} methods, respectively.
In the second scenario, the average reward of agents in the \emph{DA-RL} method is about $10\%\sim 15\%$ and $20\%\sim 30\%$ 
more than the \emph{SA-RL} and \emph{RL} methods, respectively.
In addition, the \emph{DA-RL} method converges about $15\%$ and $40\%$ faster than the \emph{SA-RL} and \emph{RL} methods, respectively.
%The communication overhead in \emph{DP-KT} method is about $7\%$ less than \emph{RL-KT} method.
In summary, the proposed \emph{DA-RL} method outperforms the other two methods in both scenarios.

%This paper studies knowledge transfer for reinforcement learning in partially observable environments.
%We handle related knowledge and communication control through a different but effective way by taking the advantage of differential privacy.
\vspace{-2mm}
\section{Conclusion and Future Work}\label{sec:conclusion}
This paper proposed a differential advising method for multi-agent reinforcement learning.
This method is the first to allow agents in one state to use advice created in another different state.
This method is also the first to take advantage of the differential privacy technique for agent advising to improving learning performance instead of preserving privacy.
Experimental results demonstrate that our method outperforms other benchmark methods in various aspects.

%The future work involves three parts. 
In this paper, we have an assumption that in two similar states, 
the applicability and adequacy of actions to take are similar. 
This assumption, however, are not applicable in some situations. 
For example, the good actions to take in a rainy day can become the bad actions to take in a sunny day. 
%even given that the two days are identical in all other aspects. 
In the future, we will relax this assumption by assigning weights to dimensions of states. 
%A possible way is to assign weights to dimensions of states. 
%Then, the difference between two states can be defined as $||\boldsymbol{s}-\boldsymbol{s}'||_1=\sum_{1\leq i\leq m}w_i\cdot|s_i-s'_i|$. 
%The weight of each dimension can reflect the importance of that dimension in affecting reward functions of agents. 
%A large weight of a dimension implies that a slight change in that dimension in a state results in significant change in reward functions.
%In the above example, the weather in a day should be given the largest weight. 
%If two days have different weather, 
%the actions taken in one day should not be performed in the other day, 
%even though all the other aspects of the two days are identical. 
Moreover, %although the definitions in this paper have accommodated continuous states, 
our method mainly focuses on discrete states. 
It is interesting to extend our method to continuous environments.
An intuitive way for the extension is to discretize continuous states to discrete states. 
%However, the number of states may become very large.
%It is also an interesting future research to conduct. 
Finally, extending our method to multi-agent deep reinforcement learning is also interesting future work. 
A potential way is to allow agents to transfer experience samples as advice. 
%A typical experience sample in MADRL is of the form: $\langle s_t,a_t,r_t,s_{t+1}\rangle$. 
%When an advisee agent has a concern in state $s_t$, 
%by using our differential advising method, 
%an adviser agent can transfer not only the samples about $s_t$ 
%but also the samples about the neighboring states of $s_t$. 
However, if the number of transferred samples is large, %communication overhead is an issue. 
%Therefore, the adviser agent has to trade off between the number of samples and the communication overhead 
%by transferring only ``high-quality'' samples. 
%Therefore, the adviser agent has to transfer only ``high-quality'' samples to reduce communication overhead.
how to select high-quality samples is a challenge to the adviser agent.

%Moreover, in this paper, agents are homogeneous,
%where all the agents have the same targets and use the same algorithms.
%In the future, we will extend our work to accommodate heterogeneous agents,
%where agents may have different targets and use different algorithms.
%Moreover, in this paper, the difference between two neighboring states is delimited to $1$.
%In the future, we are also interested in whether this delimitation can be relaxed by adding more differentially private noise. 

%ACKNOWLEDGMENTS are optional
%\vspace{-2mm}
%\section*{Acknowledgments}
%\vspace{-1mm}
%This work is supported by an ARC Linkage Project (LP170100123) from Australian Research Council, Australia.

\bibliographystyle{IEEEtran}
{\small \bibliography{references}}
\end{document}